\documentclass[a4paper,UKenglish,numberwithinsect]{lipics}
\usepackage[textsize=footnotesize]{todonotes}
\usepackage[draft]{fixme}
\usepackage{paralist}
\usepackage{microtype}



\usepackage{ifthen}
\newboolean{full}
\setboolean{full}{true}

\usepackage{amsmath, amsthm, amssymb, bbm}
\usepackage{dsfont}
\usepackage{xspace}
\usepackage{mathabx}
\usepackage{tikz}
\usetikzlibrary{shapes,snakes}

\newcommand*\cnum[1]{{\otimes #1}}
\newcommand*\ecnum[1]{{\otimes(#1)}}

\csname@@input\endcsname xypic \xyoption{arrow} \xyoption{tips}
\SelectTips{cm}{10} \UseTips
\newdir{ >}{{}*!/-5pt/\dir{>}}

\DeclareFontFamily{U}{cmr}{} \DeclareFontShape{U}{cmr}{a}{n}{<5>
cmb5<7>cmb7<10>cmb10}{}

\DeclareFontFamily{U}{msa}{} \DeclareFontShape{U}{msa}{m}{n}{<-6
>msam5<6-8.5>msam7<8.5->msam10}{} \DeclareSymbolFont{latex-font
msa}{U}{msa}{m}{n} \DeclareFontFamily{U}{msb}{}
\DeclareFontShape{U}{msb}{m}{n}{<-6>msbm5<6-8.5>msbm7<8.5->msbm1
0}{} \DeclareSymbolFont{latex-font msb}{U}{msb}{m}{n}

\DeclareMathAlphabet{\mathbt}{OT1}{cmr}{a}{n}

\theoremstyle{plain}
\newtheorem{proposition}[theorem]{Proposition}

\theoremstyle{definition}
\newtheorem{assumptions}[theorem]{Assumptions}
\newtheorem{notation}[theorem]{Notation}
\newtheorem{construction}[theorem]{Construction}
\newtheorem{rem}[theorem]{Remark}

\hyphenation{bool-ean}

\DeclareMathSymbol\emptyset\mathord{latex-font msb}{"3F}
\DeclareMathSymbol\onto\mathrel{latex-font msa}{"10}
\DeclareMathSymbol\square\mathrel{latex-font msa}{"03}
\DeclareMathAccent{\bigwidehat}{\mathord}{latex-font msb}{"5B}

%
%
\def\D{\DCat}
\def\C{\Cat}

\def\refeq#1{(\ref{#1})}
\def\epsilon{\varepsilon}

\def\At{\mathsf{At}}

\renewcommand{\rho}{\varrho}
\def\ol{\overline}

\def\wt{\widetilde}

\def\o{\cdot}

\def\DMon{\mathbf{Mon}(\DCat)}
\newcommand{\cst}{{\scriptscriptstyle\oasterisk}}
\def\S{\mathds{S}}


\renewcommand{\t}{\otimes}
\newcommand{\takeout}[1]{\empty}

\renewcommand{\phi}{\varphi}
\newcommand{\ra}{\rightarrow}
\newcommand{\xra}{\xrightarrow}

\newcommand{\Ra}{\Rightarrow}

\newcommand{\Lra}{\Leftrightarrow}
\newcommand{\seq}{\subseteq}

\newcommand{\tl}{\widetilde}

\newcommand{\T}[1]{\mathsf{T}(#1)}

\newcommand{\Min}[1]{\mathsf{Min}(#1)}

\newcommand{\Ab}{\mathbf{Ab}}
\newcommand{\Inv}{\mathbf{Inv}}

\newcommand{\Set}{\mathbf{Set}}

\newcommand{\BA}{\mathbf{BA}}
\newcommand{\BR}{\mathbf{BR}}

\newcommand{\Cat}{\mathscr{C}}
\newcommand{\ev}{\mathsf{ev}}
\newcommand{\DCat}{\mathscr{D}}

\newcommand{\id}{\mathsf{id}}
\newcommand{\Id}{\mathsf{Id}}

\newcommand{\Mon}[1]{\mathbf{Mon}(#1)}

\newcommand{\Vect}[1]{\mathbf{Vec}(#1)}
\newcommand{\SMod}[1]{\mathbf{Mod}(#1)}

\newcommand{\JSL}{{\mathbf{JSL}_0}}

\newcommand{\PSet}{\mathbf{Set}_\bot}

\newcommand{\under}[1]{|#1|}

\newcommand{\epito}{\twoheadrightarrow}

\newcommand{\monoto}{\rightarrowtail}

\newcommand{\rev}{\mathsf{rev}}

\newcommand{\Pow}{\mathcal{P}}

\newcommand{\Syn}[1]{\mathsf{Syn}(#1)}

\newcommand{\K}{\mathds{K}}

\newcommand{\Int}{\mathds{Z}}

\newcommand{\Reg}{\mathsf{Reg}}

\newcommand{\fc}{[X^\cst,Y]}
\newcommand{\xs}{X^\cst}


\title{Syntactic Monoids in a Category}
\titlerunning{Syntactic Monoids in a Category}

\author[1]{Ji\v{r}\'{i} Ad\'{a}mek}
\author[2]{Stefan~Milius\footnote{Stefan Milius acknowledges support by the Deutsche Forschungsgemeinschaft (DFG) under project MI~717/5-1}$^,$}
\author[1]{Henning Urbat}
\affil[1]{Institut f\"{u}r Theoretische Informatik \\
  Technische Universit\"{a}t Braunschweig, Germany}
\affil[2]{Lehrstuhl f\"{u}r Theoretische Informatik\\
  Friedrich-Alexander Universit\"{a}t Erlangen-N\"{u}rnberg, Germany}
\authorrunning{J.~Ad\'{a}mek, S.~Milius, H.~Urbat}
\Copyright{Ji\v{r}\'{i} Ad\'{a}mek and Stefan Milius and Henning Urbat}

\subjclass{F.4.3 Formal Languages}

\keywords{Syntactic monoid, transition monoid, algebraic automata theory, duality, coalgebra, algebra, symmetric monoidal closed category, commutative variety}

\begin{document}
%
%
\FXRegisterAuthor{sm}{asm}{SM}
\FXRegisterAuthor{hu}{ahu}{HU}
\FXRegisterAuthor{ja}{aja}{JA}

\maketitle
\begin{abstract}
The syntactic monoid of a language is generalized to the level of a symmetric monoidal closed category $\D$. This allows for a uniform treatment of several notions of syntactic algebras known in the literature, including the syntactic monoids of Rabin and Scott ($\D=$ sets), the syntactic semirings of Pol\'ak ($\D=$ semilattices), and the syntactic associative algebras of Reutenauer ($\D$ = vector spaces).  Assuming that $\D$ is a commutative variety of algebras, we prove that the syntactic $\D$-monoid of a language $L$ can be constructed as a quotient of a free $\DCat$-monoid modulo the syntactic congruence of $L$, and that it is isomorphic to the transition $\D$-monoid of the minimal automaton for $L$ in $\D$. Furthermore, in the case where the variety $\DCat$ is locally finite, we characterize the regular languages as precisely the languages with finite syntactic $\D$-monoids.
\end{abstract}

\section{Introduction}

One of the successes of the theory of coalgebras is that ideas from automata theory can be developed at a level of abstraction where they apply uniformly to many different types of systems. In fact, classical deterministic automata are a standard example of coalgebras for an endofunctor. And that automata theory can be studied with coalgebraic methods rests on the observation that formal languages form the final coalgebra.

The present paper contributes to a new category-theoretic view of \emph{algebraic} automata theory. In this theory one starts with an elegant machine-independent notion of language recognition: a language $L\seq X^*$ is recognized by a monoid morphism $e: X^* \to M$ if it is the preimage under $e$ of some subset of $M$. Regular languages are then characterized as precisely the languages recognized by finite monoids. A key concept, introduced by Rabin and Scott~\cite{rs59} (and earlier in unpublished work of Myhill), is the \emph{syntactic monoid} of a language $L$. It serves as a canonical algebraic recognizer of $L$, namely the smallest $X$-generated monoid recognizing $L$. Two standard ways to construct the syntactic monoid are: 
\begin{enumerate}[(1)]
\item as a quotient of the free monoid $X^*$ modulo the \emph{syntactic congruence} of $L$, which is a two-sided version of the well-known Myhill-Nerode equivalence, and 
\item as the \emph{transition monoid} of the minimal automaton for $L$. 
\end{enumerate}
In addition to syntactic monoids there are several related notions of syntactic algebras for (weighted) languages in the literature, most prominently the syntactic idempotent semirings of Pol\'ak~\cite{polak01} and the syntactic associative algebras of Reutenauer~\cite{reu80}, both of which admit constructions similar to (1) and (2). A crucial observation is that monoids, idempotent semirings and associative algebras are precisely the monoid objects in the categories of sets, semilattices and vector spaces, respectively. Moreover, these three categories are symmetric monoidal closed w.r.t. their usual tensor product. 

The main goal of our paper is thus to develop a theory of algebraic recognition in a general symmetric monoidal closed category $\D=(\D,\t,I)$. Following Goguen \cite{goguen75}, a \emph{language} in $\D$ is a morphism $L: \xs\ra Y$ where $X$ is a fixed object of inputs, $Y$ is a fixed object of outputs, and $\xs$ denotes the free $\D$-monoid on $X$. And a \emph{$\D$-automaton} is given by the picture below: it consists of an object of states $Q$, a morphism $i$ representing the initial state, an output morphism $f$, and a transition morphism $\delta$ which may be presented in its curried form $\lambda\delta$.
\begin{equation}
  \label{eq:daut}
  \vcenter{
    \xymatrix@=8pt{
      & X \t Q \ar[d]^\delta &\\
      I \ar[r]^i & Q\ar[r]^f \ar[d]^{\lambda\delta} & Y\\
      & [X,Q] &
    }
  }
\end{equation}
This means that an automaton is at the same time an \emph{algebra} $I+X\t Q\xra{[i,\delta]} Q$ for the functor $FQ= I+ X\t Q$, and a \emph{coalgebra} $Q\xra{\langle f,\lambda \delta\rangle } Y\times [X,Q]$ for the functor $TQ = Y\times[X,Q]$. It turns out that much of the  classical (co-)algebraic theory of automata in the category of sets extends to this level of generality. Thus Goguen \cite{goguen75} demonstrated that the initial algebra for $F$ coincides with the free $\D$-monoid $\xs$, and that every language is accepted by a unique minimal $\D$-automaton. We will add to this picture the observation that the final coalgebra for $T$ is carried by the object of languages $[\xs,Y]$, see Proposition \ref{prop:fincoalg}.

In Section \ref{sec:syn} we introduce the central concept of our paper, the \emph{syntactic $\D$-monoid} of a language $L:\xs\ra Y$, which by definition is the smallest $X$-generated $\D$-monoid recognizing $L$. Assuming that $\D$ is a commutative variety of algebras, we will show that the above constructions (1) and (2) for the classical syntactic monoid adapt to our general setting: the syntactic $\D$-monoid is (1) the quotient of $\xs$ modulo the syntactic congruence of $L$ (Theorem~\ref{thm:syn}), and (2) the transition $\DCat$-monoid of the minimal $\D$-automaton for $L$ (Theorem~\ref{thm:tran}). As special instances we recover the syntactic monoids of Rabin and Scott ($\D=$ sets), the syntactic semirings of Pol\'ak ($\D=$ semilattices) and the syntactic associative algebras of Reutenauer ($\D=$ vector spaces). Furthermore, our categorical setting yields new types of syntactic algebras ``for free''. For example,  we will identify monoids with zero as the algebraic structures representing partial automata (the case $\D=$ pointed sets), which leads to the \emph{syntactic monoid with zero} for a given language. Similarly, by taking as $\D$ the variety of algebras with an involutive unary operation we obtain \emph{syntactic involution monoids}.

Most of the results of our paper apply to arbitrary languages. In Section \ref{sec:rat} we will investigate \emph{$\D$-regular languages}, that is, languages accepted by $\D$-automata with a finitely presentable object of states. Under suitable assumptions on $\D$, we will prove that a language is $\D$-regular iff its syntactic $\D$-monoid is carried by a finitely presentable object (Theorem~\ref{thm:reg}). We will also derive a dual characterization of the syntactic $\DCat$-monoid which is new even in the ``classical'' case $\D=$ sets: if $\D$ is a locally finite variety, and if moreover some other locally finite variety $\C$ is dual to $\D$ on the level of finite objects, the syntactic $\D$-monoid of $L$ dualizes to the local variety of languages in $\C$ generated by the reversed language of $L$.

Due to space limitations most proofs are omitted or sketched. See \cite{ammu15c} for an extended version of this paper.

\medskip
\textbf{Related work.} 
Our paper gives a uniform treatment of various notions of syntactic algebras known in the literature \cite{polak01,rs59,reu80}. Another categorical approach to (classical) syntactic monoids appears in the work of Ballester-Bolinches, Cosme-Llopez and Rutten \cite{bcr14}. These authors consider automata in the category of sets specified by \emph{equations} or dually by \emph{coequations}, which leads to a construction of the automaton underlying the syntactic monoid of a language. The fact that it forms the transition monoid of a minimal automaton is also interpreted in that setting. In the present paper we take a more general and conceptual approach by studying algebraic recognition in a symmetric monoidal closed category $\DCat$. One important source of inspiration for our categorical setting was the work of Goguen \cite{goguen75}.

In the recent papers \cite{ammu14,ammu15} we presented a categorical view of \emph{varieties of languages}, another central topic of algebraic automata theory. Building on the duality-based approach of Gehrke, Grigorieff and Pin \cite{ggp08}, we generalized Eilenberg's variety theorem and its local version to the level of an abstract (pre-)duality between algebraic categories. The idea to replace monoids by monoid objects in a commutative variety $\DCat$ originates in this work.

When revising this paper we were made aware of the ongoing work of Bojanczyk \cite{boj15}. He considers, in lieu of commutative varieties, categories of Eilenberg-Moore algebras for an arbitrary monad on sorted sets, and defines syntactic congruences in this more general setting. Our Theorem \ref{thm:syn} is a special case of \cite[Theorem 3.1]{boj15}.

\section{Preliminaries}
\label{sec:pre}

Throughout this paper we work with deterministic automata  in a commutative variety $\D$ of algebras. Recall that a \emph{variety of algebras} is an equational class of algebras over a finitary signature. It is called \emph{commutative} (or \emph{entropic}) if, for any two objects $A$ and $B$ of $\D$, the set $\D(A,B)$ of all homomorphisms from $A$ to $B$ carries a subobject $[A,B]\monoto B^{\under{A}}$ of the product of $\under{A}$ copies of $B$. Commutative varieties are precisely the categories of Eilenberg-Moore algebras for a commutative finitary monad on the category of sets, see \cite{kock70, lin66}. We fix an object $X$ (of inputs) and an object $Y$ (of outputs) in $\DCat$.

\begin{example}\label{ex:entvar}
\begin{enumerate}
\item  $\Set$ is a commutative variety with $[A,B] = B^A$. 

\item A \emph{pointed set} $(A,\bot)$ is a set $A$ together with a chosen point $\bot\in A$. The category $\PSet$ of pointed sets and point-preserving functions is a commutative variety. The point of $[(A,\bot_A),(B,\bot_B)]$ is the constant function with value $\bot_B$.

\item An \emph{involution algebra} is a set with an involutive unary operation $x\mapsto \tl x$, i.e.~$\tl{\tl x} = x$.  We call $\tl x$ the \emph{complement} of $x$. Morphisms are functions $f$ with $f(\tl x) = \widetilde{f(x)}$. The variety $\Inv$ of involution algebras is commutative. Indeed, the set $[A,B]$ of all homomorphisms is an involution algebra with pointwise complementation: $\tl f$ sends $x$ to $\widetilde{f(x)}$.

\item All other examples we treat in our paper are varieties of modules over a semiring. Given a semiring $\S$ (with $0$ and $1$) we denote by $\SMod{\S}$ the category of all $\S$-modules and module homomorphisms (i.e.~$\S$-linear maps). Three interesting special cases of $\SMod{\S}$ are:
\begin{enumerate}
\item $\S=\{0,1\}$, the boolean semiring with $1+1=1$: the category $\JSL$ of join-semilattices with $0$, and homomorphisms preserving joins and $0$;

\item $\S=\Int$: the category $\Ab$ of abelian groups and group homomorphisms;

\item  $\S=\K$ (a field): the category $\Vect{\K}$ of vector spaces over $\K$ and linear maps.
\end{enumerate}
\end{enumerate}
\end{example}

\begin{notation}
We denote by $\Psi: \Set\ra\D$ the left adjoint to the forgetful functor $\under{\mathord{-}}:\D\ra\Set$. Thus $\Psi X_0$ is the free object of $\D$ on the set $X_0$. \takeout{We may assume that $X$ is a subset of $\under{\Psi X}$ and the universal map is the inclusion $X\monoto \under{\Psi X}$. (If $\D$ contains at least one algebra with more than one elements, then such a choice of $\Psi$ is always possible because all universal maps are injective.)}
\end{notation}

\begin{example}
\begin{enumerate}
\item We have $\Psi X_0 = X_0$ for $\D=\Set$ and $\Psi X_0 = X_0 + \{\bot\}$ for $\D=\PSet$.
\item For $\D=\Inv$ the free involution algebra on $X_0$ is $\Psi X_0 = X_0 + \tl{X_0}$ where $\tl {X_0}$ is a copy of $X_0$ (whose elements are denoted $\tl x$ for $x\in X_0$). The involution swaps the copies of $X_0$, and the universal arrow $X_0\ra X_0+\tl{X_0}$ is the left coproduct injection.

\item For $\D=\SMod{\S}$ the free module $\Psi X_0$ is the submodule of $\S^{X_0}$ on all functions $X_0\ra \S$ with finite support. Equivalently, $\Psi X_0$ consists of formal linear combinations $\sum_{i=1}^n s_ix_i$ with $s_i\in \S$ and $x_i\in X_0$. In particular, $\Psi X_0 = \Pow_f X_0$ (finite subsets of $X_0$) for $\D = \JSL$, and $\Psi X_0$ is the vector space with basis $X_0$ for $\D=\Vect{\K}$.
\end{enumerate}
\end{example}

\begin{definition}
Given objects $A$, $B$ and $C$ of $\DCat$, a \emph{bimorphism} from $A$, $B$ to $C$ is a function $f: \under{A}\times\under{B}\ra\under{C}$ such that the maps $f(a,\mathord{-}): \under{B}\ra \under{C}$ and $f(\mathord{-},b): \under{A}\ra \under{C}$ carry morphisms of $\D$ for every $a\in\under{A}$ and $b\in \under{B}$. A \emph{tensor product} of $A$ and $B$ is a universal bimorphism $t: \under{A}\times \under{B} \ra \under{A\t B}$, which means that for every bimorphism $f: \under{A}\times \under{B} \ra \under{C}$ there is a unique morphism $f': A\t B \ra C$ in $\D$ with $f'\o t = f$.
\end{definition}

\begin{theorem}[Banaschweski and Nelson \cite{bn76}]
Every commutative variety $\D$ has tensor products, making $\D=(\D,\t,I)$ with $I=\Psi 1$ a symmetric monoidal closed category. That is, we have the following bijective correspondence of morphisms, natural in $A,B,C\in \D$:
\[\begin{array}{rl}%
   f: & A\t B \ra C\\
    \hline
    \lambda f:& A\ra [B,C]
\end{array}
\]
\end{theorem}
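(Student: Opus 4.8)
The plan is to build the tensor product explicitly from bimorphisms, and then to read off the closed and symmetric monoidal structure from its universal property. First I would construct $A\t B$ as a quotient of a free algebra. Concretely, take the free $\D$-algebra $\Psi(\under A\times\under B)$ on the underlying set of the product, writing $a\t b$ for the generator corresponding to a pair $(a,b)$, and let $A\t B$ be the quotient of $\Psi(\under A\times\under B)$ by the least congruence identifying, for every $n$-ary operation symbol $\sigma$ of the signature,
\begin{align*}
  a\t\sigma^B(b_1,\dots,b_n) &= \sigma(a\t b_1,\dots,a\t b_n),\\
  \sigma^A(a_1,\dots,a_n)\t b &= \sigma(a_1\t b,\dots,a_n\t b).
\end{align*}
Writing $t\colon\under A\times\under B\ra\under{A\t B}$ for the map sending $(a,b)$ to the class of $a\t b$, these equations say precisely that $t$ is a bimorphism. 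For universality, given any bimorphism $f\colon\under A\times\under B\ra\under C$, freeness of $\Psi(\under A\times\under B)$ yields a unique homomorphism extending $f$ on generators; because $f$ is a bimorphism this homomorphism respects the defining congruence, hence factors uniquely as the desired $f'\colon A\t B\ra C$ with $f'\o t=f$. Since the generators $a\t b$ generate $A\t B$, this $f'$ is unique, so $t$ is the universal bimorphism, and functoriality of $\t$ in both variables is immediate.

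Next I would establish the stated adjunction. The universal property just proven gives a bijection between homomorphisms $A\t B\ra C$ and bimorphisms $\under A\times\under B\ra\under C$. I would then match the latter with homomorphisms $A\ra[B,C]$: a bimorphism $f$ has, for each fixed $a$, a component $f(a,\mathord{-})\in\D(B,C)=\under{[B,C]}$, giving a function $\lambda f\colon\under A\ra\under{[B,C]}$; since $[B,C]\monoto C^{\under B}$ is a subobject with pointwise operations, $\lambda f$ is a homomorphism exactly because each $f(\mathord{-},b)$ is one. This assignment is a natural bijection onto $\D(A,[B,C])$, with inverse $g\mapsto\big((a,b)\mapsto g(a)(b)\big)$. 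Composing the two bijections yields the required natural isomorphism $\D(A\t B,C)\cong\D(A,[B,C])$. Symmetry is equally formal: swapping arguments is a bijection between bimorphisms $\under A\times\under B\ra\under C$ and bimorphisms $\under B\times\under A\ra\under C$, natural in $C$, so by the universal property (Yoneda) it induces a natural isomorphism $A\t B\cong B\t A$.

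For associativity and the unit I would pass to multimorphisms. Calling $g\colon\under A\times\under B\times\under C\ra\under D$ a \emph{trimorphism} if it is a homomorphism in each variable separately, iterating the universal property shows that both $(A\t B)\t C$ and $A\t(B\t C)$ represent the functor sending $D$ to the set of trimorphisms $\under A\times\under B\times\under C\ra\under D$; the comparison of these two representations is the associator, natural by construction. For the unit $I=\Psi 1$, the adjunction $\Psi\dashv\under{\mathord{-}}$ gives $\D(\Psi 1,Z)\cong\under Z$ naturally, so combining with closedness $\D(I\t A,C)\cong\D(\Psi 1,[A,C])\cong\under{[A,C]}=\D(A,C)$ naturally in $C$; Yoneda then yields $I\t A\cong A$, and symmetry gives $A\t I\cong A$. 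The main obstacle is the remaining bookkeeping: verifying that these canonical isomorphisms satisfy the pentagon, triangle and hexagon coherence axioms. I expect this to reduce to the observation that every structure isomorphism is the unique map inducing the identity rearrangement on the representing (multi)morphisms, so that both legs of each coherence diagram induce the same rearrangement and hence coincide. This step is routine but is where all the care is needed, and it is precisely the content packaged into the cited result of Banaschewski and Nelson.
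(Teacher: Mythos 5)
The paper does not actually prove this theorem: it is imported wholesale as a cited result of Banaschewski and Nelson \cite{bn76}, so there is no in-paper argument to compare yours against. What you have written is a correct reconstruction of the standard proof from the literature, and its skeleton is sound: building $A\t B$ as the quotient of $\Psi(\under{A}\times\under{B})$ by the bimorphism relations, reading off the universal property from freeness, and then obtaining closedness from the bijection between bimorphisms $\under{A}\times\under{B}\ra\under{C}$ and morphisms $A\ra[B,C]$ (which works exactly because the paper's definition of commutativity makes $[B,C]\monoto C^{\under{B}}$ a subalgebra with pointwise operations, so $\lambda f$ being a homomorphism is literally the statement that each $f(\mathord{-},b)$ is one). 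Two places deserve more care than you give them. First, in the associativity step, the identification of $\D((A\t B)\t C, D)$ with trimorphisms is not a purely formal iteration of the universal property: an element of $A\t B$ is not in general a pure tensor $a\t b$, so to extend a trimorphism to a bimorphism on $\under{A\t B}\times\under{C}$ you must propagate the homomorphism-in-$C$ condition from generators to all elements, which again uses entropicity (homomorphisms are closed under pointwise operations). A cleaner route, available to you since you prove closedness first, is $\D((A\t B)\t C,D)\cong\D(A\t B,[C,D])\cong\{\text{bimorphisms into } [C,D]\}\cong\{\text{trimorphisms}\}$, where the last bijection is immediate from pointwise operations on $[C,D]$. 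Second, the coherence axioms (pentagon, triangle, hexagon) are only waved at; your representability argument -- that each structure map is the unique morphism inducing the evident rearrangement of multimorphisms, so both legs of any coherence square induce the same rearrangement -- is the right one, but as you say yourself, this bookkeeping is the part genuinely packaged into the citation.
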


\begin{rem}
  \label{rem:bullet}
Recall that a \emph{monoid} $(M,m,i)$ in a monoidal category $(\D,\otimes,I)$ (with tensor product $\otimes: \DCat\times\DCat\ra\DCat$ and tensor unit $I\in \DCat$) is an object $M$ equipped with a multiplication $m: M\t M\ra M$ and unit $i: I\ra M$ satisfying the usual associative and unit laws. Due to $\t$ and $I = \Psi 1$ representing bimorphisms, this categorical definition is equivalent to the following algebraic one in our setting: a \emph{$\D$-monoid} is a triple $(M,\bullet, i)$ where $M$ is an object of $\D$ and $(\under{M},\bullet, i)$ is a monoid in $\Set$ with $\bullet: \under{M}\times\under{M}\ra\under{M}$ a bimorphism of $\D$. A \emph{morphism} $h: (M,\bullet, i)\ra (M',\bullet', i')$ of $\D$-monoids is a morphism $h: M\ra M'$ of $\D$ such that $\under{h}: \under{M}\ra\under{M'}$ is a monoid morphism in $\Set$. We denote by $\DMon$ the category of $\D$-monoids and their homomorphisms. In the following we will freely work with $\D$-monoids in both categorical and algebraic disguise.
\end{rem}

\begin{example}\label{ex:dmon}
\begin{enumerate}
\item In $\Set$ the tensor product is the cartesian product, $I=\{\ast\}$, and $\Set$-monoids are ordinary monoids.
\item In $\PSet$ we have $I=\{\bot,\ast\}$, and the tensor product of pointed sets $(A,\bot_A)$ and $(B,\bot_A)$ is $A\t B = (A\setminus\{\bot_A\})\times (B\setminus\{\bot_B\}) + \{\bot\}$. $\PSet$-monoids are precisely monoids with zero. Indeed, given a $\PSet$-monoid structure on $(A,\bot)$ we have $x\bullet \bot= \bot = \bot\bullet x$ for all $x$ because $\bullet$ is a bimorphism, i.e.~$\bot$ is a zero element. Morphisms of $\Mon{\PSet}$ are zero-preserving monoid morphisms.
\item  An $\Inv$-monoid (also called an \emph{involution monoid}) is a monoid equipped with an  involution $x\mapsto \tl x$ such that $x\bullet \tl{y}= \tl{x}\bullet y = \tl{x\bullet y}$. For example, for any set $A$ the power set $\Pow A$ naturally carries the structure of an involution monoid: the involution takes complements, $\tl S = A\setminus S$, and the monoid multiplication is the symmetric difference $S\oplus T = (S\setminus T)\cup (T\setminus S)$. 
\item $\JSL$-monoids are precisely idempotent semirings (with $0$ and $1$). Indeed, a $\JSL$-monoid on a semilattice (i.e.~a commutative idempotent monoid) $(D,+,0)$ is given by a unit $1$ and a monoid multiplication that, being a bimorphism, distributes over $+$ and $0$.
\item More generally, a $\SMod{\S}$-monoid is precisely an associative algebra over $\S$: it consists of an $\S$-module together with a unit $1$ and a monoid multiplication that distributes over $+$ and $0$ and moreover preserves scalar multiplication in both components.
\end{enumerate}
\end{example}

\begin{notation}
We denote by $X^\cnum{n}$ ($n<\omega$) the $n$-fold tensor power of $X$, recursively defined by
$X^\cnum{0} = I$ and $X^\ecnum{n+1} = X \t X^\cnum{n}$.
\end{notation}

\begin{proposition}[see Mac Lane \cite{maclane}]\label{prop:freemon}
The forgetful functor $\DMon\ra\D$ has a left adjoint assigning to every object $X$ the free $\D$-monoid $\xs = \coprod_{n<\omega} X^\cnum{n}$. The monoid structure $(\xs,m_X,i_X)$ is given by the  coproduct injection $i_X: I=X^{\t 0} \to \xs$ and $m_X: \xs\t \xs\ra \xs$, where $\xs\t \xs = \coprod_{n,k<\omega} X^{\cnum{n}} \t X^{\cnum{k}}$ and $m_X$ has as its $(n,k)$-component the $(n+k)$-th coproduct injection. The universal arrow $\eta_X: X\ra \xs$ is the first coproduct injection.
\end{proposition}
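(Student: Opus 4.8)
The plan is to follow the classical construction of free monoids in a monoidal category (Mac Lane), the crucial ingredient being that $\D$ is monoidal \emph{closed}. First I would record the key consequence of closedness: for every object $A$ the functor $(\mathord{-})\t A$ has the right adjoint $[A,\mathord{-}]$ and hence preserves all colimits; by symmetry so does $A\t(\mathord{-})$. In particular $\t$ preserves countable coproducts in each variable, which yields the canonical isomorphism
\[
\Big(\coprod_{i}A_i\Big)\t\Big(\coprod_{j}B_j\Big)\;\cong\;\coprod_{i,j}A_i\t B_j .
\]
Applying this to $\xs=\coprod_{n<\omega}X^{\cnum n}$ justifies the formula $\xs\t\xs\cong\coprod_{n,k<\omega}X^{\cnum n}\t X^{\cnum k}$ used in the statement, and likewise identifies the higher tensor powers of $\xs$ as coproducts indexed by tuples.

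Next I would exhibit the monoid structure. Using the associativity and unit constraints of $\D$ there are canonical isomorphisms $\kappa_{n,k}\colon X^{\cnum n}\t X^{\cnum k}\xrightarrow{\,\cong\,}X^{\cnum{n+k}}$, and $m_X$ is the unique morphism whose $(n,k)$-component is $X^{\cnum n}\t X^{\cnum k}\xrightarrow{\kappa_{n,k}}X^{\cnum{n+k}}\hookrightarrow\xs$, the $(n+k)$-th coproduct injection; the unit $i_X$ is the $0$-th injection $I=X^{\cnum 0}\to\xs$. The monoid axioms I would check componentwise. The unit laws reduce to the fact that $\kappa_{0,k}$ and $\kappa_{n,0}$ are the left and right unit isomorphisms, while associativity reduces, on the $(n,k,l)$-component, to the identity $\kappa_{n+k,l}\circ(\kappa_{n,k}\t\id)=\kappa_{n,k+l}\circ(\id\t\kappa_{k,l})$ between two canonical morphisms $X^{\cnum n}\t X^{\cnum k}\t X^{\cnum l}\to X^{\cnum{n+k+l}}$; by Mac Lane's coherence theorem any two such canonical morphisms coincide, so this holds.

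For the universal property, given a $\D$-monoid $(M,\bullet,i)$ and a morphism $f\colon X\to M$, I would define $\ext f\colon\xs\to M$ by its components $\ext f_n\colon X^{\cnum n}\to M$, setting $\ext f_0=i$ and $\ext f_{n+1}=\bullet\circ(f\t\ext f_n)$ via $X^{\ecnum{n+1}}=X\t X^{\cnum n}$, i.e.\ the $n$-fold multiplication of $f$ with itself. Since the coproduct is universal these components determine a unique morphism of $\D$, and $\ext f\circ\eta_X=\ext f_1=\bullet\circ(f\t i)=f$ by the unit law of $M$. That $\ext f$ is a $\D$-monoid morphism is again componentwise: $\ext f\circ i_X=i$ is immediate, and $\ext f\circ m_X=\bullet\circ(\ext f\t\ext f)$ amounts, on the $(n,k)$-component and modulo $\kappa_{n,k}$, to $\ext f_{n+k}=\bullet\circ(\ext f_n\t\ext f_k)$, which follows by induction on $n$ from the unit and associativity laws of $\bullet$. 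Uniqueness is the same induction read backwards: any monoid morphism $g$ with $g\circ\eta_X=f$ must satisfy $g_0=i$ and $g_{n+1}=\bullet\circ(g_1\t g_n)$, forcing $g_n=\ext f_n$ for all $n$.

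The main obstacle is not a single deep step but the bookkeeping of coherence isomorphisms: one must ensure that the iterated multiplications $\ext f_n$ and the components of $m_X$ are well defined independently of how the tensors are bracketed, and that the monoid laws for $\xs$ and the morphism property of $\ext f$ reduce cleanly to the unit and associativity laws of $\bullet$ together with coherence. Invoking Mac Lane's coherence theorem lets me suppress these isomorphisms and treat $\t$ as strictly associative, after which every verification becomes the corresponding elementary identity on indices ($n+(k+l)=(n+k)+l$ and $0+n=n=n+0$). The one place where closedness of $\D$ is genuinely used --- rather than mere monoidal structure --- is the first step, since the very definition of $m_X$ rests on the coproduct description of $\xs\t\xs$, which in turn depends on $\t$ preserving coproducts.
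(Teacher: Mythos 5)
Your proof is correct: the paper gives no proof of this proposition at all, citing Mac Lane instead, and your argument is precisely the standard construction from that reference (tensor preserves coproducts by closedness, componentwise multiplication via the canonical isomorphisms $X^{\cnum{n}}\t X^{\cnum{k}}\cong X^{\cnum{n+k}}$, and the universal property verified by induction on tensor powers). Nothing is missing; in particular your inductive proof that $\ext{f}$ is a monoid morphism and is unique is the same bookkeeping Mac Lane's coherence theorem is meant to suppress.
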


\begin{proposition}\label{prop:freemon2}
The free $\D$-monoid on $X=\Psi X_0$ is $\xs = \Psi X_0^*$. Its monoid multiplication extends the concatenation of words in $X_0^*$, and its unit is the empty word $\epsilon$.
\end{proposition}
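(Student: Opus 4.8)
The plan is to combine the coproduct formula for the free $\D$-monoid from Proposition~\ref{prop:freemon} with the observation that the free-algebra functor $\Psi\colon\Set\to\D$ is strong symmetric monoidal. By Proposition~\ref{prop:freemon} we have $\xs=\coprod_{n<\omega}(\Psi X_0)^\cnum{n}$, so everything reduces to identifying $(\Psi X_0)^\cnum{n}$ with $\Psi(X_0^n)$ and pushing the coproduct through $\Psi$. Since $\Psi$ is a left adjoint it preserves coproducts, and since $X_0^*=\coprod_{n<\omega}X_0^n$ is the free monoid on the set $X_0$, this will yield $\xs\cong\Psi\big(\coprod_{n<\omega}X_0^n\big)=\Psi(X_0^*)$.

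First I would establish that $\Psi$ is strong monoidal. The unit is immediate, since $I=\Psi1$ by definition. For the tensor, I would prove the natural isomorphism $\Psi X_0\t\Psi Y_0\cong\Psi(X_0\times Y_0)$ by exhibiting, for every $C\in\D$, the chain of bijections
\begin{align*}
\D(\Psi X_0\t\Psi Y_0,\,C)
&\cong \D\big(\Psi X_0,\,[\Psi Y_0,C]\big)
 \cong \Set\big(X_0,\,\D(\Psi Y_0,C)\big)\\
&\cong \Set\big(X_0\times Y_0,\,\under{C}\big)
 \cong \D\big(\Psi(X_0\times Y_0),\,C\big),
\end{align*}
which uses closedness of $\D$ (\cite{bn76}), the adjunction $\Psi\dashv\under{\mathord-}$, the fact that the underlying set of $[\Psi Y_0,C]$ is the hom-set $\D(\Psi Y_0,C)$, and currying in $\Set$. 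As all steps are natural in $C$, the Yoneda lemma delivers the desired isomorphism. Iterating it gives $(\Psi X_0)^\cnum{n}\cong\Psi(X_0^n)$ by induction on $n$, with base case $(\Psi X_0)^\cnum{0}=I=\Psi1=\Psi(X_0^0)$.

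Combining these two facts yields the underlying-object isomorphism $\xs\cong\Psi(X_0^*)$. It then remains to check that it is an isomorphism of $\D$-monoids, i.e.\ that the multiplication $m_X$ from Proposition~\ref{prop:freemon} corresponds to $\Psi$ applied to concatenation $X_0^*\times X_0^*\to X_0^*$ and that the unit $i_X$ corresponds to the empty word $\epsilon$. The unit is clear: $i_X$ is the coproduct injection of $X^\cnum{0}=I=\Psi1$, which under the identification picks out $\epsilon\in X_0^*$. For the multiplication, the $(n,k)$-component of $m_X$ is the $(n+k)$-th coproduct injection $X^\cnum{n}\t X^\cnum{k}\cong X^\cnum{n+k}\hookrightarrow\xs$; under the strong-monoidal isomorphisms this is exactly $\Psi$ applied to the canonical bijection $X_0^n\times X_0^k\cong X_0^{n+k}$, that is, to concatenation read off componentwise.

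The main obstacle is this last step: verifying that the monoidal coherence isomorphisms line up so that the componentwise coproduct injections defining $m_X$ really do coincide with $\Psi(\text{concatenation})$. Conceptually this is automatic, since a strong symmetric monoidal functor sends monoid objects to monoid objects and, preserving the coproducts and tensor powers out of which both free monoids are built, it must send the free monoid $X_0^*$ on $X_0$ to the free $\D$-monoid on $\Psi X_0$; but making the bookkeeping of coherence isomorphisms fully precise is the only non-formal part, everything else being a routine chase through the adjunctions.
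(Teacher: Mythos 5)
Your proposal is correct and takes essentially the same route as the paper's own proof: it establishes that $\Psi$ is strong monoidal via the same Yoneda-style chain of natural bijections (yours read in the opposite direction), and then pushes the coproduct $\coprod_{n<\omega} X^{\cnum{n}}$ through $\Psi$ using that left adjoints preserve coproducts. Your closing observation that a coproduct-preserving strong symmetric monoidal functor must carry the free monoid $X_0^*$ to the free $\D$-monoid on $\Psi X_0$ (which also settles the identification of unit and multiplication) is precisely the paper's concluding alternative remark that $\Psi$ preserves free monoids.
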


\begin{example}
\begin{enumerate}
\item In $\Set$ we have $\xs = X^*$. In $\PSet$ with $X=\Psi X_0 = X_0 + \{\bot\}$ we get $\xs = X_0^* +\{\bot\}$. The product $x\bullet y$ is concatenation for $x,y\in X_0^*$, and otherwise $\bot$.

\item In $\Inv$ with $X = \Psi X_0 = X_0 + \tl{X_0}$ we have $\xs = X_0^* + \tl{X_0^*}$. The multiplication restricted to $X_0^*$ is concatenation, and is otherwise determined by $\tl u\bullet v = \tl{uv} = u \bullet \tl v$ for $u,v\in X_0^*$.

\item In $\JSL$ with $X=\Psi X_0 = \Pow_f X_0$ we have $\xs = \Pow_f X_0^*$, the semiring of all finite languages over $X_0$. Its addition is union and its multiplication is the concatentation of languages.

\item More generally, in $\SMod{\S}$ with $X=\Psi X_0$ we get $\xs = \Psi X_0^* = \S[X_0]$, the module of all finite $\S$-weighted languages over the alphabet $X_0$. Hence the elements of $\S[X_0]$ are functions $c: X_0^*\ra \S$ with finite support, which may be expressed as polynomials $\sum_{i=1}^n c(w_i)w_i$ with  $w_i\in X_0^*$ and $c(w_i)\in \S$. The $\S$-algebraic structure of $\S[X_0]$ is given by the usual addition, scalar multiplication and product of polynomials.
\end{enumerate}
\end{example}

\begin{definition}[Goguen \cite{goguen75}]
\label{def:aut}
A \emph{$\D$-automaton} $(Q,\delta,i,f)$ consists of an object $Q$ (of states) and morphisms $\delta: X\t Q \ra Q$, $i: I\ra Q$ and $f: Q\ra Y$; see Diagram~(\ref{eq:daut}). An \emph{automata homomorphism} $h: (Q,\delta,i,f) \ra (Q',\delta',i',f')$ is a morphism $h: Q\ra Q'$ preserving transitions as well as initial states and outputs, i.e.~making the following diagrams commute:
\[  
\xymatrix@=8pt{
X\t Q \ar[d]_{X\t h}\ar[rr]^>>>>>\delta && Q \ar[d]^h \\
X\t Q' \ar[rr]_>>>>>{\delta'}  && Q' 
}
\quad
\xymatrix@=8pt{
I \ar[drr]_{i'} \ar[rr]^i && Q \ar[d]^>>>h \ar[rr]^f  && Y \\
 && Q'\ar[urr]_{f'}  &&
}
\]
\end{definition}

The above definition makes sense in any monoidal category $\D$. In our setting, since $I=\Psi 1$, the morphism $i$ chooses an \emph{initial state} in $\under{Q}$. Moreover, if  $X=\Psi X_0$ for some set $X_0$ (of inputs), the morphism $\delta$ amounts to a choice of endomorphisms $\delta_a: Q\ra Q$ for $a\in X_0$, representing transitions. This follows from the bijections
\[\begin{array}{ll}%
    \Psi X_0 \t Q \ra Q & \text{in $\D$}\\
    \hline
    \Psi X_0 \ra [Q,Q]& \text{in $\D$}\\
    \hline
     X_0 \ra \D(Q,Q) & \text{in $\Set$}%
\end{array}
\]

\begin{example}\label{ex:automata}
\begin{enumerate}
\item The classical deterministic automata are the case $\D=\Set$ and $Y=\{0,1\}$. Here $f: Q\ra \{0,1\}$ defines the set $F=f^{-1}[1]\seq Q$ of final states. For general $Y$ we get deterministic Moore automata with outputs in $Y$.

\item The setting $\D=\PSet$ with $X = X_0 + \{\bot\}$ and $Y=\{\bot,1\}$ gives \emph{partial} deterministic automata. Indeed, the state object $(Q,\bot)$ has transitions $\delta_a: (Q,\bot)\ra (Q,\bot)$ for $a\in X_0$ preserving $\bot$, that is, $\bot$ is a sink state. Equivalently, we may consider $\delta_a$ as a partial transition map on the state set $Q\setminus\{\bot\}$. The morphism $f: (Q,\bot)\ra \{\bot,1\}$ again determines a set of final states $F=f^{-1}[1]$ (in particular, $\bot$ is non-final). And the morphism $i:\{\bot,\ast\}\ra (Q,\bot)$ determines a partial initial state: either $i(\ast)$ lies in $Q\setminus \{\bot\}$, or no initial state is defined.

\item In $\D=\Inv$ let us choose $X=X_0+\tl{X_0}$ and $Y=\{0,1\}$ with $\tl 0 = 1$. An $\Inv$-automaton is a deterministic automaton with complementary states $x\mapsto \tl x$ such that (i) for every transition $p\xra{a} q$ there is a complementary transition $\tl p \xra{a} \tl{q}$ and (ii) a state $q$ is final iff $\tl q$ is non-final.

\item For $\D=\JSL$ with $X=\Pow_f X_0$ and $Y=\{0,1\}$ (the two-chain) an automaton consists of a semilattice $Q$ of states, transitions $\delta_a: Q\ra Q$ for $a\in X_0$ preserving finite joins (including $0$), an initial state $i\in Q$ and a homomorphism $f: Q\ra\{0,1\}$ which defines a \emph{prime upset} $F=f^{-1}[1]\seq Q$ of final states. The latter means that a finite join of states is final iff one of the states is. In particular, $0$ is non-final. 

\item More generally, automata in $\D=\SMod{\S}$ with $X=\Psi X_0$ and $Y=\S$ are \emph{$\S$-weighted automata}. Such an automaton consists of an $\S$-module $Q$ of states, linear transitions $\delta_a: Q\ra Q$ for $a\in X_0$, an initial state $i\in Q$ and a linear output map $f: Q\ra \S$. 

\end{enumerate}
\end{example}

\begin{rem}\label{rem:algcoalg}
\begin{enumerate}
\item An \emph{algebra} for an endofunctor $F$ of $\D$ is a pair $(Q,\alpha)$ of an object $Q$ and a morphism $\alpha: FQ\ra Q$. A \emph{homomorphism} $h: (Q,\alpha)\ra(Q',\alpha')$ of $F$-algebras is a morphism $h: Q\ra Q'$ with $h\o \alpha = \alpha'\o Fh$. Throughout this paper we work with the endofunctor $FQ = I + X\t Q$; its algebras are denoted as triples $(Q,\delta,i)$ with $\delta: X\t Q\ra Q$ and $i: I\ra Q$. Hence $\D$-automata are precisely 
$F$-algebras equipped with an output morphism $f: Q\ra Y$. Moreover, automata homomorphisms are precisely $F$-algebra homomorphisms preserving outputs.

\item Analogously, a \emph{coalgebra} for an endofunctor $T$ of $\D$ is a pair $(Q,\gamma)$  of an object $Q$ and a morphism $\gamma: Q\ra TQ$. Throughout this paper we work with the endofunctor
$ TQ = Y\times [X,Q]$; its coalgebras are denoted as triples $(Q,\tau,f)$ with $\tau: Q\ra [X,Q]$ and $f:Q\ra Y$. Hence $\D$-automata are precisely \emph{pointed} $T$-coalgebras, i.e.~$T$-coalgebras equipped with a morphism $i: I\ra Q$. Indeed, given a pointed coalgebra $I \xra{i} Q \xra{\langle f,\tau\rangle} Y\times[X,Q]$, 
the morphism $Q\xra{\tau} [X,Q]$ is the curried form of a morphism $Q\t X \xra{\cong} X\t Q \xra{\delta} Q$. Automata homomorphisms are $T$-coalgebra homomorphisms preserving initial states.
\end{enumerate}
\end{rem}

\begin{definition}
  \label{def:asso}
Given a $\D$-monoid $(M,m,i)$ and a morphism $e: X\ra M$ of $\D$, the \emph{$F$-algebra associated to $M$ and $e$} has carrier $M$ and structure 
\[ [i,\delta] = (I+X\t M \xra{I+e\t M} I+M\t M \xra{[i,m]} M).\]
\end{definition}
In particular, the $F$-algebra associated to the free monoid $\xs$ (and its universal arrow $\eta_X$) is
\[
  [i_X,\delta_X] =(I + X \t \xs \xra{I + \eta_X \t \xs} I + \xs \t \xs \xra{[i_X,m_X]} \xs).
\] 

\begin{example} 
In $\Set$ every monoid $M$ together with an ``input'' map $e: X\ra M$ determines an $F$-algebra with initial state $i$ and transitions $\delta_a = \mathord{-}\bullet e(a)$ for all $a\in X$. The $F$-algebra associated to $X^*$ is the usual automaton of words: its initial state is $\epsilon$ and the transitions are given by $w\xra{a} wa$ for $a\in X$.
\end{example}

\begin{proposition}[Goguen \cite{goguen75}]\label{prop:initalg}
For any symmetric monoidal closed category $\D$ with countable coproducts, $\xs$ is the initial algebra for $F$.
\end{proposition}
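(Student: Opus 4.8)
The plan is to establish initiality directly from the explicit coproduct description $\xs = \coprod_{n<\omega} X^\cnum{n}$ of Proposition~\ref{prop:freemon}, rather than appealing to a general initial-chain argument. The one structural fact needed at the outset is that $X\t\mathord{-}$ preserves countable coproducts: since $\DCat$ is symmetric monoidal closed, $X\t\mathord{-}$ is a left adjoint (with right adjoint $[X,\mathord{-}]$) and hence preserves all colimits. In particular
\[
  X\t\xs \;=\; X\t\coprod_{n<\omega} X^\cnum{n} \;\cong\; \coprod_{n<\omega} \bigl(X\t X^\cnum{n}\bigr) \;=\; \coprod_{n<\omega} X^\ecnum{n+1},
\]
so that $FQ = I + X\t Q$ evaluated at $\xs$ splits as a coproduct of the tensor powers. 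Countable coproducts also guarantee that $\xs$ itself exists.

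Next I would fix an arbitrary $F$-algebra, which by Remark~\ref{rem:algcoalg} is a triple $(Q,\delta,i)$ with structure map $[i,\delta]: I + X\t Q\ra Q$, and build a homomorphism $h:\xs\ra Q$ by prescribing its components $h_n: X^\cnum{n}\ra Q$ along the coproduct injections. I set $h_0 = i: X^\cnum{0}=I\ra Q$ and define recursively
\[
  h_{n+1} \;=\; \bigl(X^\ecnum{n+1} = X\t X^\cnum{n}\xra{X\t h_n} X\t Q \xra{\delta} Q\bigr).
\]
The family $(h_n)_{n<\omega}$ induces a unique $h:\xs\ra Q$ by the universal property of the coproduct.

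Then I would verify that $h$ is an $F$-algebra morphism, i.e.\ $h\o[i_X,\delta_X] = [i,\delta]\o(I + X\t h)$, by checking the two summands of $I + X\t\xs$ separately. The $I$-summand requires $h\o i_X = i$, which holds since $i_X$ is the injection of $X^\cnum{0}$ and $h_0=i$. The $X\t\xs$-summand requires $h\o\delta_X = \delta\o(X\t h)$; here I use the description $\delta_X = m_X\o(\eta_X\t\xs)$ from Definition~\ref{def:asso} together with the formula for $m_X$ in Proposition~\ref{prop:freemon}, from which one reads off that $\delta_X$ carries the summand $X\t X^\cnum{n}$ of $X\t\xs$ into the summand $X^\ecnum{n+1}$ of $\xs$ by the coproduct injection. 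Postcomposing with $h$ then yields $h_{n+1}$ on that summand, which by its defining recursion equals $\delta\o(X\t h_n)$, exactly the restriction of the right-hand side. Uniqueness follows the same componentwise pattern: any $F$-algebra morphism $g:\xs\ra Q$ must satisfy $g\o i_X = i$ and $g_{n+1}=\delta\o(X\t g_n)$, which is precisely the recursion defining $h$, forcing $g=h$.

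I expect the main obstacle to be purely the bookkeeping in this preservation step, namely tracking how $\delta_X$ acts on each tensor summand through the composite $m_X\o(\eta_X\t\xs)$ and confirming, up to the coherence and unit isomorphisms of the monoidal structure, that it is the $(n{+}1)$-st coproduct injection $X^\ecnum{n+1}\hra\xs$. Once that identification is secured, both existence and uniqueness of $h$ collapse to the universal property of the coproduct and a straightforward induction.
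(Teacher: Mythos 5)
Your proof is correct and takes essentially the same route as the paper's own argument: the paper likewise uses that $X\t\mathord{-}$ preserves coproducts (being a left adjoint) to obtain $F\xs \cong I + \coprod_{n\geq 1} X^{\cnum{n}} \cong \xs$, and constructs the unique homomorphism into any $F$-algebra $(Q,\delta,i)$ componentwise by exactly your recursion $h_0 = i$, $h_{n+1} = \delta\o(X\t h_n)$, with existence and uniqueness both coming from the universal property of the coproduct. The only difference is one of detail: you additionally spell out the summand-by-summand verification that $\delta_X = m_X\o(\eta_X\t\xs)$ acts as the coproduct injections (up to coherence isomorphisms), a bookkeeping step the paper leaves implicit.
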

\takeout{
In more detail, for $\xs = \coprod_{n<\omega} X^{\cnum{n}}$ we have, since $\mathord{-}\t X$ preserves coproducts (being a left adjoint), that 
$X \t \xs \cong \coprod_{n\geq 1} X^{\cnum{n}}$. Then $\xs$ is an $F$-algebra due to the following isomorphism (where $I=X^{\cnum{0}}$):
\[ F\xs \cong I + X\t \xs \cong \coprod_{n<\omega} X^{\cnum{n}} = \xs. \]
For every $F$-algebra $Q$ the unique homomorphism $e_Q: \xs = \coprod_{n<\omega} X^{\cnum{n}} \ra Q$ has components $e_Q^{\cnum{n}}\ra Q$ defines by the following recursion:
\[ e_Q^{\cnum{0}}= i: I\ra Q\quad\text{and}\quad e_Q^{\ecnum{n+1}} = (X \t X^\cnum{n} \xra{X \t e_Q^\cnum{n}} X\t Q \xra{\delta} Q). \]
}
\begin{rem}\label{rem:concept}
  Given any $F$-algebra $(Q,\delta,i)$ the unique $F$-algebra homomorphism $e_Q: \xs \ra Q$ is constructed as follows: extend the morphism $\lambda \delta: X \to [Q,Q]$ to a $\D$-monoid morphism $(\lambda\delta)^+: \xs \to [Q,Q]$. Then
  \begin{equation}\label{eq:eA}
    e_Q = (\xs \cong \xs \t I \xra{(\lambda \delta)^+ \t i} [Q,Q] \t Q \xra{\ev} Q),
  \end{equation}
where $\ev$ is the `evaluation morphism', i.e.~the counit of the adjunction $- \t Q \dashv [Q,-]$. 
\end{rem}

\begin{notation}
 $\delta^\cst: \xs\t Q\ra Q$ denotes the uncurried form of $(\lambda \delta)^+: \xs\ra[Q,Q]$.
 \end{notation}

\begin{rem}
 Recall from Rutten \cite{rutten} that the final coalgebra for the functor $TQ = \{0,1\}\times Q^X$ on $\Set$ is the coalgebra $\Pow X^*\cong [X^*,\{0,1\}]$ of all  languages over $X$. Given any coalgebra $Q$, the unique coalgebra homomorphism from $Q$ to $\Pow\Sigma^*$ assigns to every state $q$ the language accepted by $q$ (as an initial state).  These observations generalize to our present setting. The object $[\xs, Y]$ of $\D$  carries the following $T$-coalgebra structure: its transition morphism $\tau_{\fc}:\fc\ra [X,\fc]$ is the two-fold curryfication of 
\[ \fc\t X \t \xs \xra{\fc\t \eta_X \t \xs} \fc\t \xs \t \xs \xra{\fc\t m_X} \fc\t \xs \xra{\ev} Y, \]
and its output morphism $f_{\fc}: \fc\ra Y$ is 
\[ f_{[\xs, Y]} = (\fc\cong \fc\t I \xra{\fc\t i_X} \fc\t \xs  \xra{\ev} Y).\] 
\end{rem}

\begin{proposition}\label{prop:fincoalg}
$\fc$ is the final coalgebra for $T$.
\end{proposition}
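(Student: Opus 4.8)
My plan is to derive finality of $\fc$ for $T$ from the initiality of $\xs$ for $F$ (Proposition~\ref{prop:initalg}), using the dual adjunction on $\D$ induced by the output object $Y$. Since $\D$ is symmetric monoidal closed, for all $A,B\in\D$ there are natural bijections
\[ \D(A,[B,Y]) \cong \D(A\t B, Y)\cong \D(B\t A,Y)\cong \D(B,[A,Y]), \]
exhibiting $[-,Y]\colon\D^\op\ra\D$ as right adjoint to $[-,Y]\colon\D\ra\D^\op$. In particular $[-,Y]$ sends coproducts to products, so for every $A$ I obtain a natural isomorphism
\[ [FA,Y] = [I+X\t A,Y]\cong [I,Y]\times[X\t A,Y]\cong Y\times[X,[A,Y]] = T[A,Y]. \]
Hence $[-,Y]$ transforms each $F$-algebra $\alpha\colon FA\ra A$ into a $T$-coalgebra $[A,Y]\xra{[\alpha,Y]}[FA,Y]\cong T[A,Y]$. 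First I would apply this to the initial $F$-algebra $(\xs,[i_X,\delta_X])$ and check, by unwinding the isomorphisms above, that the resulting coalgebra structure on $\fc$ is exactly the pair $(\tau_\fc,f_\fc)$ from the preceding remark; here $[i_X,Y]$ produces $f_\fc$ and $[\delta_X,Y]$ (with $\delta_X = m_X\o(\eta_X\t\xs)$) produces $\tau_\fc$.

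To establish the universal property I would fix a $T$-coalgebra $(Q,\tau,f)$ and transpose along the bijection $\D(Q,\fc)\cong\D(\xs,[Q,Y])$. Each $k\colon Q\ra\fc$ corresponds to $\bar k\colon\xs\ra[Q,Y]$, where I equip $[Q,Y]$ with the $F$-algebra structure dual to $Q$: the point $i'\colon I\ra[Q,Y]$ is the transpose of $I\t Q\cong Q\xra f Y$, and the action $\delta'\colon X\t[Q,Y]\ra[Q,Y]$ is the transpose of $[Q,Y]\xra{[\delta,Y]}[X\t Q,Y]\cong[X,[Q,Y]]$, with $\delta\colon X\t Q\ra Q$ the uncurried form of $\tau$. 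The crucial step is to prove that $k$ is a $T$-coalgebra homomorphism if and only if $\bar k$ is an $F$-algebra homomorphism. Concretely, preservation of outputs $f_\fc\o k = f$ should transpose to preservation of the point $\bar k\o i_X = i'$, and preservation of transitions $[X,k]\o\tau = \tau_\fc\o k$ should transpose to $\bar k\o\delta_X = \delta'\o(X\t\bar k)$. Granting this equivalence, initiality of $\xs$ gives a unique $F$-algebra homomorphism $\xs\ra[Q,Y]$, hence a unique $T$-coalgebra homomorphism $Q\ra\fc$, which is exactly the finality of $\fc$.

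The hard part will be the homomorphism-correspondence in the second step: matching the two homomorphism conditions across the adjoint transposition forces me to unfold the two-fold currying in the definition of $\tau_\fc$ and to keep track of the symmetry and associativity coherence isomorphisms, so the bookkeeping is the real work. If that becomes unwieldy I would fall back on a direct construction: define $k\colon Q\ra\fc$ as the transpose of $Q\t\xs\xra{\cong}\xs\t Q\xra{\delta^\cst} Q\xra f Y$, so that $k(q)$ is the behaviour of $q$ read as an initial state, and prove uniqueness by induction along the coproduct $\xs=\coprod_{n<\omega}X^\cnum{n}$, with output preservation as the base case and transition preservation as the induction step; the morphism $\delta^\cst$ already packages the needed recursion. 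I would nonetheless prefer the dual-adjunction argument, since it explains conceptually why $\fc$ is final, namely because it is the $[-,Y]$-image of the initial $F$-algebra.
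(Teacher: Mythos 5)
Your proposal is correct, and its primary route is genuinely different from the paper's. The paper proves the statement directly: it takes the candidate homomorphism to be the curried form of $Q\t\xs\cong\xs\t Q\xra{\delta^\cst}Q\xra{f}Y$, verifies preservation of outputs and transitions by two diagram chases, and then proves uniqueness by showing that any coalgebra homomorphism is determined on each coproduct component $Q\t X^{\cnum{n}}$ of $Q\t\xs$, by induction on $n$ --- exactly the fallback you sketch at the end. Your main argument instead lifts the dual adjunction $\D(A,[B,Y])\cong\D(B,[A,Y])$ to a dual adjunction between $\Alg F$ and $\Coalg T$: the isomorphism $[FA,Y]\cong T[A,Y]$ makes $[-,Y]$ carry $F$-algebras to $T$-coalgebras, the hom-set bijection restricts to homomorphisms (your ``crucial step'', which is indeed true, and whose verification amounts to the same transposition and coherence bookkeeping the paper buries in its diagrams), and finality of $\fc=[\xs,Y]$ then falls out of initiality of $\xs$, since $\Coalg T(Q,[\xs,Y])\cong\Alg F(\xs,[Q,Y])$ is a singleton; your identification of the dualized structure on $[\xs,Y]$ with the paper's $(\tau_{\fc},f_{\fc})$, via $\delta_X=m_X\o(\eta_X\t\xs)$, is also correct. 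What your route buys is conceptual clarity and reusability: it exhibits the final coalgebra as the dual of the initial algebra under the dualizing object $Y$, handles existence and uniqueness in one stroke, and works verbatim in any symmetric monoidal closed category with countable coproducts and finite products. What the paper's route buys is self-containedness: it uses only the concrete extension $\delta^\cst$ already introduced, and its induction over the powers $X^{\cnum{n}}$ makes explicit where preservation of coproducts by $\t$ is used. Neither is substantially shorter once every diagram is drawn, so your preference for the dual-adjunction argument is a reasonable one.
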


\begin{proof}[Proof sketch]
Given any coalgebra $(Q,\tau,f)$, let $\delta: X\t Q\ra Q$ be the uncurried version of $\tau: Q\ra [X,Q]$, see Remark \ref{rem:algcoalg}. Then the unique coalgebra homomorphism into $\fc$ is $\lambda h: Q\ra \fc$, where $h = (Q\t \xs \cong \xs \t Q \xra{\delta^\cst} Q \xra{f} Y)$.
\end{proof}

\begin{definition}[Goguen \cite{goguen75}]
 A \emph{language} in $\D$ is a morphism $L:\xs\ra Y$.
\end{definition}

Note that if $X=\Psi X_0$ (and hence $\xs=\Psi X_0^*$) for some set $X_0$, one can identify a language $L: \xs = \Psi X_0^* \ra Y$ in $\D$ with its adjoint transpose $\tl L: X_0^*\ra \under{Y}$, via the adjunction $\Psi\dashv |\mathord{-}|: \DCat\ra \Set$. In the case where $\under{Y}$ is a two-element set, $\tl L$ is the characteristic function of a ``classical'' language $L_0 \seq X_0^*$.

\begin{example}\label{ex:language}
\begin{enumerate}
\item In $\D = \Set$ (with $\xs = X^*$ and $Y = \{ 0, 1\}$) one represents $L_0\seq X^*$  by its characteristic function $L: X^*\ra\{0,1\}$.

\item In $\D=\PSet$ (with $X=X_0 + \{\bot\}$, $\xs = X_0^* + \{\bot\}$ and $Y=\{\bot,1\}$) one represents $L_0\seq X_0^*$  by its extended characteristic function $L: X_0^*+\{\bot\}\ra \{\bot,1\}$ where $L(\bot)=\bot$.

\item In $\D=\Inv$ (with $X=X_0+\tl{X_0}$, $\xs=X_0^* + \tl{X_0^*}$ and $Y=\{0,1\}$) one represents $L_0\seq X_0^*$  by $L: X_0^* + \wt{X_0^*} \to \{0,1\}$ where $L(w) = 1$ iff $w \in L_0$ and $L(\wt w) = 1$ iff $w \not\in L_0$ for all words $w \in X_0^*$.

\item In $\D=\JSL$ (with $X=\Pow_f X_0$, $\xs = \Pow_f X_0^*$ and $Y=\{0,1\}$) one represents $L_0\seq X_0^*$ by $L: \Pow_f X_0^*\ra\{0,1\}$ where $L(U)=1$ iff $U\cap L_0 \neq \emptyset$.

\item In $\D=\SMod{\S}$ (with $X=\Psi X_0$, $\xs = \S[X_0]$ and $Y=\S$) an $\S$-weighted language $L_0: X_0^*\ra \S$ is represented by its free extension to a module homomorphism
\[
L:\S[X_0^*] \to \S,\quad L\left(\sum\limits_{i =1}^{n} c(w_i)w_i \right) = \sum_{i=1}^n c(w_i)L_0(w_i).
\]
\end{enumerate}
\end{example}

\begin{definition}[Goguen \cite{goguen75}]
The language \emph{accepted} by a $\D$-automaton $(Q,\delta,i,f)$ is $L_Q = (\xs \xra{e_Q} Q \xra{f} Y )$, where $e_Q$ is the $F$-algebra homomorphism of Remark \ref{rem:concept}.
\end{definition}

\begin{example}\label{ex:langacc}
\begin{enumerate}
\item In $\D=\Set$ with $Y=\{0,1\}$, the homomorphism $e_Q: X^* \ra Q$ assigns to every word $w$ the state it computes in $Q$, i.e.~the state the automaton reaches on input $w$. Thus $L_Q(w)=1$ iff $Q$ terminates in a final state on input $w$, which is precisely the standard definition of the accepted language of an automaton. For general $Y$, the function $L_Q: X^*\ra Y$ is the behavior of the Moore automaton $Q$, i.e.~$L_Q(w)$ is the output of the last state in the computation of $w$.

\item For $\D=\PSet$ with $X=X_0+\{\bot\}$ and $Y=\{\bot,1\}$, we have $e_Q: X_0^* + \{\bot\}\ra (Q,\bot)$ sending $\bot$ to $\bot$, and sending a word in $X_0^*$ to the state it computes (if any), and to $\bot$ otherwise. Hence $L_Q: X_0^*+\{\bot\}\ra \{\bot,1\}$ defines (via the preimage of $1$) the usual language accepted by a partial automaton.

\item In $\D=\Inv$ with $X=X_0+\tl{X_0}$ and $Y= \{0,1\}$, the map 
$L_Q: X_0^*+ \tl{X_0^*} \ra \{0,1\}$ sends $w\in X_0^*$ to $1$ iff $w$ computes a final state, and it sends $\tl w\in \tl{X_0^*}$ to $1$ iff $w$ computes a non-final state. 

\item In $\D=\JSL$ with $X=\Pow_f X_0$ and $Y=\{0,1\}$, the map $L_Q: \Pow X_0^* \ra \{0,1\}$ assigns to $U\in \Pow_f X_0^*$ the value $1$ iff the computation of at least one word in $U$ ends in a final state.

\item In $\D=\SMod{\S}$ with $X=\Psi X_0$ and $Y=\S$, the map $L_Q: \S[X_0^*] \ra \S$ assigns to $\sum_{i=1}^n c(w_i)w_i$ the value $\sum_{i=1}^n c(w_i)y_i$, where $y_i$ is the output of the state $Q$ reaches on input $w_i$. Taking $Q = \S^n$ for some natural number $n$ yields a classical $n$-state weighted automaton, and in this case one can show that the restriction of $L_Q$ to $X_0^*$ is is the usual language of a weighted automaton.
\end{enumerate}
\end{example}

\begin{rem}
By Remark \ref{rem:algcoalg} every $\D$-automaton $(Q,\delta,i,f)$ is an $F$-algebra as well as a $T$-coalgebra. Our above definition of $L_Q$ was purely algebraic. The corresponding coalgebraic definition uses the unique coalgebra homomorphism
$c_Q: Q\ra \fc$ into the final $T$-coalgebra and precomposes with $i: I\ra Q$ to get a morphism $c_Q \o i: I\ra \fc$ (choosing a language, i.e.~an element of $\fc$). Unsurprisingly, the results are equal:
\end{rem}

\begin{proposition}\label{prop:lang}
The language $L_Q: \xs \ra Y$ of an automaton $(Q,\delta,i,f)$ is the uncurried form of the morphism $c_Q \o i: I\ra \fc$.
\end{proposition}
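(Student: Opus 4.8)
The plan is to unfold both morphisms into explicit composites built from $\delta^\cst$, $i$ and $f$, and then to observe that they agree up to the coherence isomorphisms of the symmetric monoidal structure. Two ingredients are needed: the definition $L_Q = f\o e_Q$ together with the formula for $e_Q$ from Remark~\ref{rem:concept}, and the description of the unique coalgebra homomorphism $c_Q$ from the proof of Proposition~\ref{prop:fincoalg}. I first note that by Remark~\ref{rem:algcoalg} the morphism $\delta\colon X\t Q\ra Q$ of the $F$-algebra structure is precisely the uncurried transition map of the $T$-coalgebra structure, so one and the same $\delta^\cst\colon \xs\t Q\ra Q$ governs both presentations; this is what makes the comparison possible at all.

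For the algebraic side, recall that $\delta^\cst = \ev\o((\lambda\delta)^+\t Q)$ is the uncurried form of $(\lambda\delta)^+$. Factoring $(\lambda\delta)^+\t i = ((\lambda\delta)^+\t Q)\o(\xs\t i)$ in the formula of Remark~\ref{rem:concept} gives $e_Q = \delta^\cst\o(\xs\t i)\o(\xs\cong \xs\t I)$, whence
\[ L_Q = f\o\delta^\cst\o(\xs\t i)\o(\xs\cong \xs\t I). \]
For the coalgebraic side, the proof of Proposition~\ref{prop:fincoalg} gives $c_Q = \lambda h$ with $h = f\o\delta^\cst\o(Q\t \xs\cong \xs\t Q)$. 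Uncurrying the composite $c_Q\o i = (\lambda h)\o i$ by naturality of the currying bijection $\lambda(-)$ in its domain argument produces $h\o(i\t \xs)\colon I\t\xs\ra Y$; and naturality of the symmetry lets me rewrite $(Q\t\xs\cong\xs\t Q)\o(i\t\xs) = (\xs\t i)\o(I\t\xs\cong\xs\t I)$. Therefore the uncurried form of $c_Q\o i$, read as a map $\xs\ra Y$ via $\xs\cong I\t\xs$, equals
\[ f\o\delta^\cst\o(\xs\t i)\o(I\t\xs\cong\xs\t I)\o(\xs\cong I\t\xs). \]

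Comparing the two expressions, it remains only to check that the two chains of coherence isomorphisms $\xs\cong\xs\t I$ agree: concretely, that the composite $\xs\cong I\t\xs \xra{\sigma}\xs\t I$ coincides with the right-unit isomorphism $\xs\cong\xs\t I$. This is exactly the standard triangle coherence $\rho_A = \lambda_A\o\sigma_{A,I}$ relating the unitors and the symmetry, valid in any symmetric monoidal category. I expect this last bookkeeping step — tracking the unit and symmetry isomorphisms and invoking the correct coherence axiom — to be the only real subtlety; everything else is a direct unfolding of the two definitions together with the naturality of $\lambda(-)$ and of the symmetry $\sigma$.
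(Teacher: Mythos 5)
Your proof is correct, and it shares the paper's overall strategy: both arguments reduce the uncurried form of $c_Q \o i$ and the language $L_Q$ to one and the same composite $\xs\cong\xs\t I\xra{\xs\t i}\xs\t Q\xra{\delta^\cst}Q\xra{f}Y$. The difference lies in how the algebraic side $L_Q = f\o e_Q$ is matched with this composite. The paper does it by a diagram chase that invokes the unit law of the monoid $\xs$ (the identity on $\xs$ factors through $\xs\t i_X$ followed by $r_X^\cst$), the fact that $e_Q$ preserves units ($i = e_Q\o i_X$), and the intertwining square $e_Q\o r_X^\cst = \delta^\cst\o(\xs\t e_Q)$ --- the last of which is asserted without proof and is the only non-formal ingredient of the diagram. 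You instead unfold the explicit formula \refeq{eq:eA} for $e_Q$ from Remark~\ref{rem:concept} and use bifunctoriality of $\t$ to rewrite $(\lambda\delta)^+\t i$ as $((\lambda\delta)^+\t Q)\o(\xs\t i)$, which immediately yields $e_Q = \delta^\cst\o(\xs\t i)\o(\xs\cong\xs\t I)$; no property of $e_Q$ beyond its defining formula is needed. Your treatment of the coalgebraic side (naturality of currying in the domain variable, naturality of the symmetry, and the coherence $\rho_A = \lambda_A\o\sigma_{A,I}$) spells out precisely the steps the paper performs implicitly when it asserts which language $c_Q\o i$ ``determines''. Also, your opening observation that Remark~\ref{rem:algcoalg} makes the same $\delta^\cst$ govern both the algebra and the coalgebra presentation is exactly the point that makes either proof work. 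So your route is slightly more economical and more self-contained relative to what the paper states explicitly; what it forgoes is the reusable module-morphism property of $e_Q$ that the paper's diagram records in passing.
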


\takeout{
\begin{example}
In $\Set$ the generators simply mean that every element of $M$ is a product of elements from $e_0[X]$. In $\Inv$ this means that every element of $M$ is a product of elements from $e_0[X]$ and their complements. In $\JSL$ an $X_0$-generated $\D$-monoid is an idempontent semiring with $e_0: X_0 \ra \under{M}$ such that every element of $M$ is a sum of products of elements from $e_0[X_0]$.
\end{example}
}

\section{Algebraic Recognition and Syntactic $\boldsymbol{\D}$-Monoids}
\label{sec:syn}

In classical algebraic automata theory one considers recognition of
languages by (ordinary) monoids in lieu of automata. One key concept
is the \emph{syntactic monoid} which is characterized as the smallest
monoid recognizing a given language. There are also related concepts of
canonical algebraic recognizers in the literature, e.g. the syntactic
idempotent semiring and the syntactic associative algebra. In this section we will give a uniform account of algebraic
language recognition in our categorical setting. Our main result is the
definition and construction of a minimal algebraic
recognizer, the \emph{syntactic $\D$-monoid} of a language. 
\begin{definition}
\label{def:recog}
A $\D$-monoid morphism $e: \xs \ra M$ \emph{recognizes} the language $L: \xs \to Y$ if there exists a morphism $f: M\ra Y$ of $\D$ with $L=f\o e$.
\end{definition}

\begin{example}\label{ex:monrec}
We use the notation of Example \ref{ex:language}. 
\begin{enumerate}
\item $\D = \Set$ with $\xs = X^*$ and $Y = \{ 0, 1\}$: given a monoid $M$, a function $f: M\ra\{0,1\}$ defines a subset $F=f^{-1}[1]\seq M$. Hence a monoid morphism $e: X^*\ra M$ recognizes $L$ via $f$ (i.e.~$L=f\o e$) iff $L_0 = e^{-1}[F]$. This is the classical notion of recognition of a language $L_0\seq X^*$ by a monoid, see e.g.~Pin~\cite{pin15}.
\item $\D=\PSet$ with $X=X_0 + \{\bot\}$, $\xs = X_0^* + \{\bot\}$ and $Y=\{\bot,1\}$: given a monoid with zero $M$, a $\PSet$-morphism $f: M\ra \{\bot,1\}$ defines a subset $F=f^{-1}[1]$ of $M\setminus\{0\}$. A zero-preserving monoid morphism $e: X_0^*+\{\bot\}\ra M$ recognizes $L$ via $f$ iff $L_0 = e^{-1}[F]$.
\item $\D=\Inv$ with $X=X_0+\tl{X_0}$,  $\xs=X_0^* + \tl{X_0^*}$ and $Y=\{0,1\}$: for an involution monoid $M$ to give a morphism $f: M\ra \{0,1\}$ means to give a subset $F=f^{-1}[1]\seq M$ satisfying $m\in F$ iff $\tl{m}\not\in F$. Then $L$ is recognized by $e: X_0^*+\tl{X_0^*}\ra M$ via $f$ iff $L_0 = X_0^*\cap e^{-1}[F]$.
\item $\D=\JSL$ with $X=\Pow_f X_0$, $\xs = \Pow_f X_0^*$ and $Y=\{0,1\}$: for an idempotent semiring $M$ a morphism $f: M\ra Y$ defines  a prime upset $F=f^{-1}[1]$, see Example \ref{ex:automata}. Hence $L$ is recognized by a semiring homomorphism $e: \Pow_f X_0^* \ra M$ via $f$ iff $L_0 = X_0^*\cap e^{-1}[F]$. Here we identify $X_0^*$ with the set of all singleton languages $\{w\}$, $w\in X_0^*$. This is the concept of language recognition introduced by Pol\'ak \cite{polak01} (except that he puts $F=f^{-1}[0]$, so $0$ and $1$ must be swapped, as well as $F$ and $M\setminus F$).
\item $\D=\SMod{\S}$ with $X=\Psi X_0$, $\xs = \S[X_0]$ and $Y=\S$: given an associative algebra $M$, the language $L$ is recognized by $e: \S[X_0]\ra M$ via $f: M\ra \S$ iff $L= f\o e$. For the case where the semiring $\S$ is a ring, this notion of recognition is due to Reutenauer \cite{reu80}. 
\end{enumerate}
\end{example}

\begin{rem}\label{rem:Lalg}
\begin{enumerate}
\item Since $\DCat$ and $\DMon$ are varieties, we have the usual factorization system of regular epimorphisms ($=$ surjective homomorphisms)   and  monomorphisms ($=$ injective homomorphisms). \emph{Quotients} and \emph{subobjects} are understood w.r.t. this system.

\item By an \emph{$X$-generated $\D$-monoid} we mean a quotient $e: \xs \epito M$ in $\DMon$. For two such quotients $e_i: \xs \epito M_i$, $i = 1,2$, we say, as usual, that $e_1$ is \emph{smaller or equal to} $e_2$ (notation: $e_1 \leq e_2$) if $e_1$ factorizes through $e_2$. Note that if $X=\Psi X_0$, the free $\D$-monoid $\xs=\Psi X_0^*$ on $X$ is also the free $\D$-monoid on the set $X_0$ (w.r.t. the forgetful functor $\DMon\ra\Set$), see Proposition~\ref{prop:freemon2}.
In this case, to give a quotient $e:\xs\epito M$ is equivalent to giving a set of generators for the $\D$-monoid $M$ indexed by $X_0$ -- which is why $M$ may also be called an \emph{$X_0$-generated $\DCat$-monoid}.
\item   Let $e: \xs \epito M$ be an $X$-generated $\D$-monoid with unit $i: I \to M$ and multiplication $m: M \t M \to M$. Recall that $\eta_X: X \to \xs$ denotes the universal morphism of the free $\D$-monoid on $X$ and consider the $F$-algebra associated to $M$ and $X \xrightarrow{\eta_X} \xs \xrightarrow{e} M$ (see Definition~\ref{def:asso}). Thus, together with a given $f: M \to Y$ an $X$-generated $\D$-monoid induces an automaton $(M, \delta, i, f)$ called the \emph{derived} automaton.
\end{enumerate}
\end{rem}

\begin{lemma}
\label{lem:recog}
The language recognized by an $X$-generated $\D$-monoid $e: \xs \epito M$ via $f: M\ra Y$ is the language accepted by its derived automaton.
\end{lemma}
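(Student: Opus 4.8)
The plan is to show that the two languages in question are literally equal by identifying the morphism $e$ with the canonical $F$-algebra homomorphism induced by the derived automaton. Spelling out the two sides: by Definition~\ref{def:recog} the language recognized by $e\colon\xs\epito M$ via $f$ is $f\o e$, whereas by the definition of the accepted language the language of the derived automaton $(M,\delta,i,f)$ from Remark~\ref{rem:Lalg} is $f\o e_M$, where $e_M\colon\xs\ra M$ is the unique $F$-algebra homomorphism of Remark~\ref{rem:concept} into the underlying $F$-algebra $(M,\delta,i)$. Thus it suffices to prove $e=e_M$, for then $f\o e = f\o e_M$ as desired.

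Since $\xs$ is the initial $F$-algebra (Proposition~\ref{prop:initalg}), $e_M$ is characterised as the \emph{unique} $F$-algebra homomorphism from $(\xs,\delta_X,i_X)$ to $(M,\delta,i)$. Hence I would prove $e=e_M$ simply by verifying that $e$ itself is an $F$-algebra homomorphism and invoking uniqueness. This amounts to checking two commutativities: preservation of the initial state, $e\o i_X = i$, and preservation of transitions, $e\o\delta_X = \delta\o(X\t e)$. The first is immediate, because $e$ is a $\D$-monoid morphism and therefore preserves the monoid unit.

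The substance of the argument is the transition square, which I would settle by unfolding both structure maps via Definition~\ref{def:asso}. There $\delta_X = m_X\o(\eta_X\t\xs)$ and, for the derived automaton built from $e\o\eta_X\colon X\ra M$, $\delta = m\o((e\o\eta_X)\t M)$. Bifunctoriality of $\t$ rewrites the right-hand composite as $\delta\o(X\t e) = m\o\big((e\o\eta_X)\t e\big) = m\o(e\t e)\o(\eta_X\t\xs)$, while the left-hand composite is $e\o\delta_X = e\o m_X\o(\eta_X\t\xs)$; the two agree precisely because $e$, being a $\D$-monoid morphism, satisfies $e\o m_X = m\o(e\t e)$. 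This diagram chase is the only non-formal step, and even it is routine given the explicit descriptions of $m_X$ and $i_X$ in Proposition~\ref{prop:freemon}; once it is in place, uniqueness of $e_M$ yields $e=e_M$ and hence the claimed equality of languages. I expect no real obstacle beyond keeping the bifunctoriality bookkeeping straight.
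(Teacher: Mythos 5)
Your proposal is correct and matches the paper's own proof: the paper likewise reduces the claim to showing that $e$ is itself the unique $F$-algebra homomorphism from the initial $F$-algebra $\xs$ into the derived automaton's underlying $F$-algebra, verified by a commuting diagram whose two halves are exactly your bifunctoriality step and the fact that $e$ preserves unit and multiplication. No gaps.
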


We are now ready to give an abstract account of syntactic algebras in our setting. In classical algebraic automata theory the syntactic monoid of a language is characterized as the smallest monoid recognizing that language. We will use this property as our definition of the syntactic $\D$-monoid.
\begin{definition}
  \label{def:syn}
  The \emph{syntactic $\D$-monoid} of language $L:\xs\ra Y$, denoted by $\Syn L$, is the smallest $X$-generated monoid recognizing $L$. 
\end{definition}
In more detail, the syntactic $\D$-monoid is an $X$-generated $\D$-monoid $e_L: \xs \epito \Syn L$ together with a morphism $f_L: \Syn{L}\ra Y$ of $\D$ such that (i) $e_L$ recognizes $L$ via $f_L$, and (ii) for every $X$-generated $\D$-monoid $e: \xs \epito M$ recognizing $L$ via $f: M\ra Y$ we have $e_L \leq e$, that is, the left-hand triangle below commutes for some $\D$-monoid morphism $h$:
\[
\xymatrix@=8pt{
  \xs
  \ar@{->>}[rr]^-e
  \ar@{->>}[rrd]_-{e_L}
  &&
  M
  \ar@{->>}[d]^>>>h
  \ar[rr]^-f 
  &&
  Y
  \\
  &&
  \Syn L
  \ar[rru]_{f_L}
}
\]
Note that the right-hand triangle also commutes since $e$ is epimorphic and $f \o e = L = f_L \o e_L$. The universal property determines $\Syn L$, $e_L$ and $f_L$ uniquely up to isomorphism. A construction of $\Syn{L}$ is given below (Construction \ref{c:syn}). We first consider a special case:

\begin{example}
In $\D = \Set$ with $Y = \{0,1\}$, the syntactic monoid of a language $L\seq X^*$ can be constructed as the quotient of $X^*$ modulo the \emph{syntactic congruence}, see e.g. \cite{pin15}:
\[
\Syn{L} = X^* / \mathord{\sim}, \qquad \text{where $u \sim v$ iff for all $x, y \in X^*$: $xuy \in L \iff xvy \in L$}.
\]
\end{example}
We aim to generalize this construction to our categorical setting. First note the following 

\begin{lemma}
  \label{lem:monadic}
  Let $\D$ be any symmetric monoidal closed category with countable co\-products. Then the forgetful functor $\DMon \to \D$ preserves reflexive coequalizers.
\end{lemma}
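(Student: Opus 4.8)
The plan is to prove the stronger statement that the forgetful functor $U:\DMon\to\D$ \emph{creates} reflexive coequalizers; preservation then follows since $\D$, being symmetric monoidal closed with countable coproducts, has the required reflexive coequalizers. So let $f,g:(A,m_A,i_A)\to(B,m_B,i_B)$ be a reflexive pair in $\DMon$ with common section $s$, and let $q:B\epito C$ be the coequalizer of the underlying pair $Uf,Ug$ in $\D$. I would equip $C$ with a $\D$-monoid structure making $q$ a $\D$-monoid morphism, and then show $q$ is the coequalizer in $\DMon$.

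The crucial ingredient, and the main obstacle, is that the bifunctor $\t$ preserves reflexive coequalizers in both variables \emph{simultaneously}. Preservation in each variable separately is immediate: $\mathord{-}\t B$ is a left adjoint (to $[B,\mathord{-}]$) and hence preserves all colimits, and by symmetry so does $A\t\mathord{-}$. To obtain the joint statement I would factor $q\t q=(q\t C)\o(B\t q)$, where $B\t q:B\t B\to B\t C$ is the coequalizer of $B\t Uf,B\t Ug$ and $q\t C:B\t C\to C\t C$ is the coequalizer of $Uf\t C,Ug\t C$, and then run the standard $3\times 3$ diagram chase showing that iterating these two one-variable coequalizers yields exactly the coequalizer of the diagonal reflexive pair $Uf\t Uf,\,Ug\t Ug:A\t A\to B\t B$ (which is indeed reflexive, with section $s\t s$). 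Reflexivity is essential here: it is precisely the common section that supplies the maps needed to identify the composite of the two coequalizers with the coequalizer of the diagonal pair. In particular $q\t q$ is a regular epimorphism.

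With this in hand I would define the multiplication by the universal property of the coequalizer $q\t q$. The morphism $q\o m_B:B\t B\to C$ coequalizes $Uf\t Uf$ and $Ug\t Ug$, since
\[
q\o m_B\o(Uf\t Uf)=q\o Uf\o m_A=q\o Ug\o m_A=q\o m_B\o(Ug\t Ug),
\]
using that $f,g$ are monoid morphisms and that $q\o Uf=q\o Ug$. Hence there is a unique $m_C:C\t C\to C$ with $m_C\o(q\t q)=q\o m_B$, and I set $i_C=q\o i_B$. By construction $q$ is then a $\D$-monoid morphism.

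Finally I would verify the monoid laws and the universal property by precomposing with suitable epimorphisms and cancelling them. Associativity and the unit laws for $(C,m_C,i_C)$ follow from the corresponding laws in $B$ after precomposing with the epimorphism $q\t q\t q:B\t B\t B\epito C\t C\t C$ (a composite of coequalizers, hence epi) and with $q$, respectively. For the universal property, given any $\D$-monoid morphism $h:B\to D$ with $h\o f=h\o g$, the underlying morphism factors as $Uh=k\o q$ for a unique $k:C\to D$ in $\D$; that $k$ is a $\D$-monoid morphism follows by precomposing the identities $k\o m_C\o(q\t q)=h\o m_B=m_D\o(h\t h)=m_D\o(k\t k)\o(q\t q)$ with the epimorphism $q\t q$ and cancelling, and similarly on units. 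This exhibits $(C,m_C,i_C)$ together with $q$ as the coequalizer in $\DMon$, so $U$ creates, and hence preserves, reflexive coequalizers.
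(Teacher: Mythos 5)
Your core argument is correct, and it takes a genuinely different route from the paper. The paper first invokes Lack's monadicity theorem to conclude that $\DMon \to \D$ is monadic with monad $\Id^\cst = \coprod_{n<\omega}\Id^{\cnum{n}}$, and then reduces the lemma to showing that this monad preserves reflexive coequalizers, which in turn reduces to each tensor power $\Id^{\cnum{n}}$ preserving them. You bypass monadicity and the free-monoid monad entirely, constructing the lifted coequalizer by hand. Both proofs rest on exactly the same key ingredient, namely that a bifunctor preserving reflexive coequalizers separately in each variable preserves them jointly (your $3\times 3$ argument; in the paper this is the cited Lemma~1 of Lack), applied to $\t$, whose one-variable colimit preservation comes from the closed structure. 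What your route buys: it is self-contained and never uses the countable-coproduct hypothesis (that hypothesis enters the paper's proof only through the existence of the free-monoid monad), so your creation argument works for monoids in any monoidal biclosed category. What the paper's route buys: brevity, plus the monadicity statement itself, which is of independent interest.

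There is, however, one flawed step: your opening claim that preservation follows because $\D$, ``being symmetric monoidal closed with countable coproducts, has the required reflexive coequalizers.'' Symmetric monoidal closedness together with countable coproducts does \emph{not} imply the existence of (reflexive) coequalizers in $\D$; no such theorem exists, and this existence is precisely what you need to pass from creation to preservation. Creation in the strict sense is conditional: it says that whenever the underlying pair has a coequalizer in $\D$, that coequalizer lifts uniquely to a coequalizer in $\DMon$; it says nothing about a reflexive pair whose coequalizer exists in $\DMon$ but whose underlying pair has no coequalizer in $\D$, which is exactly the configuration in which preservation could fail. To repair this, either add the existence of reflexive coequalizers in $\D$ as a hypothesis (harmless for the paper, since the lemma is applied only when $\D$ is a commutative variety, hence cocomplete), or state your conclusion as creation rather than preservation. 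To be fair, the paper's own proof (``it then follows that $U$ preserves (in fact, creates) them'') silently carries the same conditional caveat; the difference is that you commit to an explicit justification of the existence claim, and that justification is not valid.
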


\begin{notation}
  Let $(M, m, i)$ be a $\D$-monoid and $x: I \to M$. We write $x \bullet -$ and $- \bullet x$ for the following morphisms, respectively:
  \[
  M \cong I \t M \xrightarrow{x \t M} M \t M \xrightarrow{m} M
  \qquad
  \text{and}
  \qquad
  M \cong M \t I \xrightarrow{M \t x} M \t M \xrightarrow{m} M.
  \]
  Recall that in our setting, where $\D$ is a commutative variety, we have $I = \Psi 1$ and so the morphism $x$ is the adjoint transpose of an element of $M$ (see Remark~\ref{rem:bullet}). In the following we shall often write $x \bullet y$, identifying $x, y: I \to M$ with their corresponding elements of $M$. 
\end{notation}
\begin{definition}
  \label{d:syn}
  The \emph{syntactic congruence} of a language $L: \xs \to Y$ is the following relation on the underlying set of $\xs$:
  \[
  E = \{(u,v) \in \xs \times \xs \mid \forall x, y \in \xs : L(x\bullet u \bullet y) = L(x\bullet v\bullet y)\}
  \]
The projection maps are denoted by $l, r: E \to \xs$. 
\end{definition}
\begin{lemma}\label{lem:canalg}
  The set $E$ carries a canonical $\D$-algebraic structure making it a $\D$-object. 
\end{lemma}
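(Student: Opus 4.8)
\emph{Proof proposal.}
The plan is to exhibit $E$ as a subalgebra of the product $\xs\times\xs$ in $\D$, whose underlying set is the cartesian product $\under{\xs}\times\under{\xs}$ from Definition~\ref{d:syn}. Since $\D$ is a variety, every subalgebra of a $\D$-object is again a $\D$-object with operations computed componentwise, and then the projections $l,r\colon E\ra\xs$ are automatically $\D$-morphisms, being restrictions of the product projections $\pi_1,\pi_2\colon \xs\times\xs\ra\xs$. Thus it would suffice to prove that $E$ is closed under every operation of the signature of $\D$.

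The step on which everything hinges is the following: for each pair of elements $x,y\in\xs$, regarded as morphisms $x,y\colon I\ra\xs$, the assignment $u\mapsto x\bullet u\bullet y$ is a morphism of $\D$. This is because it is the composite
\[
\psi_{x,y}\;=\;\bigl(\xs \xra{-\bullet y}\xs \xra{x\bullet-}\xs\bigr)
\]
of the two $\D$-morphisms $x\bullet-$ and $-\bullet y$ introduced just before Definition~\ref{d:syn}, and by associativity this composite sends $u$ to $x\bullet(u\bullet y)=x\bullet u\bullet y$; here one uses precisely that the multiplication of $\xs$ is a bimorphism (Remark~\ref{rem:bullet}). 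Granting this, $g_{x,y}:=L\circ\psi_{x,y}\colon\xs\ra Y$ is a $\D$-morphism for every $x,y$, and I would then write
\[
E\;=\;\bigcap_{x,y\in\xs}E_{x,y},\qquad E_{x,y}=\{(u,v)\mid g_{x,y}(u)=g_{x,y}(v)\}.
\]
Each $E_{x,y}$ is the equalizer in $\D$ of $g_{x,y}\circ\pi_1$ and $g_{x,y}\circ\pi_2$, hence a subalgebra of $\xs\times\xs$; and an arbitrary intersection of subalgebras is again a subalgebra, so $E$ inherits the desired $\D$-structure, with $l$ and $r$ the restricted projections.

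The only explicit computation, if one prefers to check closure directly rather than through equalizers, is immediate from preservation of operations: for an $n$-ary operation $\sigma$ and pairs $(u_i,v_i)\in E$, applying the $\D$-morphisms $\psi_{x,y}$ and $L$ yields
\[
L\bigl(x\bullet\sigma(u_1,\dots,u_n)\bullet y\bigr)=\sigma^Y\bigl(L(x\bullet u_1\bullet y),\dots\bigr)=\sigma^Y\bigl(L(x\bullet v_1\bullet y),\dots\bigr)=L\bigl(x\bullet\sigma(v_1,\dots,v_n)\bullet y\bigr),
\]
so that $\sigma((u_1,v_1),\dots,(u_n,v_n))\in E$. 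I expect no genuine obstacle here: the sole delicate point is the verification that $\psi_{x,y}$ is a morphism of $\D$ rather than a mere function on underlying sets, and this is exactly where the bimorphism property of $\bullet$ (equivalently, the commutativity/entropicity of $\D$) enters. The remaining ingredients --- existence of equalizers and of arbitrary intersections of subalgebras, and their being subalgebras --- are standard for any variety of algebras.
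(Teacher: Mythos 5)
Your proposal is correct and follows essentially the same route as the paper: both express $E$ as the intersection $\bigcap_{x,y} E_{x,y}$, where each $E_{x,y}$ is the kernel (kernel pair/equalizer) of the $\D$-morphism $L\circ(-\bullet y)\circ(x\bullet -)$, and then invoke the fact that in a variety such limits of subalgebras of $\xs\times\xs$ are created on underlying sets. Your extra observations (the bimorphism property making $x\bullet-$ and $-\bullet y$ into $\D$-morphisms, and the explicit closure-under-operations check) are exactly the points the paper's one-line argument implicitly relies on.
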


\begin{proof}[Proof sketch] Just observe that $E = \bigcap E_{x,y}$ where for fixed $x,y\in\xs$ the object $E_{x,y}$ is the kernel of the $\D$-morphism
     $\xymatrix@1{\xs \ar[r]^-{x \bullet -} & \xs \ar[r]^-{-\bullet y} & \xs \ar[r]^-L & Y}$.
\end{proof}
That the name syntactic \emph{congruence} makes sense follows from Lemma~\ref{lem:mon} below. First recall that a \emph{$\D$-monoid congruence} on a given $\D$-monoid $M$ is an equivalence relation in $\DMon$, that is, a jointly monic pair $c_1, c_2: C \to M$ of $\D$-monoid morphisms (equivalently a $\D$-submonoid $\langle c_1, c_2\rangle: C \monoto M \times M$) which is reflexive, symmetric and transitive. Congruences on $M$ are ordered as subobjects of $M\times M$, i.e.~via inclusion.
\begin{lemma}
  \label{lem:mon}
  $E$ is a $\D$-monoid congruence on $\xs$.
\end{lemma}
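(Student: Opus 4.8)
The plan is to verify the three congruence properties---reflexivity, symmetry, and transitivity---for the relation $E$, now regarded as a pair of $\D$-monoid morphisms rather than mere set maps. By Lemma~\ref{lem:canalg} we already know that $E$ carries a canonical $\D$-algebraic structure, realized as the intersection $E = \bigcap_{x,y} E_{x,y}$ of kernels. What remains is to upgrade this to the statement that $\langle l,r\rangle: E \monoto \xs\times\xs$ is a \emph{$\D$-monoid} congruence, i.e.\ that $E$ is a $\D$-submonoid of $\xs\times\xs$ which is additionally reflexive, symmetric, and transitive as an equivalence relation.

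First I would establish that $E$ is closed under the monoid structure of $\xs\times\xs$, so that $l,r$ become $\D$-monoid morphisms. Concretely, I would check that the unit $(\epsilon,\epsilon)$ lies in $E$ (trivially, since $L(x\bullet \epsilon\bullet y) = L(x\bullet\epsilon\bullet y)$), and that if $(u,v)\in E$ and $(u',v')\in E$ then $(u\bullet u', v\bullet v')\in E$. The latter is the key computation: given $x,y\in\xs$ we want $L(x\bullet (u\bullet u')\bullet y) = L(x\bullet(v\bullet v')\bullet y)$. Using associativity, $x\bullet u\bullet u'\bullet y = (x)\bullet u\bullet(u'\bullet y)$, so applying $(u,v)\in E$ with context $x$ and $u'\bullet y$ gives equality with $x\bullet v\bullet u'\bullet y$; then rewriting as $(x\bullet v)\bullet u'\bullet (y)$ and applying $(u',v')\in E$ with context $x\bullet v$ and $y$ gives equality with $x\bullet v\bullet v'\bullet y$. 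This two-step substitution is the classical argument that the syntactic relation respects concatenation, and it transfers verbatim here since $\bullet$ is the $\D$-monoid multiplication.

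The properties of being reflexive, symmetric, and transitive as an equivalence relation follow immediately from the defining formula, since it is a conjunction of equalities $L(x\bullet u\bullet y) = L(x\bullet v\bullet y)$ quantified over all $x,y$; reflexivity and symmetry are evident, and transitivity follows by composing equalities. I would note that $\langle l,r\rangle$ is jointly monic because $E$ was constructed as a subobject of $\xs\times\xs$ in Lemma~\ref{lem:canalg}, so the equivalence relation is genuinely a relation in $\DMon$ once we know $l,r$ are $\D$-monoid morphisms.

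The main obstacle is subtle rather than computational: one must ensure that the set-theoretic closure argument above is compatible with the $\D$-algebraic structure carried by $E$, i.e.\ that the maps $l,r$ are not merely functions preserving the monoid operation but genuine morphisms of $\DCat$ that also respect multiplication, making them $\D$-monoid morphisms. Since $E$ is a subobject of $\xs\times\xs$ in $\DCat$ and the multiplication on $\xs\times\xs$ is a $\D$-morphism (indeed a bimorphism), the content reduces to checking that $E$ is closed under this multiplication \emph{as a subobject}; the pointwise argument establishes closure on underlying sets, and because the inclusion $E\monoto\xs\times\xs$ reflects the algebraic structure, this suffices to make $E$ a $\D$-submonoid. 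Thus the genuine work lies in the concatenation-closure computation, and the categorical packaging follows formally from the variety structure and the fact that $\DMon\to\DCat$ creates the relevant (mono)factorizations.
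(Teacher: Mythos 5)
Your proposal is correct and follows essentially the same route as the paper: the paper's proof likewise reduces the claim to closure of $E$ under the monoid structure of $\xs$, verifies $(\epsilon,\epsilon)\in E$, and performs exactly your two-step substitution $L(x\bullet u\bullet u'\bullet y) = L(x\bullet v\bullet u'\bullet y) = L(x\bullet v\bullet v'\bullet y)$, with the equivalence-relation properties and the $\D$-algebraic packaging (via Lemma~\ref{lem:canalg}) left implicit just as you observe they are routine.
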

We can give an alternative, more conceptual, description of $E$:

\begin{lemma}
  \label{lem:ker}
  Let $l_0, r_0: K \to  \xs$ be the kernel pair of $L:\xs \to Y$ in $\D$. Then $l, r: E \to \xs$ is the \emph{largest} $\D$-monoid congruence contained in $K$. 
\end{lemma}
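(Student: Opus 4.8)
The plan is to lean on the fact, already established in Lemma~\ref{lem:mon}, that $E$ \emph{is} a $\D$-monoid congruence on $\xs$; what then remains is to verify two inclusions of subobjects of $\xs\times\xs$: that $E\subseteq K$, and that $E$ contains every $\D$-monoid congruence contained in $K$. Since $\D$ is a commutative variety, its subobjects are exactly the subsets of the underlying set closed under the operations, and $\D$-monoid congruences on $\xs$ are precisely the reflexive, symmetric, transitive $\D$-submonoids of the product monoid $\xs\times\xs$. I would therefore argue throughout on the level of elements, as licensed by Remark~\ref{rem:bullet}, identifying $K$ with $\{(u,v)\in\xs\times\xs \mid L(u)=L(v)\}$ and $E$ with the relation of Definition~\ref{d:syn}.

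For the containment $E\subseteq K$ I would simply instantiate the defining condition of $E$ at $x=y=\epsilon$, the unit of $\xs$ (see Proposition~\ref{prop:freemon2}): if $(u,v)\in E$ then $L(u)=L(\epsilon\bullet u\bullet\epsilon)=L(\epsilon\bullet v\bullet\epsilon)=L(v)$, so $(u,v)\in K$. As an inclusion of subobjects of $\xs\times\xs$ this is all that is required.

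For maximality, let $c_1,c_2\colon C\to\xs$ be any $\D$-monoid congruence with $C\subseteq K$, and take $(u,v)\in C$. The key observation is that $C$, being a $\D$-submonoid of $\xs\times\xs$, is closed under the componentwise multiplication and, being reflexive, contains every diagonal pair $(x,x)$. Hence for all $x,y\in\xs$,
\[
(x,x)\cdot(u,v)\cdot(y,y) = (x\bullet u\bullet y,\; x\bullet v\bullet y)\in C\subseteq K,
\]
so that $L(x\bullet u\bullet y)=L(x\bullet v\bullet y)$. Since $x,y$ were arbitrary, $(u,v)\in E$, and therefore $C\subseteq E$, which is exactly the claimed universal property.

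The argument is short, and the only point requiring care is the passage between the categorical and element-wise pictures: one must check that reflexivity of $C$ genuinely yields all diagonal pairs and that the multiplication inherited by $C$ from $\xs\times\xs$ is computed componentwise. Both are immediate from the definition of a $\D$-monoid congruence as a reflexive $\D$-submonoid of $\xs\times\xs$, so I expect no real obstacle here; the conceptual content of the statement resides entirely in Lemma~\ref{lem:mon}, which we are assuming.
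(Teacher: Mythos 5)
Your proof is correct and is essentially the paper's own argument: the paper establishes maximality by showing that any $\D$-monoid congruence contained in $K$ lands in each kernel $E_{x,y}$, using reflexivity and the homomorphism property of the projections to commute translation by $(x,x)$ and $(y,y)$ past them --- exactly your conjugation step $(x,x)\cdot(u,v)\cdot(y,y)\in C\subseteq K$, just phrased with commuting squares instead of elements (a harmless difference, since the forgetful functor to $\Set$ creates the relevant limits). The only addition on your side is that you spell out the containment $E\subseteq K$ via $x=y=\epsilon$, which the paper leaves implicit (it follows from $E\subseteq E_{\epsilon,\epsilon}=K$).
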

\begin{construction}
  \label{c:syn}
  Let $L: \xs \to Y$ be a language and $l, r: E \to \xs$ its syntactic congruence. We construct the $\D$-monoid $\Syn{L}$ as the coequalizer of $l$ and $r$ in $\DMon$:
  \[
  \xymatrix{
    E 
    \ar@<3pt>[r]^-{l}
    \ar@<-3pt>[r]_-{r}
    &
    \xs
    \ar@{->>}[r]^-{e_L}
    &
    \Syn L.
    }
  \]
\end{construction}

We need to show that $\Syn{L}$ has the universal property of Definition \ref{def:syn}, which first requires to define the morphism $f_L: \Syn{L}\ra Y$ with $L=f_L\o e_L$. To this end consider the diagram below, where $l_0$, $r_0$ is the kernel pair of $L$ and $m$ witnesses that $E$ is contained in $K$, i.e.~$l=l_0\o m$ and $r=r_0\o m$ (see Lemma \ref{lem:ker}).
\[ 
\xymatrix{
K \ar@<0.5ex>[r]^{l_0} \ar@<-0.5ex>[r]_{r_0} & \xs \ar@{->>}[dr]_{e_L} \ar[r]^L & Y\\
E \ar[u]^m \ar@<0.5ex>[ur]^{l} \ar@<-0.5ex>[ur]_{r} & & \Syn{L} \ar@{-->}[u]_{f_L}
}
\]
  By Lemma \ref{lem:monadic} the morphism $e_L$ is also a coequalizer of $l$ and $r$ in $\D$. Since $L\o l = L\o r$ by the above diagram, this yields a unique $f_L: \Syn L \to Y$ with $L = f_L \o e_L$. In other words, $\Syn L$ recognizes $L$ via $f_L$.

\takeout{ 
\begin{proposition}
  The functor $\Psi: \Set \to \D$ preserves monoids.
\end{proposition}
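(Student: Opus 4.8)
The plan is to exhibit $\Psi$ as a strong monoidal functor from $(\Set,\times,1)$ to $(\D,\t,I)$ and then invoke the standard fact that any lax monoidal functor sends monoid objects to monoid objects. Explicitly, given an ordinary monoid $(M,\cdot,e)$ --- i.e.\ a monoid in $\Set$ --- the induced $\D$-monoid structure on $\Psi M$ will have multiplication $\Psi M\t\Psi M\cong\Psi(M\times M)\xra{\Psi(\cdot)}\Psi M$ and unit $I=\Psi 1\xra{\Psi e}\Psi M$, where $e\colon 1\to M$ names the neutral element. The monoid laws for $\Psi M$ then follow from those of $M$ together with naturality of the comparison isomorphisms, and the assignment is visibly functorial, yielding a lift $\Mon{\Set}\to\DMon$.

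The monoidal structure itself I would obtain via a doctrinal adjunction argument. The forgetful functor $\under{\mathord{-}}\colon\D\to\Set$ is lax monoidal: its structure maps are precisely the universal bimorphisms $t\colon\under{A}\times\under{B}\to\under{A\t B}$ from the definition of the tensor product, together with the universal map $1\to\under{\Psi 1}=\under I$. Since $\Psi\dashv\under{\mathord{-}}$, taking mates turns this into an oplax monoidal structure on $\Psi$, with comparison morphisms $\Psi(A\times B)\to\Psi A\t\Psi B$; crucially, in this framework the coherence conditions for $\Psi$ hold automatically, being mates of the coherence conditions for $\under{\mathord{-}}$. It then remains only to check that these comparison morphisms are \emph{isomorphisms}, so that $\Psi$ is in fact strong (hence lax) monoidal.

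This last point is the real content, and I would verify it by a Yoneda computation. For every $C\in\D$ there is a chain of bijections natural in $C$,
\[
\begin{array}{rcl}
\D(\Psi A\t\Psi B,\,C) &\cong& \D(\Psi A,\,[\Psi B,C])\\
&\cong& \Set(A,\,\under{[\Psi B,C]})\\
&=& \Set(A,\,\D(\Psi B,C))\\
&\cong& \Set(A,\,\Set(B,\under C))\\
&\cong& \Set(A\times B,\under C)\\
&\cong& \D(\Psi(A\times B),\,C),
\end{array}
\]
using in turn closedness of $\D$, freeness of $\Psi A$, the identity $\under{[\Psi B,C]}=\D(\Psi B,C)$, freeness of $\Psi B$, cartesian closedness of $\Set$, and freeness of $\Psi(A\times B)$. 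By the Yoneda lemma this exhibits $\Psi A\t\Psi B\cong\Psi(A\times B)$, and one checks that the isomorphism so produced is exactly the comparison map from the previous paragraph. The main obstacle I anticipate is precisely this last identification --- matching the abstractly defined mate with the concretely constructed Yoneda isomorphism --- but this is a routine check, since both are induced from the same universal bimorphism; alternatively, one may bypass doctrinal adjunctions entirely and verify coherence of the Yoneda isomorphism directly, at the cost of a short diagram chase.
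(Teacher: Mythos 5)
Your proof is correct and takes essentially the same route as the paper: the paper also establishes that $\Psi$ is strong monoidal, with $\Psi 1 = I$ and $\Psi(A\times B)\cong \Psi A\t\Psi B$ obtained from exactly your chain of natural bijections $\D(\Psi A\t\Psi B,C)\cong\dots\cong\D(\Psi(A\times B),C)$ plus Yoneda, and then invokes preservation of monoid objects by strong monoidal functors. The only difference is one of emphasis: you make explicit the coherence of the comparison isomorphisms (via doctrinal adjunction, or a direct check), which the paper subsumes under the phrase ``canonical isomorphism''.
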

\begin{proof}
  This follows from the fact that $\Psi$ is strongly monoidal (i.e.~it preserves monoidal product up to canonical isormorphism). Indeed, we have that $\Psi 1$ is the unit of the monoidal product on $\D$, and $\Psi(X \times Y) \cong \Psi X \t \Psi Y$ holds because of the following natural bijections:
  \[
  \begin{array}[b]{c}
    \Psi X \t \Psi Y \to C \\
    \hline
    \Psi X \to [\Psi Y, C] \\
    \hline
    X \to \D(\Psi Y C) \\
    \hline
    X \to \Set(Y , |C|) \\
    \hline
    X\times Y \to |C| \\
    \hline
    \Psi(X \times Y) \to C
  \end{array}\qedhere
  \]
\end{proof}} 

\begin{theorem}
  \label{thm:syn}
  $\Syn L$ together with $e_L$ and $f_L$ forms the syntactic $\D$-monoid of $L$. 
\end{theorem}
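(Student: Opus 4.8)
The plan is to verify the two conditions spelled out after Definition~\ref{def:syn}. Condition~(i)---that $e_L$ recognizes $L$ via $f_L$---has in fact already been discharged in the paragraph following Construction~\ref{c:syn}: since $e_L$ is a coequalizer of $l,r$ in $\DMon$ and hence, by Lemma~\ref{lem:monadic}, also a coequalizer in $\D$, the identity $L\o l = L\o r$ forces a unique $f_L$ with $L = f_L\o e_L$. So the real content of the theorem is condition~(ii): every $X$-generated $\D$-monoid $e\colon \xs\epito M$ recognizing $L$ via some $f\colon M\to Y$ must satisfy $e_L\leq e$, i.e.\ $e_L$ factors through $e$.

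To establish this I would pass to kernel pairs. Let $p,q\colon P\to\xs$ be the kernel pair of $e$ in $\DMon$. Because $\DMon$ is a variety, the surjective homomorphism $e$ is a regular epimorphism, hence the coequalizer of $p,q$, and $P$ is a $\D$-monoid congruence on $\xs$. The crucial observation is that $P$ is contained in the kernel pair $K$ of $L$: whenever $e(u)=e(v)$ we get $L(u)=f(e(u))=f(e(v))=L(v)$, so $(u,v)\in K$. Thus $P$ is a $\D$-monoid congruence contained in $K$, and Lemma~\ref{lem:ker} identifies $E$ as the \emph{largest} such congruence; therefore $P\seq E$, so $p,q$ factor through the inclusion $E\monoto\xs\times\xs$ via some $j$ with $l\o j=p$ and $r\o j=q$.

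The universal property then finishes the argument. Since $e_L$ coequalizes $l$ and $r$, it also coequalizes $p=l\o j$ and $q=r\o j$. As $e$ is the coequalizer of $p,q$ in $\DMon$, there is a unique $\D$-monoid morphism $h\colon M\to\Syn L$ with $h\o e=e_L$, which is exactly the factorization witnessing $e_L\leq e$. That $\Syn L$ is itself $X$-generated and recognizes $L$ is immediate from the construction, and uniqueness up to isomorphism follows from the general uniqueness of a smallest element.

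The main obstacle is the comparison of congruences, which rests entirely on Lemma~\ref{lem:ker}: one needs that $E$ is not merely \emph{some} $\D$-monoid congruence inside $K$ but the largest one, so that the kernel pair of an arbitrary recognizing quotient is automatically dominated by it. The remaining ingredients---that surjections in a variety are regular epimorphisms and hence coequalizers of their kernel pairs, together with the coequalizer universal property---are standard, but I would be careful to carry them out in $\DMon$ rather than in $\D$, keeping the two levels in step via the construction of $f_L$ in condition~(i).
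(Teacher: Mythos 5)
Your proposal is correct and follows essentially the same route as the paper's own proof: both take the kernel pair of the recognizing quotient $e$, show it is a $\D$-monoid congruence contained in the kernel pair of $L$, invoke Lemma~\ref{lem:ker} to conclude it lies inside $E$, and then use that $e$ is the coequalizer of its kernel pair to obtain the factorization $h$ with $h\o e=e_L$. The only cosmetic difference is that you verify the containment in $K$ element-wise while the paper phrases it arrow-theoretically via the universal property of the kernel pair.
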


\begin{proof}[Proof sketch]
This follows from the correspondence between kernel pairs and regular quotients: since $l,r: E\ra \xs$ is the largest congruence contained in the kernel pair of $L$ by Lemma \ref{lem:ker}, the coequalizer $e_L$ of $l,r$ is the smallest quotient of $\xs$ recognizing $L$.
\end{proof}
\begin{rem}
Our proof of Theorem~\ref{thm:syn} is quite conceptual and works in a general symmetric monoidal closed category $\D$ with enough structure. On this level of generality one would use Lemma~\ref{lem:ker} to \emph{define} the syntactic congruence $E$ as the largest $\D$-monoid congruence contained in the kernel of $L: \xs \to Y$. However, it is unclear whether such a congruence exists in this generality and so its existence might have to be taken as an assumption. Hence we restricted ourselves to the setting of a commutative variety $\D$.
\end{rem}

\begin{example} Using the notation of Example $\ref{ex:language}$ we obtain the following concrete syntactic algebras:
  \begin{enumerate}
  \item In $\PSet$ with $X=X_0+\{\bot\}$ and $Y=\{\bot,1\}$ the \emph{syntactic monoid with zero} of a language $L_0\seq X_0^*$ is $(X_0^*+\{\bot\}) /\mathord{\sim}$ where, for all $u,v\in X_0^*+\{\bot\}$,
  \[ u\sim v \quad\text{iff}\quad \text{for all $x,y\in X_0^*$}: xuy\in L_0 \Leftrightarrow xvy \in L_0.\]
 The zero element is the congruence class of $\bot$.
  \item In $\Inv$ with $X = X_0 + \wt{X_0}$ and $Y = \{0,1\}$ the \emph{syntactic involution monoid} of a language $L_0\seq X_0^*$ is the quotient of $X_0 + \wt{X_0^*}$ modulo the congruence $\sim$ defined for words $u,v \in X_0^*$ as follows:
  \begin{enumerate}[(i)]
  \item $u \sim v\quad\text{iff}\quad\wt u \sim \wt v\quad\text{iff}\quad\text{for all $x, y \in X_0^*$}: xuy \in L_0 \iff xvy \in L_0$;
\item
$u \sim \wt v\quad\text{iff}\quad\wt u \sim v \quad\text{iff}\quad\text{for all $x, y \in X_0^*$}: xuy \in L_0 \iff xvy \not\in L_0$.
\end{enumerate}
\item In $\SMod{\S}$  with $X = \Psi X_0$ and $Y = \S$ the \emph{syntactic associative $\S$-algebra} of a weighted language $L_0: X_0^*\ra\S$ is the quotient of $\S[X_0]$ modulo the congruence defined for $U,V \in \S[X_0]$ as follows: 
\begin{equation}
  \label{eq:syns}
  U \sim V\quad \text{iff} \quad \text{for all $x, y \in X_0^*$}: L(xUy) = L(xVy)
\end{equation}
 Indeed, since $L:\S[X_0]\ra \S$ is linear, \refeq{eq:syns} implies $L(PUQ) = L(PVQ)$ for all $P, Q \in \S[X_0]$, which is the syntactic congruence of Definition \ref{d:syn}.
\item In particular, for $\D = \JSL$ with $X=\Pow_f X_0$ and $Y=\{0,1\}$, we get  the \emph{syntactic (idempotent) semiring} of a language $L_0 \subseteq X_0^*$ introduced by Pol\'ak \cite{polak01}: it is the quotient $\Pow_f X_0^*/\mathord{\sim}$ where for $U,V\in\Pow_f X_0^*$ we have 
\[
U \sim V\quad\text{iff}\quad\text{for all $x, y \in X_0^*$}: (xUy) \cap L_0 \neq \emptyset \iff xVy \cap L_0 \neq \emptyset.
\]
\item For $\D = \Vect{\K}$ with $X=\Psi X_0$ and $Y=\K$, the \emph{syntactic $\K$-algebra} of a $\K$-weighted language $L_0: X_0^*\ra \K$ is the quotient $\K[X_0]/I$ of the $\K$-algebra of finite weighted languages modulo the ideal
\[
I = \{ V \in \K[X_0] \mid \text{for all $x,y \in X_0^*$}: L(xVy) = 0 \}.
\]
 Indeed, the congruence this ideal $I$ generates ($U\sim V$ iff $U-V \in I$) is precisely~\refeq{eq:syns}. Syntactic $\K$-algebras were studied by Reutenauer~\cite{reu80}.
\item Analogously, for $\D = \Ab$ with $X=\Psi X_0$ and $Y=\Int$, the \emph{syntactic ring} of a $\Int$-weighted language $L_0: X_0^*\ra \Int$ is the quotient $\Int[X_0]/I$, where $I$ is the ideal of all $V\in \Int[X_0]$ with $L(xVy) = 0$ for all $x, y \in X_0^*$.
\end{enumerate}
\end{example}

\section{Transition $\boldsymbol{\D}$-Monoids}
\label{sec:tran}

Here we present another construction of the syntactic $\D$-monoid of a language: it is the transition $\D$-monoid of the minimal $\D$-automaton for this language. Recall that for any object $Q$ of a closed monoidal category $\DCat$, the object $[Q,Q]$ forms a $\D$-monoid w.r.t. composition.
\begin{definition}
  \label{def:TA}
  The \emph{transition $\D$-monoid} $\T{Q}$ of an $F$-algebra $(Q,\delta, i)$ is the image of the $\D$-monoid morphism $(\lambda\delta)^+:\xs \ra [Q,Q]$ extending $\lambda\delta: X \to [Q,Q]$:
  \[
  \xymatrix@=8pt{
    \xs \ar@{->>}[dr]_{e_{\T Q}} \ar[rr]^{(\lambda\delta)^+} & & [Q,Q] \\
    & \T Q \ar@{ >->}[ur]_{m_{\T Q}} 
  }
  \]
\end{definition}

\begin{example}
\begin{enumerate}
\item In $\Set$ the transition monoid of an $F$-algebra $Q$ (i.e.~an automaton without final states)  is the monoid of all extended transition maps $\delta_w = \delta_{a_n}\o \cdots\o \delta_{a_1}: Q\ra Q$ for $w=a_1\cdots a_n\in X^*$, with unit $\id_Q = \delta_\epsilon$ and composition as multiplication.
\item In $\PSet$ with $X=X_0+\{\bot\}$ (the setting for partial automata) this is completely analogous, except that we add the constant endomap of $Q$ with value $\bot$.
\item In $\Inv$ with $X=X_0+\tl{X_0}$ we get the involution monoid of all $\delta_w$ and $\tl{\delta_w}$. Again the unit is $\delta_\epsilon$, and the multiplication is determined by composition plus the equations $x\tl y = \tl{xy} = \tl x y$.
\item In $\JSL$ with $X=\Pow_f X_0$ the transition semiring consists of all finite joins of extended transitions, i.e.~all semilattice homomorphisms of the form
$\delta_{w_1}\vee\cdots\vee \delta_{w_n}$ for  $\{w_1,\ldots,w_n\}\in\Pow_f X_0^*$.
The transition semiring was introduced by Pol\'ak \cite{polak01}.
\item In $\SMod{\S}$ with $X=\Psi X_0$ the associative transition algebra  consists of all linear maps of the form
$\sum_{i=1}^n s_i \delta_{w_i}$ with $s_i\in \S$ and  $w_i\in X_0^*$.
\end{enumerate}
\end{example}

Recall from Definition~\ref{def:aut} that a $\D$-automaton is an $F$-algebra $Q$ together with an output morphism $f: Q \to Y$. Hence we can speak of the transition $\D$-monoid of a $\D$-automaton.
\begin{proposition}\label{prop:transmon}
The language accepted by a $\D$-automaton $(Q,\delta,f,i)$ is recognized by the $\DCat$-monoid morphism $e_{\T{Q}}: \xs\epito \T{Q}$.
\end{proposition}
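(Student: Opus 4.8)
The plan is to verify Definition~\ref{def:recog} directly, i.e.\ to exhibit a morphism $g: \T Q \ra Y$ of $\D$ with $L_Q = g \o e_{\T Q}$. Note first that $e_{\T Q}: \xs \epito \T Q$ is indeed an $X$-generated $\D$-monoid, arising from the image factorization of $(\lambda\delta)^+$ in $\DMon$ (Definition~\ref{def:TA}), so that the word ``recognizes'' applies. The two facts I will combine are the explicit description of the accepted language, $L_Q = f \o e_Q$ with
\[
e_Q = (\xs \cong \xs \t I \xra{(\lambda\delta)^+ \t i} [Q,Q] \t Q \xra{\ev} Q)
\]
from Remark~\ref{rem:concept}, and the defining factorization $(\lambda\delta)^+ = m_{\T Q} \o e_{\T Q}$.

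The idea is simply to substitute the factorization of $(\lambda\delta)^+$ into the formula for $e_Q$ and then pull $e_{\T Q}$ out to the right, so that $e_Q$ visibly factors through it. Concretely, I would rewrite $(\lambda\delta)^+ \t i = (m_{\T Q} \t Q) \o (\T Q \t i) \o (e_{\T Q} \t I)$ using functoriality of $\t$, and then apply naturality of the right unitor $(\mathord{-}) \cong (\mathord{-}) \t I$ to slide $e_{\T Q}$ across it. This yields
\[
e_Q = \big(\xs \xra{e_{\T Q}} \T Q \cong \T Q \t I \xra{\T Q \t i} \T Q \t Q \xra{m_{\T Q} \t Q} [Q,Q] \t Q \xra{\ev} Q\big),
\]
so that $e_Q = r \o e_{\T Q}$, where $r: \T Q \ra Q$ denotes the composite following $e_{\T Q}$. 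Setting $g = f \o r$ then gives $L_Q = f \o e_Q = g \o e_{\T Q}$, as required.

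The morphism $g$ has a transparent meaning when $\D = \Set$: it sends a transition $t \in \T Q \seq [Q,Q]$ to $f(t(i))$, i.e.\ it applies the transition to the initial state and reads off the output, and $L_Q = g \o e_{\T Q}$ then recovers the familiar fact that membership of a word $w$ depends only on the state transformation $\delta_w$ it induces. I do not anticipate a genuine obstacle: the only thing to check is that $g$ is a morphism of $\D$, and Definition~\ref{def:recog} asks for exactly that (not a $\D$-monoid morphism), which is automatic since $g$ is a composite of $\D$-morphisms ($\ev$, $m_{\T Q} \t Q$, $\T Q \t i$, the unitor, and $f$). The only real care required is the coherence bookkeeping — functoriality of $\t$ and naturality of the unitor — needed to relocate $e_{\T Q}$ to the outermost position.
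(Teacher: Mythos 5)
Your proposal is correct and is essentially the paper's own argument: the paper likewise takes the witness morphism $f_{\T Q} = \big(\T Q \xra{m_{\T Q}} [Q,Q] \cong [Q,Q]\t I \xra{[Q,Q]\t i} [Q,Q]\t Q \xra{\ev} Q \xra{f} Y\big)$, which coincides with your $g = f \o r$ after one application of naturality of the unitor and functoriality of $\t$, and verifies $L_Q = f_{\T Q}\o e_{\T Q}$ by the same substitution of $(\lambda\delta)^+ = m_{\T Q}\o e_{\T Q}$ into the formula for $e_Q$ from Remark~\ref{rem:concept}. The only cosmetic difference is the order of the rewriting steps (the paper first slides $(\lambda\delta)^+$ across the unitor and then factorizes; you factorize first), which changes nothing of substance.
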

\begin{proof}[Proof sketch]
The desired morphism $f_{\T{Q}}: \T{Q}\ra Y$ with $L_Q=f_{\T{Q}}\o e_{\T{Q}}$ is
\[ f_{\T Q} = (\T Q \xra{m_{\T Q}} [Q,Q]\cong [Q,Q]\t I \xra{[Q,Q]\t i} [Q,Q]\t Q \xra{\ev} Q \xra{f} Y).\qedhere \]
\end{proof}

\begin{definition}\label{def:minaut}
A $\D$-automaton $(Q,\delta ,i, f)$ is called \emph{minimal} iff it is
\begin{enumerate}[(a)]
\item \emph{reachable}: the unique $F$-algebra homomorphism $\xs\ra Q$ is surjective;
\item \emph{simple}: the unique $T$-coalgebra homomorphism $Q\ra [\xs, Y]$ is injective.
\end{enumerate}
\end{definition}

\begin{theorem}[Goguen \cite{goguen75}]\label{thm:minaut}
Every language $L: \xs \ra Y$ is accepted by a minimal $\DCat$-automaton $\Min{L}$, unique up to isomorphism. Given any reachable automaton $Q$ accepting $L$, there is a unique surjective automata homomorphism from $Q$ into $\Min{L}$.
\end{theorem}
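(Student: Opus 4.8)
The plan is to construct $\Min L$ by factoring the canonical map from the initial algebra into the final coalgebra. View $\xs$ as a $\D$-automaton with the free transition $\delta_X$, initial state $i_X$ (Proposition~\ref{prop:freemon}) and output $L$; since $\xs$ is the initial $F$-algebra (Proposition~\ref{prop:initalg}), this automaton is reachable, and by Proposition~\ref{prop:fincoalg} there is a unique $T$-coalgebra homomorphism $c_{\xs}: \xs \ra \fc$ into the final coalgebra. Using the (surjective, injective) factorization system of the variety $\D$ (Remark~\ref{rem:Lalg}), I factor $c_{\xs} = (\xs \xra{e} \Min L \xra{m} \fc)$ with $e$ surjective and $m$ injective, and take the middle object as $\Min L$.

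The first main step is to equip $\Min L$ with a $\D$-automaton structure so that both $e$ and $m$ become automaton homomorphisms. For the coalgebra part I use a diagonal fill-in: since $TQ = Y\times[X,Q]$ and $[X,-]$ is a right adjoint (to $X\t -$), the functor $T$ preserves monomorphisms, so $Tm$ is monic. The square formed by $e$, $Tm$ and the coalgebra structures of $\xs$ and $\fc$ commutes because $c_{\xs}$ is a coalgebra homomorphism; the unique diagonal then yields a coalgebra structure on $\Min L$ for which $e$ and $m$ are coalgebra homomorphisms, giving a transition $\delta_{\Min L}$ and output $f_{\Min L}$. I define the initial state by $i_{\Min L} = e\o i_X$, so that $(\Min L, \delta_{\Min L}, i_{\Min L}, f_{\Min L})$ is a $\D$-automaton. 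Then $e$ preserves transitions (being a coalgebra homomorphism) and the initial state (by definition), and preserves outputs since $f_{\Min L}\o e = f_{\fc}\o m\o e = f_{\fc}\o c_{\xs} = L$; hence $e$ is an automaton homomorphism, and dually so is $m$. That $\Min L$ accepts $L$ follows from Proposition~\ref{prop:lang}: its behaviour uncurries $m\o i_{\Min L} = c_{\xs}\o i_X$, which is exactly the point of $\fc$ picked out by $L$. Finally $\Min L$ is reachable because $e$ is surjective and simple because $m = c_{\Min L}$ is injective, so $\Min L$ is minimal.

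For the universal property, let $Q$ be any reachable automaton accepting $L$. The unique $F$-algebra homomorphism $e_Q: \xs \ra Q$ is surjective and, since $f_Q\o e_Q = L_Q = L$, it is an automaton homomorphism, hence a coalgebra homomorphism; by finality $c_Q\o e_Q = c_{\xs} = m\o e$. A second diagonal fill-in, now with $e_Q$ surjective and $m$ injective, produces a unique $d: Q \ra \Min L$ with $d\o e_Q = e$ and $m\o d = c_Q$. This $d$ is surjective (as $e$ is), a coalgebra homomorphism (cancelling the monic $m$ in $m\o d = c_Q$, using that $T$ preserves monos), and it preserves the initial state via $d\o i_Q = d\o e_Q\o i_X = e\o i_X = i_{\Min L}$, so it is an automaton homomorphism. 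Its uniqueness follows because any surjective automaton homomorphism $Q\ra\Min L$ precomposed with $e_Q$ is an $F$-algebra homomorphism out of the initial algebra, hence equals $e$, after which surjectivity of $e_Q$ cancels. For uniqueness of $\Min L$ up to isomorphism, apply this to any minimal automaton $M'$ accepting $L$: the resulting surjection $h: M'\ra\Min L$ is an automaton homomorphism, so $m\o h = c_{M'}$, which is injective because $M'$ is simple; thus $h$ is both injective and surjective, hence an isomorphism in the variety $\D$.

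The main obstacle is the transfer of the coalgebra structure across the factorization in the first step: everything hinges on $T = Y\times[X,-]$ preserving monomorphisms so that the fill-in applies, and on checking that the single coalgebra structure obtained there, together with $i_{\Min L} = e\o i_X$, genuinely assembles into a consistent $\D$-automaton rather than unrelated $F$-algebra and $T$-coalgebra structures. Once this compatibility is in place and $e, m$ are shown to be automaton homomorphisms, the reachability, simplicity, and universal-property claims reduce to two further fill-in arguments together with the initiality and finality already available.
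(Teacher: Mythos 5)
Your proof is correct. Note that the paper itself gives no proof of this theorem: it is quoted as a known result of Goguen \cite{goguen75}, and the appendix contains no argument for it. Your construction --- factoring the unique arrow from the initial $F$-algebra $\xs$ (viewed as an automaton with output $L$) into the final $T$-coalgebra $\fc$ through its image, transferring the coalgebra structure along the diagonal fill-in (using that $T = Y\times[X,-]$ preserves monos, as $[X,-]$ is a right adjoint), and then deriving reachability, simplicity and the universal property from initiality, finality and a second fill-in --- is exactly the standard image-factorization argument underlying Goguen's theorem, and all the steps you need (the (surjective, injective) factorization system of the variety, the equivalences in Remark~\ref{rem:algcoalg} between automata homomorphisms, $F$-algebra homomorphisms preserving outputs, and $T$-coalgebra homomorphisms preserving initial states, and the fact that bijective homomorphisms in a variety are isomorphisms) are available in the paper's setting. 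The one point worth making explicit, which you flag yourself, is that the automaton structure on $\Min L$ is assembled from the coalgebra structure obtained by fill-in together with the separately defined initial state $e \o i_X$; since Definition~\ref{def:aut} imposes no compatibility axioms between $\delta$, $i$ and $f$, nothing further needs to be verified there.
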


This leads to the announced construction of syntactic $\DCat$-monoids via  transition $\DCat$-monoids. The case $\DCat=\Set$ is a standard result of algebraic automata theory (see e.g. Pin \cite{pin15}), and the case $\DCat=\JSL$ is due to Pol\'ak \cite{polak01}.
\begin{theorem}\label{thm:tran}
The syntactic $\D$-monoid of a language $L: \xs\ra Y$ is isomorphic to the transition $\D$-monoid of its minimal $\D$-automaton:
\[ \Syn{L} \cong \T{\Min{L}}. \]
\end{theorem}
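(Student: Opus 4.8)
The plan is to identify both $\Syn L$ and $\T{\Min L}$ with the same quotient of $\xs$, namely $\xs/E$ where $E$ is the syntactic congruence of Definition~\ref{d:syn}. By Construction~\ref{c:syn} the morphism $e_L\colon \xs\epito\Syn L$ is precisely the coequalizer of the congruence $E$, so (as $\DMon$ is a variety, hence Barr-exact) its kernel pair in $\DMon$ is $E$. Thus the theorem reduces to the claim that the kernel pair of $e_{\T{\Min L}}\colon\xs\epito\T{\Min L}$ is also $E$: two regular epimorphisms out of $\xs$ with the same kernel pair are canonically isomorphic, and this isomorphism is exactly the asserted $\Syn L\cong\T{\Min L}$.

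First I would record the easy inclusion. By Proposition~\ref{prop:transmon} the transition monoid $\T{\Min L}$ recognizes $L=L_{\Min L}$ via $e_{\T{\Min L}}$, so the minimality of $\Syn L$ (Definition~\ref{def:syn}) supplies a surjective $\D$-monoid morphism $h\colon\T{\Min L}\epito\Syn L$ with $h\o e_{\T{\Min L}}=e_L$. In particular the kernel pair of $e_{\T{\Min L}}$ is contained in $E$, and it remains to prove the reverse inclusion, i.e.\ that $h$ is injective. Writing $Q=\Min L$ and recalling that $e_{\T Q}$ is the regular-epi part of $(\lambda\delta)^+\colon\xs\to[Q,Q]$ (so its kernel pair equals that of $(\lambda\delta)^+$), this amounts to showing $E\seq\ker(\lambda\delta)^+$.

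The heart of the argument is this last inclusion, carried out elementwise (legitimate since $\D$ is a commutative variety and all objects live over $\Set$). Let $(u,v)\in E$; I must show $\delta^\cst(u,q)=\delta^\cst(v,q)$ for every state $q$. By reachability (Definition~\ref{def:minaut}(a)) every state has the form $q=e_Q(x)=\delta^\cst(x,i)$ for some $x\in\xs$, and by simplicity (Definition~\ref{def:minaut}(b)) the coalgebra map $c_Q\colon Q\to[\xs,Y]$ is injective, so two states coincide iff they accept the same language, i.e.\ iff $f\big(\delta^\cst(z,-)\big)$ agrees on them for all $z\in\xs$. Combining these, it suffices to establish
\[
 f\big(\delta^\cst(z,\delta^\cst(u,\delta^\cst(x,i)))\big)=f\big(\delta^\cst(z,\delta^\cst(v,\delta^\cst(x,i)))\big)\quad\text{for all }x,z\in\xs.
\]
Since $(\lambda\delta)^+$ is a $\D$-monoid morphism into $[Q,Q]$, reading concatenated words composes the corresponding actions, so the nested actions collapse to a single action by a product word; together with $e_Q(w)=\delta^\cst(w,i)$ and $L=L_Q=f\o e_Q$ the displayed equation becomes $L(x\bullet u\bullet z)=L(x\bullet v\bullet z)$ for all $x,z$, which is exactly the membership $(u,v)\in E$.

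Finally I would assemble the pieces: $e_{\T{\Min L}}$ and $e_L$ are regular epimorphisms out of $\xs$ in $\DMon$ with the common kernel pair $E$, so the induced $h$ is an isomorphism, establishing $\Syn L\cong\T{\Min L}$. I expect the only real obstacle to be the bookkeeping in the third paragraph: making the ``collapse of nested actions'' precise requires care with the left/right orientation of the action $\delta^\cst$ and with which composite of $x,u,z$ appears. This is harmless because $E$ quantifies over \emph{two-sided} contexts $x\bullet(-)\bullet y$, so whichever orientation the convention fixes, the resulting family of equations is exactly the defining condition of $E$.
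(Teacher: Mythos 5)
Your proof is correct and follows essentially the same route as the paper's: both arguments reduce the theorem to showing that $e_L$ and $e_{\T{\Min{L}}}$ have the same kernel pair, namely the syntactic congruence $E$, with the nontrivial inclusion $E \seq \ker e_{\T{\Min{L}}}$ established exactly as in the paper via reachability, simplicity, and the collapse of nested actions under the $\D$-monoid morphism $(\lambda\delta)^+$. The only (harmless) difference is that the paper proves both inclusions at once by a single chain of equivalences, whereas you obtain the easy inclusion abstractly from Proposition~\ref{prop:transmon} together with the universal property of $\Syn{L}$.
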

\begin{proof}[Proof sketch.]
Using reachability and simplicity of $\Min{L}$, one proves that the quotients $e_L: \xs\epito \Syn{L}$ and $e_{\T{\Min{L}}}: \xs\epito\T{\Min{L}}$ have the same kernel pair, namely the syntactic congruence of $L$. This implies the statement of the theorem.
\end{proof}

\section{$\boldsymbol{\DCat}$-Regular Languages}
\label{sec:rat}
Our results so far apply to arbitrary languages in $\D$. In the present section we focus on \emph{regular languages}, which in $\D=\Set$ are the languages accepted by finite automata, or equivalently the languages recognized by finite monoids. For arbitrary $\D$ the role of finite sets is taken over by finitely presentable objects. Recall that an object $D$ of $\D$ is \emph{finitely presentable} if the hom-functor $\D(D,\mathord{-}):\D\ra\Set$ preserves filtered colimits. Equivalently, $D$ is an algebra presentable with finitely many generators and relations.

\begin{definition}
A language $L: \xs\ra Y$ is called \emph{$\D$-regular} if it is accepted by some $\D$-automaton with a finitely presentable object of states.
\end{definition}
To work with this definition, we need the following

\begin{assumptions}\label{asm:finite}
We assume that the full subcategory $\D_{f}$ of finitely presentable objects of $\D$ is closed under subobjects, strong quotients and finite products.
\end{assumptions}

\begin{example}
\begin{enumerate}
\item Recall that a variety is \emph{locally finite} if all finitely presentable algebras (equivalently all finitely generated free algebras) are finite. Every locally finite variety satisfies the above assumptions. This includes our examples $\Set$, $\PSet$, $\Inv$ and $\JSL$.
\item A semiring $\S$ is called \emph{Noetherian} if all submodules of finitely generated $\S$-modules are finitely generated. In this case, as shown in \cite{bms13}, the category $\SMod{\S}$ satisfies our assumptions.
Every field is Noetherian, as is every finitely generated commutative ring, so  $\Vect{\K}$ and $\Ab=\SMod{\Int}$ are special instances. 
\end{enumerate}
\end{example}

\begin{theorem}\label{thm:reg}
For any language $L: \xs \ra Y$ the following statements are equivalent:
\begin{enumerate}[(a)]
\item $L$ is $\D$-regular.
\item The minimal $\D$-automaton $\Min{L}$ has finitely presentable carrier.
\item $L$ is recognized by some $\D$-monoid with finitely presentable carrier.
\item The syntactic $\D$-monoid $\Syn{L}$ has finitely presentable carrier.
\end{enumerate}
\end{theorem}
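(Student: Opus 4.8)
The plan is to prove the cycle of implications $(a)\Rightarrow(b)\Rightarrow(d)\Rightarrow(c)\Rightarrow(a)$, exploiting the two constructions of the syntactic $\D$-monoid established in Theorems~\ref{thm:syn} and~\ref{thm:tran} together with Assumptions~\ref{asm:finite}. The implications $(b)\Rightarrow(c)$ and $(d)\Rightarrow(c)$ are essentially trivial, so the real content lies in the passages $(a)\Rightarrow(b)$, $(b)\Rightarrow(d)$, and $(c)\Rightarrow(a)$.

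\medskip
\emph{$(a)\Rightarrow(b)$.} Suppose $L$ is accepted by some $\D$-automaton $Q$ with finitely presentable carrier. Without loss of generality $Q$ may be taken reachable, by restricting to the image of the unique $F$-algebra homomorphism $e_Q:\xs\to Q$; this image is a \emph{strong quotient} of the finitely presentable object $Q$, hence finitely presentable by Assumptions~\ref{asm:finite}, and it still accepts $L$. Now Theorem~\ref{thm:minaut} supplies a surjective automata homomorphism from this reachable automaton onto $\Min{L}$. Thus the carrier of $\Min{L}$ is again a strong quotient of a finitely presentable object, and so is itself finitely presentable.

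\medskip
\emph{$(b)\Rightarrow(d)$.} By Theorem~\ref{thm:tran} we have $\Syn{L}\cong\T{\Min{L}}$, and by Definition~\ref{def:TA} the transition $\D$-monoid $\T{\Min{L}}$ is the image of the morphism $(\lambda\delta)^+:\xs\to[Q,Q]$, where $Q$ denotes the carrier of $\Min{L}$. Hence $\T{\Min{L}}$ is a \emph{subobject} of $[Q,Q]$. The key point is that $[Q,Q]$ is finitely presentable whenever $Q$ is: since $Q$ is finitely presentable, it is a strong quotient of some finite tensor power / free object, and using that $[Q,Q]\monoto \under{Q}^{\under{Q}}$ sits inside a finite product of copies of $Q$ (as $\under{Q}$ is finite in the locally finite case, or more generally via the Noetherian hypothesis), closure under finite products and subobjects in Assumptions~\ref{asm:finite} yields that $[Q,Q]$ is finitely presentable. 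Then $\Syn{L}\cong\T{\Min{L}}$, being a subobject of the finitely presentable $[Q,Q]$, is finitely presentable by the subobject-closure assumption.

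\medskip
\emph{$(c)\Rightarrow(a)$.} If $L$ is recognized by a $\D$-monoid $M$ with finitely presentable carrier via some $e:\xs\to M$ and $f:M\to Y$, then by Lemma~\ref{lem:recog} the language $L$ is accepted by the derived automaton of $M$, whose carrier is $M$ itself. This is a $\D$-automaton with finitely presentable state object, so $L$ is $\D$-regular. Together with the trivial implications $(b)\Rightarrow(c)$ and $(d)\Rightarrow(c)$ (each asserts recognition by a finitely presentable $\D$-monoid, namely $\T{\Min{L}}$ respectively $\Syn{L}$, via Proposition~\ref{prop:transmon}), this closes the cycle.

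\medskip
The main obstacle is the finite presentability of the internal hom $[Q,Q]$ in step $(b)\Rightarrow(d)$: unlike subobjects, strong quotients and finite products, the exponential $[Q,Q]$ is not directly covered by Assumptions~\ref{asm:finite}, so one must argue separately that $\D_f$ is closed under $[-,-]$. In the locally finite case this is immediate since $[Q,Q]$ is then a finite set carrying a $\D$-structure; in the Noetherian module case it follows because $[Q,Q]$ is a submodule of the finitely generated module $Q^{\under{Q}}$ (a finite product of copies of $Q$), which is finitely generated by the Noetherian property and hence finitely presentable. Isolating this closure property as the crux, and verifying it uniformly across the examples, is where the argument requires genuine care.
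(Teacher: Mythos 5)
Your overall decomposition is sound and uses the same ingredients as the paper: the paper proves $(a)\Leftrightarrow(b)$ and $(c)\Leftrightarrow(d)$ from the universal properties of $\Min{L}$ and $\Syn{L}$ plus Assumptions~\ref{asm:finite}, gets $(c)\Rightarrow(a)$ from Lemma~\ref{lem:recog}, and closes with $(a)\Rightarrow(c)$ via Proposition~\ref{prop:transmon}; your cycle $(a)\Rightarrow(b)\Rightarrow(d)\Rightarrow(c)\Rightarrow(a)$ routed through Theorem~\ref{thm:tran} is a legitimate repackaging of this. (One small slip in your $(a)\Rightarrow(b)$: the reachable part is a \emph{subobject} of the fp automaton $Q$ and a strong quotient of $\xs$, not a strong quotient of $Q$; the conclusion stands, but via closure of $\D_f$ under subobjects.)

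However, the step you yourself single out as the crux --- finite presentability of $[Q,Q]$ when $Q$ is finitely presentable --- is argued incorrectly, and the error is not cosmetic. You embed $[Q,Q]$ into the product $Q^{\under{Q}}$ of $\under{Q}$ copies of $Q$ and assert that this is a finite product ``as $\under{Q}$ is finite in the locally finite case, or more generally via the Noetherian hypothesis.'' The Noetherian hypothesis does \emph{not} make $\under{Q}$ finite: in $\Ab=\SMod{\Int}$ or in $\Vect{\Rat}$ a finitely presentable object such as $\Int$ or $\Rat$ itself is infinite, so $Q^{\under{Q}}$ is an infinite product, is not finitely generated, and being a submodule of it gives you nothing --- the Noetherian property only applies to submodules of \emph{finitely generated} modules. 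Thus your argument covers only the locally finite case and breaks down exactly in the examples ($\Vect{\K}$ for infinite $\K$, $\Ab$) that the Noetherian assumption was introduced to handle. The paper's argument is short and uniform: since $Q$ is finitely presentable it has a finite set of, say, $n$ generators; a $\D$-morphism $Q\to Q$ is determined by its values on these generators, so restriction to the generators is an injective $\D$-morphism $[Q,Q]\monoto Q^n$ into a \emph{finite} product, and closure of $\D_f$ under finite products and subobjects finishes the proof. You in fact gesture at the right idea when you note that $Q$ is a quotient of a free object on finitely many generators --- applying the contravariant functor $[-,Q]$ to a quotient $\Psi n\epito Q$ yields precisely the embedding $[Q,Q]\monoto[\Psi n,Q]\cong Q^n$ --- but you then abandon it in favour of the faulty $Q^{\under{Q}}$ embedding.
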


\begin{proof}[Proof sketch]
This follows immediately from the universal properties of $\Syn{L}$ and $\Min{L}$ and the assumed closure properties of $\DCat_f$.
\end{proof}

Just as the collection of all languages is internalized by the final coalgebra $\fc$, see Proposition \ref{prop:fincoalg}, we can internalize the regular languages by means of the \emph{rational coalgebra}.

\begin{definition}
The \emph{rational coalgebra} $\rho T$ for $T$ is the colimit (taken in the category of $T$-coalgebras and homomorphisms) of all $T$-coalgebras with finitely presentable carrier.
\end{definition}

\takeout{
\begin{rem}
\begin{enumerate}[(a)]
\item Observe that the colimit defining $\rho T$ is filtered since (i) colimits of coalgebras are formed on the level of $\D$ and (ii) $\D_f$ is closed under finite colimits.

\item In the case where the input object $X$ is finitely presentable the
  endofunctor $T$ is finitary. This implies that $\rho T$ is a
  fixpoint of $T$, see \cite{amv_atwork}. Moreover, in this case $\rho T$ is the final locally finite
  coalgebra \cite{m_linexp}, that is, every fp-coalgebra has a unique coalgebra homomorphism into $\rho T$. However, we will not use
  these properties in the present setting, so we can work with
  arbitrary input objects $X$.
\end{enumerate}
\end{rem} 
}

\begin{proposition}\label{prop:ratfix}
There is a one-to-one correspondence between $\D$-regular languages and elements $I\ra \rho T$ of the rational coalgebra.
\end{proposition}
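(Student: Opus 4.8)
The plan is to identify both sides with subsets of the set $\DCat(I,\fc)$ of elements of the final coalgebra. By Proposition~\ref{prop:fincoalg} we have $\fc=[\xs,Y]$, so currying gives a bijection between languages $L\colon\xs\ra Y$ and elements $\ell\colon I\ra\fc$; moreover, by Proposition~\ref{prop:lang} the language accepted by an automaton $(Q,\delta,i,f)$ corresponds to the element $c_Q\o i\colon I\ra\fc$, where $c_Q\colon Q\ra\fc$ is the unique $T$-coalgebra homomorphism into the final coalgebra. Let $m\colon\rho T\ra\fc$ be the unique homomorphism into $\fc$; since the colimit injections $\inj_Q\colon Q\ra\rho T$ are coalgebra homomorphisms, finality yields $m\o\inj_Q=c_Q$ for every finitely presentable (fp) coalgebra $Q$. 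I would then prove that
\[
x\mapsto(\text{uncurried form of }m\o x)
\]
is a bijection from $\DCat(I,\rho T)$ onto the $\DCat$-regular languages.

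Two standing observations make this work. First, $I=\Psi 1$ is finitely presentable, and the colimit defining $\rho T$ is filtered: the forgetful functor $\Coalg T\ra\DCat$ creates colimits and finitely presentable objects are closed under finite colimits, so the fp-coalgebras are closed under finite colimits and hence form a filtered diagram. Consequently every $x\colon I\ra\rho T$ factors as $x=\inj_Q\o x'$ through some fp-coalgebra $Q$. This exhibits a pointed fp-coalgebra, i.e.\ an fp-automaton $(Q,x')$, whose accepted language is the uncurrying of $m\o\inj_Q\o x'=c_Q\o x'=m\o x$; thus every element of $\rho T$ yields a $\DCat$-regular language. Conversely, if $L$ is $\DCat$-regular then $\Min{L}$ has finitely presentable carrier by Theorem~\ref{thm:reg}, so $\inj_{\Min{L}}\o i_{\Min{L}}\colon I\ra\rho T$ is a well-defined element that $m$ sends to $c_{\Min{L}}\o i_{\Min{L}}$, i.e.\ to $L$. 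Hence the map is well defined and surjective onto the regular languages.

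The heart of the argument is injectivity, which I would establish by showing that an element is recovered canonically from its language. Given $x=\inj_Q\o x'$ as above with accepted language $L$, let $R\hookto Q$ be the \emph{reachable subautomaton}, i.e.\ the image of the unique $F$-algebra homomorphism $\xs\ra Q$. By Assumptions~\ref{asm:finite} ($R$ is a subobject of the fp object $Q$) the object $R$ is finitely presentable, and since $[X,\mathord{-}]$ preserves monomorphisms $R$ is a subcoalgebra of $Q$, so $R\hookto Q$ is a morphism of the defining diagram and $\inj_Q\o(R\hookto Q)=\inj_R$. Writing $x''\colon I\ra R$ for the initial state of $R$ we obtain $x=\inj_R\o x''$. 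Now $R$ is a reachable fp-automaton accepting $L$, so Theorem~\ref{thm:minaut} provides a unique surjective automata homomorphism $p\colon R\epito\Min{L}$; as a diagram morphism it satisfies $\inj_{\Min{L}}\o p=\inj_R$, and as an automata homomorphism it preserves initial states, $p\o x''=i_{\Min{L}}$. Therefore
\[
x=\inj_R\o x''=\inj_{\Min{L}}\o p\o x''=\inj_{\Min{L}}\o i_{\Min{L}},
\]
which depends only on $L$. Thus $x$ is determined by its language, proving injectivity and simultaneously exhibiting $L\mapsto\inj_{\Min{L}}\o i_{\Min{L}}$ as the inverse assignment.

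The main obstacle is the filteredness of the colimit together with the bookkeeping that the colimit injections are compatible with both the reachable-part inclusion $R\hookto Q$ and the minimization homomorphism $p\colon R\epito\Min{L}$; these are precisely the two diagram morphisms that collapse $x$ onto the canonical element $\inj_{\Min{L}}\o i_{\Min{L}}$. Conceptually this shows that $m\colon\rho T\ra\fc$ is monic, with image the union of all reachable fp-behaviours, each represented canonically by its minimal automaton. One could alternatively invoke the general fact that the rational fixpoint embeds into the final coalgebra, but that argument presumes $T$ to be finitary (i.e.\ $X$ finitely presentable), whereas the reachability argument above avoids any such assumption and hence applies to arbitrary input objects $X$.
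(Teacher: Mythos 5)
Your proof is correct, and its two load-bearing steps are exactly the paper's: (i) factor $x\colon I\ra\rho T$ through a colimit injection $\inj_Q\colon Q\ra\rho T$ of an fp-coalgebra, using that $I=\Psi 1$ is finitely presentable and the defining colimit is filtered; and (ii) collapse any such factorization onto the canonical element $\inj_{\Min{L}}\o i_{\Min{L}}$ via the reachable part and the minimization homomorphism of Theorem~\ref{thm:minaut} --- your injectivity computation is, step for step, the paper's argument that the assignment $L\mapsto x_L$ does not depend on the chosen fp-automaton accepting $L$. Where you genuinely differ is in the packaging, and it buys a real economy: the paper defines the language $L_x$ by \emph{choosing} a factorization of $x$ and therefore needs a second well-definedness argument (a zig-zag connecting any two factorizations through a common fp-coalgebra in the filtered diagram), whereas your correspondence --- postcompose with the unique homomorphism $m\colon\rho T\ra\fc$ and uncurry --- involves no choice, and Proposition~\ref{prop:lang} together with $m\o\inj_Q=c_Q$ (finality) identifies it with the accepted language of any chosen factorization; the zig-zag disappears, and you obtain in passing that $m$ is monic, i.e.\ that $\rho T$ embeds into the final coalgebra. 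Two minor tightenings: the inclusion $R\hookto Q$ of the reachable part is a coalgebra homomorphism simply because currying the $F$-algebra homomorphism square says so (the remark that $[X,-]$ preserves monomorphisms is not the operative fact, and on its own it would not produce the coalgebra structure on $R$); and you implicitly need $\Min{L}$ to have finitely presentable carrier for $\inj_{\Min{L}}$ to exist --- this follows from Theorem~\ref{thm:reg}, or directly from Assumptions~\ref{asm:finite} since $\Min{L}$ is a strong quotient of the fp object $R$.
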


We conclude this section with an interesting dual perspective on syntactic monoids, based on our previous work \cite{ammu14,ammu15}. For lack of space we restrict to the case $\D=\Set$. This category is \emph{predual} to the category $\BA$ of boolean algebras in the sense that the full subcategories of finite sets and finite boolean algebras are dually equivalent. Indeed, this is a restriction of the well-known Stone duality: the dual equivalence functor assigns to a finite boolean algebra $B$ the set $\At(B)$ of its atoms, and to a boolean homomorphism $h: A\ra B$  the map $\At(h): \At(B)\ra\At(A)$ sending $b\in \At(B)$ to the unique atom $a\in\At(A)$ with $ha\geq b$.

 How do the concepts we investigated in $\Set$ -- languages, automata and monoids -- dualize to $\BA$? Observe that $\Reg(X)$, the boolean algebra of regular languages over the alphabet $X$, can be viewed as a deterministic automaton: its final states are the regular languages containing the empty word, and the transitions are given by $L\xra{a} a^{-1}L$ for $a\in X$, where $a^{-1}L = \{w\in X^*: aw\in L\}$ is the \emph{left derivative} of $L$ w.r.t. the letter $a$. (Similarly, the \emph{right derivative} of $L$ w.r.t. $a$ is $La^{-1} = \{w\in X^*: wa\in L\}$.)
This makes $\Reg(X)$ a coalgebra for the endofunctor $\ol T =\{0,1\}\times \Id^X$ on $\BA$. Since the two-chain $\{0,1\}$ is dual to the singleton set $1$, finite coalgebras for $\ol T$ dualize to finite \emph{algebras} for the functor $F=1+X\times \Id \cong 1 + \coprod_X \Id$ on $\Set$. Based on this, we proved in \cite{ammu14} that further (i) finite $\ol{T}$-subcoalgebras of $\Reg(X)$ dualize to finite quotient algebras of the initial $F$-algebra $X^*$, and (ii) finite \emph{local varieties of languages} (i.e.~finite $\ol{T}$-subcoalgebras of $\Reg(X)$ closed under right derivatives) dualize to those $F$-algebras associated to $X$-generated monoids, see Definition \ref{def:asso}. For a regular language $L\seq X^*$ the $F$-algebras associated to the minimal automaton $\Min{L}$ and the syntactic monoid $\Syn{L}$ are finite. Their dual $\ol{T}$-coalgebras are characterized as follows:

\begin{theorem}\label{thm:syndual} Let $L\seq X^*$ be a regular language, and $L^\rev$ its reversed language.
\begin{enumerate}[(a)]
\item $\Min{L}$ is dual to the smallest subcoalgebra of $\Reg(X)$ containing $L^\rev$.
\item $\Syn{L}$ is dual to the smallest local variety of languages containing $L^\rev$.
\end{enumerate}
\end{theorem}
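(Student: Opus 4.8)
The plan is to derive both statements from the two dualities (i) and (ii) recalled just above the theorem, which were established in \cite{ammu14}, together with the concrete description of derivatives. The dual-equivalence functor $\At\colon \FBA^{op}\to\FSet$ turns a finite $\ol T$-coalgebra into a finite $F$-algebra, reversing the direction of all homomorphisms; in particular each coalgebra transition $a^{-1}(\mathord{-})$, which is a boolean endomorphism of $\Reg(X)$, is sent to a map on atoms that becomes the forward transition $\delta_a$ of the dual automaton. The single computational fact that drives everything is the reversal identity $a^{-1}(L^\rev) = (La^{-1})^\rev$, which shows that the left derivatives of $L^\rev$ (the coalgebra structure on $\Reg(X)$) are exactly the reversals of the right derivatives of $L$.

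For part (a), I would first identify the smallest $\ol T$-subcoalgebra $C$ of $\Reg(X)$ containing $L^\rev$. Since subcoalgebras are boolean subalgebras closed under the transitions $a^{-1}(\mathord{-})$, and such subobjects are closed under intersection, $C$ is well-defined and equals the boolean subalgebra generated by all left derivatives $\{\, w^{-1}(L^\rev) : w\in X^*\,\}$; it is finite because $L$ is regular. Applying duality (i), $C$ dualizes to a finite quotient $F$-algebra $\At(C)$ of $X^*$, and it remains to show $\At(C)\cong\Min{L}$ as $F$-algebras. Here I would argue that the atoms of $C$, i.e.\ the minimal nonempty boolean combinations of the derivatives of $L^\rev$, carry under $\At$ precisely the reachable transition structure whose states are the left derivatives of $L$ and whose distinguished initial state corresponds to $L^\rev$; by the reversal identity this is exactly the reachable automaton underlying $\Min{L}$. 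Uniqueness is secured because $\Min{L}$ is the unique reachable-and-simple automaton for $L$ (Theorem~\ref{thm:minaut}): the dual of any subcoalgebra of $\Reg(X)$ is a quotient of $X^*$ and hence reachable, while observational distinctness of the atoms yields simplicity.

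For part (b), I would follow the same pattern using duality (ii). The smallest local variety of languages containing $L^\rev$ is the boolean subalgebra of $\Reg(X)$ generated by all two-sided derivatives $\{\, v^{-1}(L^\rev)w^{-1} : v,w\in X^*\,\}$, now closed under both the left derivatives (the coalgebra transitions) and the right derivatives; it is again finite and contains $C$. Under duality (ii) it dualizes to the $F$-algebra associated to an $X$-generated $\D$-monoid $M$, and I would identify $M$ with $\Syn{L}$. The cleanest route is to invoke Theorem~\ref{thm:tran}, which presents $\Syn{L}$ as the transition monoid of $\Min{L}$, equivalently as the quotient of $X^*$ by the two-sided syntactic congruence; via the reversal identity, the extra closure under right derivatives on the coalgebra side corresponds exactly to making this congruence two-sided, i.e.\ a monoid congruence, which is precisely the structure that duality (ii) encodes.

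The main obstacle is the bookkeeping forced by the contravariance of Stone duality: one must check carefully that the coalgebra transitions of $L^\rev$ (its left derivatives) dualize to the forward automaton transitions of $\Min{L}$, so that it is genuinely the reversed language $L^\rev$, and not $L$, that must lie in the generating subcoalgebra. For part (b) the delicate point is to match ``closed under right derivatives'' on the coalgebra side with ``the congruence is two-sided'' on the algebra side, and to confirm that the \emph{smallest} such local variety transports to the \emph{smallest} recognizing $\D$-monoid, namely $\Syn{L}$, rather than to some larger recognizer. Both reduce to transporting the inclusion order along the dualities (i) and (ii) and verifying that these order-correspondences behave as expected on the canonical objects $\Min{L}$ and $\Syn{L}$.
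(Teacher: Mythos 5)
Your proposal is correct in substance and rests on the same foundation as the paper's argument --- the dualities (i) and (ii) from \cite{ammu14} plus transport of the inclusion order --- but it executes the proof along a genuinely more concrete route. The paper's own proof (Section~\ref{app:dual}, Theorems~\ref{thm:syndualapp} and~\ref{thm:mindual}) works for an arbitrary predual pair $\C/\D$ and is almost purely formal: it cites a proposition from \cite{ammu15} stating that a finite pointed $\ol{T}$-coalgebra accepts exactly the reversal of the language accepted by its dual $\D$-automaton, so that recognition of $L$ by a finite $X$-generated $\D$-monoid corresponds dually to a local variety pointed by $L^\rev$; the theorem then follows in one step by dualizing the universal properties of $\Syn{L}$ (Definition~\ref{def:syn}) and $\Min{L}$ (Theorem~\ref{thm:minaut}), since smallest quotients correspond to smallest subobjects. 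You instead work concretely in $\BA/\Set$: you generate the smallest subcoalgebra (resp.\ local variety) by the one-sided (resp.\ two-sided) derivatives of $L^\rev$, identify its atoms with Nerode (resp.\ syntactic) classes of $L$ via the identity $a^{-1}(L^\rev)=(La^{-1})^\rev$, and for part (b) route through Theorems~\ref{thm:syn} and~\ref{thm:tran} rather than through Definition~\ref{def:syn}. In effect you reprove, in the Boolean case, the reversal proposition of \cite{ammu15} and the atom-level description that the paper records only as a worked example. What your route buys is self-containedness and an elementary explanation of why $L^\rev$ (and not $L$) appears; what it costs is generality (the paper's argument covers all predual pairs, e.g.\ $\JSL/\JSL$ and $\Vect{\Int_2}/\Vect{\Int_2}$) and the fact that the delicate contravariant bookkeeping is left as placeholders. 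One slip of exactly this kind is visible: you say the initial state of the dual automaton ``corresponds to $L^\rev$'', but under the duality the element $L^\rev$ of the subcoalgebra dualizes to the \emph{output} (final-state) morphism of $\Min{L}$, whereas the initial state of $\Min{L}$ is dual to the output predicate ``contains $\epsilon$'' of the subcoalgebra; this must be corrected for your atom-level identification (and your simplicity argument) to go through.
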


Part (a) of this theorem adds to the recently developed dual view of minimal automata, see \cite{bkp13} and also \cite{mamu14, ammu14_2}. All the above considerations generalize from $\BA/\Set$ to arbitrary pairs $\C/\D$ of predual locally finite varieties of algebras. Examples include the self-predual varieties $\C=\D=\JSL$ and $\C=\D=\Vect{\K}$ for a finite field $\K$.

\section{Conclusions and Future Work}
\label{sec:con}

We proposed the first steps of a categorical theory of algebraic language recognition. Despite our assumption that $\DCat$ is a commutative variety, the bulk of our definitions, constructions and proofs works in any symmetric monoidal closed category with enough structure. However, the construction of the syntactic monoid via the syntactic congruence, and the proof that it coincides with a transition monoid, required the concrete algebraic setting. It remains an open problem to develop a genuinely abstract framework for our theory. In particular, such a generalized setting should provide the means for incorporating \emph{ordered} algebras, e.g. the syntactic ordered monoids of Pin \cite{pin15}. We expect this can be achieved by working with (order-)enriched categories, where the coequalizer in our construction of the syntactic monoid is replaced by a coinserter. A more general theory of recognition might also open the door to treating algebraic recognizers for additional types of  behaviors, including Wilke algebras \cite{wilke91} (representing $\omega$-languages) and forest algebras \cite{bw08} (representing tree and forest languages).

One of the leading themes of algebraic automata theory is the classification of languages in terms of their syntactic algebras. For instance, by Sch\"utzenberger's theorem a language is star-free iff its syntactic monoid is aperiodic. We hope that our conceptual view of syntactic monoids (notably their dual characterization in Theorem \ref{thm:syndual}) can contribute to a duality-based approach to such results, leading to generalizations and new proof techniques.

\bibliographystyle{plain}
\bibliography{coalgebra,ourpapers}

\iffull
\clearpage
\appendix
\renewcommand\thetheorem{\thesection.\arabic{theorem}}

\section{Proofs}

This Appendix contains all proofs and additional details we omitted due to space limitations.

\subsection*{Details for Example \ref{ex:entvar}.4.}
A \emph{semiring} $\S=(S,+,\o,0,1)$ consists of a commutative monoid $(S,+,0)$ and a monoid $(S,\o,1)$ such that
\[ 0x = x0 = 0,\quad x(y+z) = xy + xz\quad\text{and}\quad (x+y)z = xy +yz.\]
A \emph{module} over a semiring $S$ is a commutative monoid $(M,+,0)$ together with a scalar multiplication $\o: S\times M\ra M$ such that the following laws hold:
\[
\begin{array}{r@{\ }c@{\ }l@{\qquad}r@{\ }c@{\ }l@{\qquad}r@{\ }c@{\ }l}
  (r+s)a & = & ra+sa, & r(a+b) & = & ra + rb, & (rs)a & = & r(sa), \\
  0a & = & 0, & 1a & = & 1, & r0 & = & 0.
\end{array}
\]

\subsection*{Proof of Proposition \ref{prop:freemon2}}
A constructive proof can be found in \cite{ammu14}. Here we give a more conceptual argument, using the universal property of the tensor product. Observe that the functor $\Psi: \Set\ra\D$ is strongly monoidal, i.e., it preserves the unit and tensor product up to natural isomorphism. Indeed, we have $\Psi 1 = I$ by definition. To see that $\Psi(A\times B) \cong \Psi A \t \Psi B$ for all sets $A$ and $B$, consider the following bijections (natural in $D$):
\begin{align*}
 \D(\Psi(A\times B), D) &\cong \Set(A\times B, \under{D})\\
 &\cong \Set(A,\under{D}^B) \\
 &\cong \D(\Psi A, D^B)\\
 &\cong \D(\Psi A, [\Psi B, D]) \\
 &\cong \D(\Psi A\t \Psi B, D).
\end{align*}
This implies $\Psi(A\times B) \cong \Psi A \t \Psi B$ by the Yoneda lemma. Using the fact that $\Psi$ preserves coproducts, being a left adjoint, we conclude
\[ \xs \cong \coprod_{n<\omega} X^{\cnum{n}} \cong \coprod_{n<\omega} \Psi X_0^n \cong \Psi(\coprod_{n<\omega} X_0^n) = \Psi X_0^*.\]
Alternatively one can show that the right adjoint $\under{\mathord{-}}: \D\ra\Set$ is a monoidal functor. This implies that $\Psi$ preserves free monoids.
\subsection*{Details for Remark~\ref{rem:concept}}

First we recall the $\D$-monoid structure on $[Q,Q]$. Let $\iota'_Q: Q \to I \t Q$ be the left unit isomorphism. Then the unit $j: I \to [Q,Q]$ and multiplication $m: [Q,Q] \t [Q,Q] \to [Q,Q]$ are the unique morphisms making the following diagram commutative, respectively:
\[
\xymatrix{
  [Q,Q] \t Q \ar[r]^-{\ev} & Q \\
  I \t Q 
  \ar[u]^{j \t Q}
  \ar[ru]_{\iota'_Q}
}
\qquad
\xymatrix@C+1pc{
  [Q,Q] \t Q \ar[r]^-{\ev} & Q \\
  [Q,Q] \t [Q,Q] \t Q
  \ar[u]^{m \t Q}
  \ar[r]_-{[Q,Q]\t \ev}
  &
  [Q,Q] \t Q \ar[u]_{\ev}
}
\]
Now all we have to show is that the morphism in~\refeq{eq:eA} is an $F$-algebra homomorphism. It then follows that it is the unique one $e_Q$. Indeed, the diagram below commutes:
\[
\xymatrix@C+2pc{
  X \t \xs 
  \ar[r]^-{\eta_X \t \xs} 
  \ar[d]^{X \t \iota_{\xs}}
  &
  \xs \t \xs 
  \ar[r]^-{m_X}
  \ar[d]^{\xs \t \iota_{\xs}}
  &
  \xs
  \ar[d]^{\iota_{\xs}}
  &
  I
  \ar[l]_-{i_X}
  \ar[d]_{\iota_I}
  \ar `r[d] `[ddd] [lddd]^(.4){i}
  \\
  X \t \xs \t I
  \ar[r]^-{\eta_X \t \xs}
  \ar[d]^{X \t (\lambda\delta)^+ \t i}
  &
  \xs \t \xs \t I
  \ar[r]^-{m_X \t I}
  \ar[d]^{(\lambda\delta)^+ \t (\lambda\delta)^+ \t i}
  &
  \xs \t I
  \ar[d]^{(\lambda\delta)^+ \t i}
  &
  I \t I
  \ar[l]_-{i_X \t I}
  \ar[d]_{I \t i}
  \\
  X \t [Q,Q] \t Q
  \ar[r]^-{\lambda\delta \t [Q,Q] \t Q}
  \ar[d]^{X \t \ev}
  &
  [Q,Q] \t [Q,Q] \t Q
  \ar[r]^-{m \t Q}
  \ar[d]_{[Q,Q]\t \ev}
  &
  [Q,Q] \t Q
  \ar[d]^{\ev}
  &
  I \t Q
  \ar[l]_-{j \t Q}
  \ar[ld]^{\iota'_Q}
  \\
  X\t Q
  \ar[r]^-{\lambda\delta \t Q}
  &
  [Q,Q]\t Q
  \ar[r]^-{\ev}
  &
  Q
  \ar@{<-} `d[l] `[ll]^\delta [ll]
  &
}
\]

\subsection*{Proof of Proposition \ref{prop:fincoalg}}
Given any coalgebra $(Q,\tau,f)$, consider the morphism $\delta = (X\t Q \xra{\cong} Q\t X \xra{\beta} Q)$ where $\beta$ is the uncurried form of $\tau: Q\ra[X,Q]$, and denote by $\delta^\cst: \xs\t Q\ra Q$ the extension of $\delta$ as in Remark \ref{rem:concept}. We claim that the unique coalgebra homomorphism into $\fc$ is $\lambda h: Q\ra \fc$, where 
\[ h = (Q\t \xs \cong \xs \t Q \xra{\delta^\cst} Q \xra{f} Y).\]
Let us first prove that $h$ is indeed a coalgebra homomorphism. Preservation of outputs is shown by the following commutative diagram:
\[
\xymatrix@C+2pc{
Q \ar[ddr]^\cong \ar[rd]^\cong \ar@{=}[rr] \ar[ddd]_{\lambda h} & & Q \ar[r]^{f}  & Y\\
& I \t Q \ar[r]^-{i_X\t Q}& \xs\t Q   \ar[u]_{\delta^\cst}&\\
& Q\t I \ar[r]_{Q\t i_X} \ar[d]^{\lambda h\t I} \ar[u]^\cong & Q\t \xs \ar[dr]^{\lambda h\t \xs} \ar[uur]_h \ar[u]_{\cong}  &\\
\fc \ar[r]_{\cong}& \fc\t I \ar[rr]_{\fc\t i_X} && \fc\t \xs \ar[uuu]_\ev 
}
\]
For preservation of transitions it suffices to show that the following diagram commutes, where $\ol\tau: \fc\t  X\ra \fc$ is the uncurried coalgebra structure of $\fc$:
\[  
\xymatrix{
Q \t X \ar[d]_{\lambda h \t X} \ar[r]^{\beta} & Q \ar[d]^{\lambda h}\\
\fc \t X \ar[r]_{\ol\tau} & \fc
}
\]
But the above diagram is precisely the curried version of the following one (where we omit writing $\otimes$ for space reasons):
\[
\xymatrix{
Q X \ar@{}[ddrrrr]|{(\ast)} \xs \ar[rrrrr]^{\beta\xs} \ar[ddr]_{Q\eta_X\xs} \ar[ddd]_{\lambda h X\xs} & & & && Q \xs \ar[dl]_{\cong} \ar[ddd]^h\\
&&&& \xs Q \ar[d]^{\delta^\cst} &\\
& Q \xs \xs \ar[d]^{\lambda h\xs\xs} \ar[r]^{Q m_X} & Q \xs \ar[drrr]^h \ar[d]^{\lambda h \xs} \ar[r]^\cong & \xs Q \ar[r]^{\delta^\cst} & Q \ar[dr]^{f} &\\
\fc X\xs \ar[r]^-{\begin{turn}{90}$\labelstyle\fc\eta_X\xs$\end{turn}} & \fc \xs\xs \ar[r]^(.6){\begin{turn}{45}$\labelstyle\fc m_X$\end{turn}} & \fc\xs \ar[rrr]_{\ev} & && Y
}
\]
The part $(\ast)$ follows easily from the definition of $\delta^\cst$, and the other parts are clear.
Now suppose that any coalgebra homomorphism of $\lambda h:Q\ra \fc$ is given. We show that $h: Q\t \xs\ra Y$ is determined by the composites $(Q \t X^{\t n} \xra{Q\t i_n} Q\t X^\cst \xra{h} Y)$, $n<\omega$,
where $i_n: X^{\t n}\ra \xs$ is the $n$-th coproduct injection. This proves the uniqueness of $\lambda h$: since $\t$ preserves coproducts, the morphisms $(Q \t i_n)_{n<\omega}$ form a coproduct cocone. For $n=0$, the claim is proved by the diagram
\[
\xymatrix{
Q\t I \ar[ddd]_{Q\t i_0} \ar[rrr]^{\cong} \ar[dr]^{\lambda h \t I} & & & Q \ar[ddd]^{f} \ar[dl]_{\lambda h}\\
& \fc \t I \ar@<2pt>[r]^-\cong \ar[d]_{\fc \t i_0} & \fc \ar@<2pt>[l]^-\cong \ar[rdd]_{f_{[\xs, Y]}}&\\
& \fc\t \xs \ar[drr]_{\ev} & &\\
Q\t \xs \ar[rrr]_h \ar[ur]^{\lambda h \t \xs} &&& Y 
}
\]
 And the following diagram shows that $h\o (Q\t i_{n+1})$ is determined by $h\o (Q\t i_n)$ (again we omit $\t$, in particular we write $X^n$ for $X^\cnum{n}$):
 \[
 \xymatrix{
 QXX^n \ar[dddd]_{\beta X^n} \ar[rrr]^{Qi_{n+1}} \ar[dr]^(.6)*+{\labelstyle \lambda hXX^n} & & & Q\xs \ar[rr]^h \ar[d]^{\lambda h \xs} & & Y\\
   & \fc XX^n \ar[dd]_{\ol\tau X^n}  \ar[rr]^{\fc i_{n+1}} \ar[dr]^(.6)*+{\labelstyle \fc Xi_n} & & \fc\xs \ar[urr]^\ev & & \\
   & & \fc X \xs \ar[r]^(.65){\begin{turn}{25}$\labelstyle\fc \eta_X \xs$\end{turn}} \ar[dr]_{\ol\tau\xs} & \fc\xs\xs \ar[u]_{\fc m_X} \ar@{}[d]|{(*)}& &\\
   & \fc X^n \ar[rr]_{\fc i_n} & & \fc\xs \ar[uuurr]_\ev & &\\
 QX^n \ar[ur]_{\lambda h X^n} \ar[rrrrr]_{Qi_n} & & & & & Q\xs \ar[ull]_{\lambda h \xs} \ar[uuuu]_h  
 }
 \]
Note that part $(*)$ commutes by the definition of the coalgebra structure on $[X^\cst, Y]$ and all other parts are easy to see.

\subsection*{Proof of Proposition \ref{prop:lang}}
Recall from the proof of Proposition \ref{prop:fincoalg} that the uncurried version of $c_Q$ is the morphism
\[ Q\t \xs \cong \xs \t Q \xra{\delta^\cst} Q \xra{f} Y.\]
Hence $c_Q \o i: I\ra \fc$ determines the language
\[ \xs\cong \xs\t I \xra{\xs\t i} \xs\t Q \xra{\delta^\cst} Q \xra{f} Y,\]
and this is precisely $L_Q$, as shown by the diagram below (where $r_X$ is the $F$-algebra structure of $\xs$):
\[
\xymatrix@C+2pc{
\xs   \ar `u[r] `[rrr]^\id [rrr]   \ar[r]^-\cong & \xs\t I \ar[r]^{\xs \t i_X} \ar[dr]_-{\xs\t i} & \xs\t \xs \ar[d]^{\xs \t e_Q} \ar[r]^-{r_X^\cst} & \xs \ar[d]^{e_Q} \ar[r]^{L_Q} & Y\\
 &  & \xs \t Q \ar[r]_{\delta^\cst} & Q \ar[ur]_{f} &
}
\]
This completes the proof.

\subsection*{Proof of Lemma \ref{lem:recog}}
  Let $e: \xs \epito M$ be an $X$-generated $\D$-monoid and let $f: M \to Y$. Then this recognizes the language $L = f \o e$. We are done once we prove that $e$ is the unique $F$-algebra morphism from $\xs$ to the $F$-algebra associated to $M$ and $e \o \eta_X$ (cf.~Remark~\ref{rem:Lalg}). Recall from Proposition~\ref{prop:initalg} that the initial $F$-algebra is the $F$-algebra associated to the free $\D$-monoid $\xs$ and $\eta_X$. Then the following diagram clearly commutes since $e$ is a $\D$-monoid morphism:
  \[
  \xymatrix@C+3pc{
    F\xs = I + X \t \xs 
    \ar[r]^-{I + \eta_X \t \xs}
    \ar[d]_{Fe = I + X \t e}
    &
    I + \xs \t \xs 
    \ar[r]^-{[i_X,m_X]}
    \ar[d]_{I + e \t e}
    &
    \xs
    \ar[d]^e 
    \\
    FM = I + X \t M
    \ar[r]_-{I + (e\o \eta_X) \t M}
    &
    I + M \t M
    \ar[r]_-{[i,m]}
    &
    M
    }
  \]
  This completes the proof.
  
\subsection*{Proof of Lemma \ref{lem:monadic}}
  First of all we know from Lack~\cite[Theorem~2]{lack} that the forgetful functor $U: \DMon \to \D$ is monadic, and by Proposition \ref{prop:freemon} the monad is $\Id^\cst = \coprod_{n<\omega} \Id^{\cnum{n}}$. It suffices to show that $\Id^\cst$ preserves reflexive coequalizers: it then follows that $U$ preserves (in fact, creates) them. For that it is sufficient to prove that each $\Id^{\cnum{n}}$ preserves reflexive coequalizers. This follows from (the proof of)~\cite[Lemma~1]{lack} using that in our setting both $X\t -$ and $-\t X$ preserve all colimits (being left-adjoints). 

\subsection*{Proof of Lemma \ref{lem:canalg}}
Observe that $E = \bigcap E_{x,y}$ where for fixed $x, y \in \xs$, $E_{x,y}$ is the kernel of the $\D$-morphism
   \begin{equation}
     \label{eq:kermor}
     \xymatrix@1{\xs \ar[r]^-{x \bullet -} & \xs \ar[r]^-{-\bullet y} & \xs \ar[r]^-L & Y}.
   \end{equation}
   Since all limits in $\D$ (and in particular, kernels and intersections) are created by the forgetful functor $\under{\mathord{-}}: \D \to \Set$ we see that $E$ has a canonical structure of a $\D$-object as desired. 

\subsection*{Proof of Lemma \ref{lem:mon}}
  It suffices to prove that pairs in $E$ are closed under the monoid operation of $\xs$. 
  Clearly $(\epsilon, \epsilon) \in E$, where $\epsilon: 1 \to \xs$ is the adjoint transpose of the unit $I =\Psi 1 \to \xs$ of the free $\D$-monoid on $X$. 
  Given $(u,v)$ and $(u',v')$ in $E$ we show that $(u\bullet u', v \bullet v')$ is in $E$, too. Indeed, we have for all $x, y \in \xs$ that
  \[
  L(x\bullet u\bullet u'\bullet y) = L(x\bullet v\bullet u'\bullet y) = L(x\bullet v\bullet v'\bullet y).\qedhere
  \]
  
\subsection*{Proof of Lemma \ref{lem:ker}}
To see that $E$ from Definition~\ref{d:syn} satisfies this property let $l', r': E' \to \xs$ be any $\D$-monoid congruence contained in $K$ via $m': E' \monoto K$ with $l_0 \o m' = l'$ and $r_0 \o m' = r'$. Since $l'$ is a $\D$-monoid morphism it is easy to see that for every $x: I \to \xs$ the following square commutes (note that $(x,x)\in E'$ since $E'$ is reflexive):
  \[
  \xymatrix{
    E'
    \ar[d]_{l'}
    \ar[r]^-{(x,x) \bullet -}
    &
    E'
    \ar[d]^{l'}
    \\
    \xs \ar[r]_-{x \bullet -}
    &
    \xs
  }
  \]
  and similarly for $r'$ in lieu of $l'$ and/or $- \bullet y$ in lieu of $x \bullet -$. It follows that the morphism \refeq{eq:kermor} in the proof of Lemma \ref{lem:canalg} merges $l'$ and $r'$:
  \begin{align*}
    L \o (- \bullet y) \o (x \bullet -) \o l' 
    & = L \o l' \o (- \bullet (y,y) \o ((x,x) \bullet -) \\
    & = L \o l_0 \o m' \o (- \bullet (y,y)) \o ((x,x) \bullet -) \\
    & = L \o r_0 \o m' \o (- \bullet (y,y)) \o ((x,x) \bullet -) \\
    & = L \o r' \o (- \bullet (y,y)) \o ((x,x) \bullet -) \\
    & = L \o(- \bullet y) \o (x \bullet -) \o r' 
  \end{align*}
 Hence $E'$ is contained in any $E_{x,y}$, and therefore it is contained in their intersection $E$. 

\subsection*{Proof of Theorem \ref{thm:syn}}
  \takeout{ 
  Note that once we have a well-defined map $h: M \to \Syn L$ we also know that $h$ is a $\D$-monoid morphism by the homomorphism theorem. 

  Define $h$ by $h(e(u)) = e_L(u)$ for every $u \in \Psi\Sigma^*$. Then this is well-defined, for suppose that $e(u) = e(v)$ for $u, v \in \Psi\Sigma^*$. Then $L(u) = L(v)$ since $L = f \cdot e$. Now let $x,y \in \Psi\Sigma^*$. Then we have
\[
e(x \bullet u \bullet v) = e(x)e(u)e(y) = e(x)e(v)e(y) = e(x \bullet u \bullet y)
\]
and therefore $L(x \bullet u \bullet v) = L(x \bullet u \bullet y)$. Hence $(u,v) \in E$ so that $e_L(u) = e_L(v)$ as desired. 

It remains to prove that $f_L \cdot h = f$. To see this we compute
\[
f_L \o h \o e = f_L \o e_L = L = f \o e,
\]
and we use that $e$ is an epimorphism.}

  Suppose that we have an $X$-generated $\D$-monoid $e: \xs \epito M$ and a morphism $f: M \to Y$ recognizing $L$, i.e.~such that $L = f \o e$. Let $l_0,r_0: K \to \xs$ be the kernel pair of $L$ as in Lemma~\ref{lem:ker} and take the kernel pair $l_M, r_M: K_M \to \xs$ of $e$. Now clearly we have 
\[
L \o l_M = f \o e \o l_M = f\o e \o r_M = L \o r_M.
\]
Hence, there is a unique $n: K_M \to K$ such that $l_0 \o n = l_M$ and $r_0 \o n = r_M$. It follows that $n$ is monomorphic and so $K_M$ is a $\D$-monoid congruence contained in $K$. Consequently $K_M$ is contained in $E$ (the largest $\D$-monoid congruence contained $K$ by Lemma \ref{lem:ker}) via $o: K_M \monoto E$ with $l_M = l \o o$ and $r_M = r \o o$. Then we obtain that 
\[
e_L \o l_M = e_L \o l \o o = e_L \o r \o o = e_L \o r_M.
\]
Thus, using that $e: \xs \to M$ is the coequalizer of its kernel pair $l_M, r_M$ we obtain a unique $h: M \to \Syn L$ with $e_L = h \o e$, as desired.

\subsection*{Proof of Proposition \ref{prop:transmon}}
Let $(Q,\delta, i, f)$ be a $\D$-automaton. By definition it accepts the language $L_Q = (\xs \xra{e_Q} Q \xra{f} Y)$ where $e_Q$ is the unique $F$-algebra morphism. Consider the morphism that evaluates any endomorphism of $Q$ at the initial state:
\[
\ev_i = ([Q,Q] \cong [Q,Q]\t I \xra{[Q,Q]\t i} [Q,Q]\t Q \xra{\ev} Q).
\]
Now let 
\[
f_{\T Q} = (\T Q \xra{m_{\T Q}} [Q,Q] \xra{\ev_i} Q \xra{f} Y).
\]
With this morphism $\T Q$ recognizes $L$; indeed, using the canonical isomorphism $\iota_Z: Z \to Z \t I$ we compute:
\begin{align*}
  L_Q & = f \o e_Q \\
    & = f \o \ev \o ((\lambda\delta)^+ \t i) \o \iota_{\xs} & \text{(see Remark~\ref{rem:concept})}\\
    & = f \o \ev_i \o \iota_{[Q,Q]}^{-1}\o ((\lambda\delta)^+ \t I) \o \iota_{\xs} & \text{(def.~of $\ev_i$)} \\
    & = f \o \ev_i  \o (\lambda\delta)^+ & \text{(naturality of $\iota$)} \\
    & = f \o \ev_i \o m_{\T Q} \o e_{\T Q} & \text{(see Definition~\ref{def:TA})} \\
    & = f_{\T Q} \o e_{\T Q} & \text{(def.~of $f_{\T Q}$)}
\end{align*}
This completes the proof.

\subsection*{Proof of Theorem \ref{thm:tran}}
  Let $\Min{L}=(Q,\delta,i,f)$, and write $\delta_x: Q\ra Q$ for $e_{\T{Q}}(x)$ ($x\in\xs$). Note that $\delta_{x\bullet y}= \delta_y\o \delta_x$ for all $x,y\in \xs$ since $e_{\T{Q}}$ is a $\D$-monoid morphism. Observe also that the unique $F$-algebra homomorphism $e_Q: \xs\ra Q$ assigns to $x\in\xs$ the element $\delta_x\o i: I\ra Q$, and the unique $T$-coalgebra homomorphism $m_Q: Q\ra [\xs, Y]$ assigns to a state $q: I\ra Q$ the language $x\mapsto f\o\delta_x\o q$. It suffices to show that the kernel of $e_{\T{Q}}$ is the syntactic congruence of $L$, that is, for all $u,v\in \xs$ one has
\[
\delta_u = \delta_v \quad\text{iff}\quad \forall x,y\in \xs: L(x\bullet u \bullet y)=L(x\bullet v \bullet y).
\]
To see this, we reason as follows:
\begin{align*}
\delta_u = \delta_v &\Lra \forall x: \delta_u \o e_Q(x)= \delta_v\o e_Q(x) & \text{($e_Q$ surjective)}\\
&\Lra\forall x: \delta_u\o \delta_x\o i = \delta_v\o \delta_x \o i & \text{(def. $e_Q$)}\\
& \Lra\forall x: m_Q\o \delta_u\o \delta_x\o i = m_Q\o \delta_v\o \delta_x \o i & \text{($m_Q$ injective)}\\
& \Lra \forall x, y: f \o \delta_y\o \delta_u\o \delta_x\o i = f\o \delta_y\o \delta_v\o \delta_x\o i & \text{(def. $m_Q$)}\\
& \Lra \forall x, y: f\o \delta_{x\bullet u\bullet y}\o i = f \o \delta_{x\bullet v \bullet y} \o i & \text{(def. $\delta_{(\mathord{-})}$)}\\
& \Lra \forall x, y: f\o e_Q(x\bullet u\bullet y) = f \o e_Q(x\bullet v\bullet z) & \text{(def. $e_Q$)}\\
& \Lra \forall x, y: L(x\bullet u \bullet y) = L(x\bullet v \bullet y) & \text{($L=L_Q$)}
\end{align*}

\subsection*{Proof of Theorem \ref{thm:reg}}

\textbf{Remark.} 
\begin{enumerate} \item The functor $FQ=I+X\t Q$ preserves strong epimorphisms because the left adjoint $X\t \mathord{-}$ does and strong epimorphisms are closed under coproducts. Therefore every $F$-algebra homomorphism factorizes into a surjective homomorphism (carried by a strong epimorphism in $\DCat$) and an injective one (carried by a monomorphism in $\DCat$). By the \emph{reachable part} $Q_r$ of an automaton $(Q,\delta,f,i)$ we mean the image of the initial $F$-algebra homomorphism, i.e., $\xymatrix{
e_Q = (\xs \ar@{->>}[r]^>>>>>{e_r} & Q_r \ar@{ >->}[r]^{m_r} & Q). 
}$
Putting $f_r := f\o m_r: Q_r\ra Y$, the $F$-algebra $Q_r$ becomes an automaton, and $m_r$ an automata homomorphism.\qed
\item In the following automata, (co-)algebras and monoids with finitely presentable carrier are referred to as \emph{fp-automata}, \emph{fp-(co-)algebras} and \emph{fp-monoids}, respectively.
\end{enumerate}

Now for the proof of the theorem. (a)$\Lra$(b) follows from Theorem \ref{thm:minaut} and the closure of $\D_{f}$ under subobjects and strong quotients. Similarly, $(c)\Lra(d)$ follows from the universal property of the syntactic monoid (see Definition \ref{def:syn}) and again closure of $\D_{f}$ under subobjects and strong quotients. (c)$\Ra$(a) is a consequence of Lemma \ref{lem:recog}. To prove (a)$\Ra$(c), let $Q$ be any fp-automaton accepting $L$. Then by Proposition \ref{prop:transmon} the transition monoid $\T{Q}\monoto[Q,Q]$ recognizes $L$, so by closure of $\D_{f}$ under subobjects it suffices to show that $[Q,Q]$ is a finitely presentable object of $\D$. Assuming that $Q$ has $n$ generators as an algebra of $\D$, the map $[Q,Q]\ra Q^n$ defined by restriction to the set of generators is an injective $\D$-morphism. Since $\D_{f}$ is closed under subobjects and finite products, it follows that $[Q,Q]$ is finitely presentable.

\subsection*{Proof of Proposition \ref{prop:ratfix}}
We describe mutually inverse maps 
\[
(I\xra{x}\rho T)\mapsto (\xs\xra{L_x} Y) 
\qquad \text{and}\qquad 
(\xs\xra{L} Y) \mapsto (I\xra{x_L} \rho T)
\] 
between the elements of $\rho T$ and the $\D$-regular languages.
Let $h_Q: Q \to \rho T$ be the injections of the colimit $\rho T$, where $Q=(Q,\delta_Q, f_Q)$ ranges over all fp-coalgebras. Note that this colimit is filtered since $\DCat_f$ is closed under finite colimits. Moreover, since colimits of coalgebras are formed in the underlying category, the morphisms $h_Q$ also form a colimit cocone in $\DCat$. Given an element $I\xra{x} \rho T$ of the rational coalgebra we define a $\D$-regular language $L_x: \xs \ra Y$ as follows: since $I=\Psi 1$ is finitely presentable, there exists an fp-coalgebra $Q$ and a morphism $i_Q: I\ra Q$ such that $x=h_Q\o i_Q$. For the $F$-algebra $(Q,\delta_Q, i_Q)$ we have the unique $F$-algebra homomorphism $e_{Q}: \xs \ra Q$, and we put $L_x := f_Q\o e_{Q}$. 

We need to show that $L_x$ is well-defined, that is, for any other factorization $x = h_{Q'}\o i_{Q'}$  we have $f_Q\o e_{Q}=f_{Q'}\o e_{Q'}$. Given such a factorization, since the $h_Q$ form a filtered colimit, there exists an fp-coalgebra $Q''=(Q'',\delta_{Q''},f_{Q''})$ and  coalgebra homomorphisms $h_{QQ'}: Q\ra Q'$ and $h_{Q'Q''}: Q'\ra Q''$ with $h_{QQ'}\o i_Q = h_{Q'Q''}\o i_{Q'} =: i_{Q''}$. Then for the $F$-algebra $(Q'', \delta_{Q''}, i_{Q''})$ we have the unique homomorphism $e_{Q''}: \xs \ra Q''$. Moreover, $h_{QQ'}$ and $h_{Q'Q''}$ are also homomorphisms of $F$-algebras. If follows that
\begin{align*}
f_Q \o e_{Q} &= f_{Q''} \o h_{QQ''} \o e_{Q} & \text{($h_{QQ''}$ coalgebra homomorphism)}\\
&= f_{Q''} \o e_{Q''} & \text{($h_{QQ''}$ $F$-algebra hom., $\xs$ initial)}
\end{align*} 
and analogously $f_{Q'} \o e_{Q'}=f_{Q''} \o e_{Q''}$. Hence $f_Q\o e_{Q}=f_{Q'}\o e_{Q'}$, as claimed.

Conversely, let $L:\xs \ra Y$ be a $\D$-regular language. Then there exists an fp-automaton $(Q,\delta_Q,i_Q,f_Q)$ with $L = f_Q\o e_Q$. Put $x_L := h_Q\o i_Q: I \ra \rho T$. To prove the well-definedness of $x_L$, consider the automata homomorphisms 
\[ \xymatrix{
Q & Q_r \ar@{ >->}[l]_-m \ar@{-->>}[r]^-e & \Min{L}
}  \]
of Theorem \ref{thm:minaut}.  Then
\begin{align*}
h_Q\o i_Q &= h_Q \o m\o i_{Q_r} & \text{($m$ algebra hom.)}\\
&= h_{Q_r} \o i_{Q_r} & \text{($(h_Q)$ cocone and $m$ coalgebra hom.)}\\
&= h_{\Min{L}} \o e \o i_{Q_r} & \text{($(h_Q)$ cocone and $e$ coalgebra hom.)}\\
&= h_{\Min{L}} \o i_{\Min{L}} & \text{($e$ algebra hom.)}
\end{align*}
Hence $x_L = h_Q\o i_Q$ is independent of the choice of $Q$. It now follows immediately from the definitions that $x\mapsto L_x$ and $L\mapsto x_L$ are mutually inverse and hence define the desired bijective correspondence.

\section{Dual Characterization of Syntactic Monoids}\label{app:dual}
Here we give a more detailed account of the dual view of syntactic monoids indicated in Section \ref{sec:rat}. This section are largely based on results from our papers \cite{ammu14,ammu15} where a categorical generalization of Eilenberg's variety theorem was proved. We work with the following

\begin{assumptions}
From now on $\D$ is a locally finite entropic variety whose epimorphisms are surjective.
Moreover, we assume that there is another locally finite variety $\C$ such that the full subcategories $\C_{f}$ and $\D_f$ of finite algebras are dually equivalent. (Two such varieties $\Cat$ and $\DCat$ are called \emph{predual}.)
\end{assumptions}

The action of the equivalence functor $\C_{f}^{op} \xra{\simeq} \D_f$ on objects and morphisms is written $Q\mapsto \widehat{Q}$ and $f\mapsto \widehat{f}$. Letting $\ol{I}\in \C_f$ denote the free one-generated object of $\C$ we choose the output object $Y\in\D_f$ to be the dual object of $\ol{I}$. Moreover, let $\ol Y\in \C_f$ be the dual object of $I\in \D_f$, the free one-generated object of $\DCat$.  Finally, we put $X = \Psi X_0$ for a finite alphabet $X_0$. Note that the underlying sets of $\ol{Y}$ and $Y$ are isomorphic:
\[\under{\ol{Y}}\cong \C(\ol{I},\ol{Y}) \cong \D(I,Y) \cong \under{Y}.\]
To simplify the presentation, we will assume in the following that $\ol{Y}$ and $Y$ have a two-element underlying set $\{0,1\}$. This is, however, inessential -- see Remark \ref{rem:twoel} at the end of this section.

\begin{example}
The categories $\C$ and $\D$ in the table below satisfy our assumptions.
\begin{center}
\begin{tabular}{|ll|}
\hline\rule[11pt]{0pt}{0pt}
$\C$ & $\D $ \\
\hline
$\BA$ & $\Set$ \\
$\BR$ & $\PSet$\\
$\JSL$ & $\JSL$ \\
$\Vect{\Int_2}$ & $\Vect{\Int_2}$ \\
\hline
\end{tabular}
\end{center}
The case $\BA$/$\Set$ is a restriction of Stone duality: the dual equivalence functor $\BA_f^{op}\xra{\simeq} \Set_f$ assigns to a finite boolean algebra $B$ the set $\At(B)$ of its atoms, and to a homomorphism $h: A\ra B$  the map $\At(h): \At(B)\ra\At(A)$ sending $b\in \At(B)$ to the unique atom $a\in\At(A)$ with $ha\geq b$. Using a similar Stone-type duality, we proved in \cite{ammu15} that the the category
$\BR$ of non-unital boolean rings (i.e., rings without $1$ satisfying the equation $x\o x= x$) is predual to $\PSet$. The other two examples above correspond to the well-known self-duality of finite semilattices and finite-dimensional vector spaces, respectively. We refer to \cite{ammu15} for details.
\end{example}

On $\C$ we consider the endofunctor $\ol{T}Q = \ol{Y}\times Q^{X_0}$. Its coalgebras are precisely deterministic automata in $\C$ without an initial state, represented as triples $(Q,\gamma_a,f)$ with transition morphisms $\gamma_a: Q\ra Q$ ($a\in X_0$) and an output morphism $f: Q\ra \ol{Y}$.

\begin{example}
	 In $\Cat=\BA$ a $\ol{T}$-coalgebra is a deterministic automaton with a boolean algebra $Q$ of states, boolean transitions morphisms $\gamma_a$, and an output map $f: Q\ra \{0,1\}$ which specifies (via the preimage of $1$) an ultrafilter $F\seq Q$ of final states.
\end{example}

The rational coalgebra $\rho \ol{T}$ of $\ol{T}$ (i.e., the colimit of all finite $\ol T$-coalgebras) has as states the regular languages over $X_0$. The final state predicate $f: \rho\ol T\ra \ol Y=\{0,1\}$ sends a language to $1$ iff it contains the empty word $\epsilon$, and the transitions $\gamma_a: \rho \ol T \ra \rho\ol T$ for $a\in X_0$ are given by $\gamma_a(L)=a^{-1} L$.
Here $a^{-1}L = \{w\in X_0^*: aw\in L\}$ denotes the \emph{left derivative} of $L$ w.r.t. the letter $a$. Similarly, the \emph{right derivatives} of $L$ are defined by $La^{-1} = \{w\in X_0^*: wa\in L\}$ for $a\in X_0$. 

\begin{example}
In $\Cat=\BA$ the rational $\ol{T}$-coalgebra is the boolean algebra of all regular languages over $X_0$ (w.r.t. union, intersection and complement), equipped with the above transitions and final states. Note that the transition map $a^{-1}(\mathord{-})$ is indeed a boolean homomorphism because left derivatives preserve all boolean operations. Moreover, the final states -- viz. the set of all regular languages containing the empty word -- form a (principal) ultrafilter.	
\end{example}

The coalgebra $\rho\ol T$ is characterized by a universal property: every finite $\ol T$-coalgebra has a unique coalgebra homomorphism into it (which sends a state to the language it accepts in the classical sense of automata theory). A finite $\ol T$-coalgebra is called a \emph{subcoalgebra} of $\rho \ol T$ if this unique morphism is injective, i.e., distinct states accept distinct languages. In \cite{ammu14} we related finite $\ol T$-coalgebras in $\C$ to finite $F$-algebras in the predual category $\D$. Note that $X=\Psi X_0$ implies $FA=I+ X\t A\cong I+\coprod_{X_0} A$, so $F$-algebras $F$-algebras $(A,\delta)$  can be represented as triples $(A,\delta_a,i)$ with $\delta_a: A\ra A$ ($a\in X_0$) and $i: I\ra A$. They correspond to automata in $\DCat$ with inputs from the alphabet $X_0$ and without final states.

\begin{proposition}[see \cite{ammu14}]
\begin{enumerate}[(a)]\item  The categories of finite $\ol{T}$-coalgebras and finite $F$-algebras are dually equivalent. The equivalence maps any finite $\ol{T}$-coalgebra $Q=(Q,\gamma_a,f)$ to its \emph{dual $F$-algebra} $\widehat Q = (\widehat Q, \widehat{\gamma_a},\widehat f)$:
\[ (\ol{Y} \xleftarrow{f} Q\xra{\gamma_a} Q) \quad\mapsto\quad (I\xra{\widehat{f}}\widehat{Q}\xleftarrow{\widehat{\gamma_a}} \widehat{Q}). \]
\item A finite $\ol{T}$-coalgebra $Q$ is a subcoalgebra of $\rho \ol{T}$ iff its dual $F$-algebra $\widehat{Q}$ is a quotient of the initial $F$-algebra $\xs$.
\end{enumerate}
\end{proposition}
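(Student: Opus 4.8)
The plan is to derive both parts from the given dual equivalence $\widehat{(-)}\colon \C_f^{op}\xra{\simeq}\D_f$, first by lifting it to (co)algebras, and then by translating the two ``external'' conditions --- being a subcoalgebra of $\rho\ol T$, resp.\ a quotient of $\xs$ --- into intrinsic finiteness properties of the finite (co)algebra that the finite duality can see directly.

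For part (a) I would first observe that, being one half of a dual equivalence, $\widehat{(-)}$ carries finite products of $\C_f$ to finite coproducts of $\D_f$. Since $X_0$ is finite this yields, using $\widehat{\ol Y}=I$ and the isomorphism $FA\cong I+\coprod_{X_0}A$ already noted, a natural isomorphism
\[
\widehat{\ol T Q}=\widehat{\ol Y\times Q^{X_0}}\cong\widehat{\ol Y}+\coprod_{X_0}\widehat Q\cong I+X\t\widehat Q=F\widehat Q.
\]
Reading a coalgebra structure $c\colon Q\to\ol T Q$ as a morphism of $\C_f^{op}$ and applying $\widehat{(-)}$ then produces an $F$-algebra structure $F\widehat Q\cong\widehat{\ol T Q}\xra{\widehat c}\widehat Q$; unravelling the data shows that $(Q,\gamma_a,f)$ is sent to exactly $(\widehat Q,\widehat{\gamma_a},\widehat f)$, with $\widehat f\colon I\to\widehat Q$ the initial state. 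The equivalence of categories is then an instance of the standard fact that an equivalence intertwining two endofunctors via such a natural isomorphism lifts to an equivalence of the associated categories of (co)algebras; functoriality and the inverse are inherited from $\widehat{(-)}$.

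For part (b) I would avoid dualizing the infinite objects $\rho\ol T$ and $\xs$ and instead reformulate both sides internally. On the coalgebra side the unique morphism $u_Q\colon Q\to\rho\ol T$ satisfies $u_R\o q=u_Q$ for every coalgebra quotient $q\colon Q\epito R$ (by uniqueness), so a proper quotient forces $u_Q$ to be non-injective, while conversely the image factorization of a non-injective $u_Q$ exhibits a proper quotient; hence $Q$ is a subcoalgebra of $\rho\ol T$ iff $Q$ has no proper quotient coalgebra. Symmetrically, the reachable part $A_r\monoto\widehat Q$ (the image of the unique $e\colon\xs\to\widehat Q$) is the smallest sub-$F$-algebra, so $\widehat Q$ is a quotient of $\xs$, i.e.\ $e$ is surjective, iff $\widehat Q$ has no proper sub-$F$-algebra. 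Finally, since (co)algebra morphisms are created in the base categories and $\widehat{(-)}$ sends injective homomorphisms to surjective ones and vice versa, part (a) restricts to an inclusion-reversing bijection between quotient coalgebras of $Q$ and sub-$F$-algebras of $\widehat Q$, under which proper quotients match proper subalgebras. Chaining these equivalences gives: $Q$ is a subcoalgebra of $\rho\ol T$ $\iff$ $\widehat Q$ is a quotient of $\xs$.

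The hard part will be part (b), precisely because $\rho\ol T$ and $\xs$ lie outside $\C_f$ and $\D_f$, so the finite duality is not directly applicable to them. The crux is the pair of reformulations ``injective behaviour map $=$ no proper quotient coalgebra'' and ``surjective reachability map $=$ no proper sub-$F$-algebra'', which replace a property of maps into/out of the infinite objects by an internal property of the finite (co)algebra. Verifying these --- in particular that any proper coalgebra quotient collapses two behaviourally equivalent states, and dually that any proper sub-$F$-algebra already contains the reachable part --- is where the genuine work lies; the product/coproduct bookkeeping in part (a) is routine by comparison.
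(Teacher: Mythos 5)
The paper itself contains no proof of this proposition---it is imported wholesale from \cite{ammu14}---so your argument can only be measured against that reference and the paper's standing assumptions. Your part (a) is correct and is the expected argument: one checks that $\widehat{\ol Y}\cong I$, that finite products of finite algebras in $\C$ and finite coproducts of finite algebras in $\D$ stay finite (local finiteness), hence agree with the (co)products of the ambient categories, which gives the natural isomorphism $\widehat{\ol T Q}\cong F\widehat{Q}$, and then one lifts the dual equivalence to (co)algebras in the standard way. Your reformulations in part (b) are also correct: the behaviour map $u_Q\colon Q\to\rho\ol T$ is injective iff $Q$ has no proper quotient coalgebra, and $e\colon\xs\to\widehat Q$ is surjective iff $\widehat Q$ has no proper sub-$F$-algebra, granted the routine facts that images exist in both (co)algebra categories ($\ol T$ preserves monos, $F$ preserves surjections).

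The genuine gap is the step you dismiss as bookkeeping: ``$\widehat{(-)}$ sends injective homomorphisms to surjective ones and vice versa.'' Only one half of this is formal. A surjection between finite algebras is the coequalizer of its (finite) kernel pair, i.e.\ a regular epimorphism of $\C_f$, so its dual is a regular mono, hence injective; this yields ``proper quotient coalgebra of $Q$ $\Rightarrow$ proper sub-$F$-algebra of $\widehat Q$.'' The converse direction---the one you actually need, from a proper sub-$F$-algebra of $\widehat Q$ back to a proper quotient coalgebra of $Q$---is \emph{not} a consequence of dual equivalence: an injective morphism of $\D_f$ is merely a mono, its dual merely an epi of $\C_f$, and epimorphisms between finite algebras need not be surjective (in bounded distributive lattices, a locally finite variety, the inclusion of a three-element chain into the four-element diamond is a non-surjective epimorphism between finite algebras, since relative complements are unique). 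This is precisely why the appendix's Assumptions demand that epimorphisms of $\D$ be surjective, and that assumption must enter the proof; note moreover that it is imposed only on $\D$, not on $\C$, so you cannot even argue symmetrically on the $\C$-side. A repair that uses only the stated assumption: given a proper sub-$F$-algebra $m\colon A\monoto\widehat Q$ with dual coalgebra morphism $q\colon Q\to R$, factor $q$ as $Q\epito I\monoto R$ in coalgebras. If $Q\epito I$ were an isomorphism, $q$ would be a mono of $\C_f$, so $m\cong\widehat q$ would be an epi of $\D_f$; epis of $\D_f$ are epis of $\D$ (corestrict any test pair to the finite subalgebra generated by the union of the two images---this is where local finiteness enters), hence surjective by assumption, making the injective $m$ an isomorphism, a contradiction. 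So $Q\epito I$ is the required proper quotient, and with this lemma supplied your chain of equivalences closes.
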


\begin{example}
For a finite $\ol T$-coalgebra $(Q,\gamma_a,f)$ in $\BA$ the dual $F$-algebra $\widehat Q$ has as states the atoms of $Q$, and the initial state is the unique atomic final state of $Q$. Moreover, there is a transition $z\xra{a} z'$ for $a\in X_0$ in $\widehat Q$ iff $z'$ is the unique atom with $\gamma_a(z')\geq z$ in $Q$.	
\end{example}

 By a \emph{local variety of languages over $X_0$ in $\Cat$} we mean a subcoalgebra $V$ of $\rho \ol T$ closed under right derivatives (i.e. $L\in\under{V}$ implies $La^{-1}\in\under{V}$ for all $a\in X_0$). Note that a local variety is also closed under the $\Cat$-algebraic operations of $\rho \ol T$, being a sub\emph{algebra} of $\rho\ol T$ in $\Cat$, and under left derivatives, being a sub\emph{coalgebra} of $\rho \ol T$. 
 
 \begin{example}
 	A local variety of languages in $\BA$ is a set of regular languages over $X_0$ closed under the boolean operations (union, intersection and complement) as well as left and right derivatives. This concept was introduced by Gehrke, Grigorieff and Pin \cite{ggp08}.
\end{example}
In the following proposition recall that every $X_0$-generated $\DCat$-monoid defines an $F$-algebra, see Definition \ref{def:asso}.

\begin{proposition}[see \cite{ammu14}]\label{prop:locvarmon}
  A finite subcoalgebra $V$ of $\rho \ol{T}$ is a local variety iff its dual $F$-algebra $\widehat{V}$ is derived from an $X_0$-generated $\D$-monoid.
\end{proposition}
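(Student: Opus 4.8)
The plan is to exploit the preceding proposition, which already identifies finite subcoalgebras $V$ of $\rho\ol T$ with quotients $q\colon \xs \epito \widehat V$ of the initial $F$-algebra. Thus in either direction I may assume such a $q$ is given, and the whole task reduces to relating closure of $V$ under right derivatives to the property that $\widehat V$ is derived from an $X_0$-generated $\D$-monoid. First I would reformulate the latter in terms of the congruence $\ker q$, computed in $\D$ on the object $\xs=\Psi X_0^*$. By the proof of Lemma~\ref{lem:recog}, for any $X_0$-generated $\D$-monoid $e\colon\xs\epito M$ the morphism $e$ is itself the unique $F$-algebra homomorphism onto the associated (derived) $F$-algebra; hence $\widehat V$ is derived from an $X_0$-generated $\D$-monoid if and only if the given $F$-algebra quotient $q$ underlies a quotient in $\DMon$, i.e.\ if and only if $\ker q$ is a two-sided $\D$-monoid congruence on the free $\D$-monoid $\xs$.

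Next I would analyse $\ker q$ through the $F$-algebra structure. The transitions of the initial $F$-algebra are the right translations $- \bullet a$ ($a\in X_0$) of the word automaton, so $q$ being an $F$-algebra homomorphism forces $\ker q$ to be compatible with each $- \bullet a$, and hence, since every element of $\xs$ is built from letters and $\ker q$ is a $\D$-congruence, with right multiplication by all of $\xs$. Thus $\ker q$ is always a right $\D$-monoid congruence, and it is two-sided precisely when it is in addition closed under the left translations $\lambda_a=(a\bullet -)\colon\xs\to\xs$ for $a\in X_0$; equivalently, precisely when each $\lambda_a$ descends along $q$ to an endomorphism of $\widehat V$. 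Combining this with the first paragraph, $\widehat V$ is derived from an $X_0$-generated $\D$-monoid iff every left translation $\lambda_a$ respects $\ker q$.

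It remains to dualize this condition, which is the crux. A left translation $\lambda_a$ respects $\ker q$ exactly when its dual $\C$-endomorphism of $\rho\ol T$ restricts to the subcoalgebra $V$, so I would identify that dual endomorphism as the right derivative $(-)a^{-1}$. The duality between the ind-objects $\xs$ and $\rho\ol T$ is induced by the reversed-membership pairing $\langle w,L\rangle=[\,w^\rev\in L\,]$; under it the word-automaton transition $- \bullet a$ is adjoint to the left derivative $a^{-1}(-)$, in agreement with the preceding proposition (where $\widehat{\gamma_a}=\delta_a$), since $\langle wa,L\rangle=[\,aw^\rev\in L\,]=\langle w,a^{-1}L\rangle$. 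The analogous computation $\langle aw,L\rangle=[\,w^\rev a\in L\,]=\langle w,La^{-1}\rangle$ shows that $\lambda_a$ is adjoint to $(-)a^{-1}$. Hence every $\lambda_a$ respects $\ker q$ iff $V$ is closed under all right derivatives, and chaining the equivalences yields that $V$ is a local variety iff $\widehat V$ is derived from an $X_0$-generated $\D$-monoid. The one delicate step is pinning down the pairing that realizes the duality on the infinite objects $\xs$ and $\rho\ol T$; granted this, the two adjunction identities are immediate.
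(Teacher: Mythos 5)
First, a point of reference: the paper contains no proof of this proposition at all --- it is quoted from \cite{ammu14}, and the surrounding text merely unpacks what it says. So your argument can only be judged on its own merits and against the framework the paper sets up, not against an internal proof.

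Your route is the correct and natural one. The reduction via the preceding proposition to $F$-algebra quotients $q\colon \xs\epito\widehat{V}$, the reformulation (via the proof of Lemma~\ref{lem:recog}) that $\widehat{V}$ is derived from an $X_0$-generated $\D$-monoid iff $\ker q$ is a two-sided $\D$-monoid congruence on $\xs$, the observation that compatibility with right translations is automatic because the transitions of the initial $F$-algebra are right multiplications by generators, and the identification of right derivatives as the duals of the left translations $\lambda_a$ --- including the reversal twist, which is exactly where a naive argument would pair left translations with left derivatives and fail --- are all sound. Two small points deserve to be made explicit: compatibility of $\ker q$ with $-\bullet s$ and $s\bullet-$ for \emph{arbitrary} $s\in\xs$ (not just words) follows because, for fixed $(u,v)\in\ker q$, the set of such $s$ is the preimage of the $\D$-subalgebra $\ker q$ of $\xs\times\xs$ under a $\D$-morphism and contains $X_0^*$, which generates $\xs=\Psi X_0^*$; and the carrier of $\xs/\ker q$ in $\DMon$ is the $\D$-quotient $\widehat{V}$ by Lemma~\ref{lem:monadic}.

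The step you flag as delicate is the one genuine gap, and your phrasing ``the duality between the ind-objects $\xs$ and $\rho\ol{T}$'' is not literally meaningful: the dual equivalence exists only between \emph{finite} objects, and $\xs$ is not the limit of its finite quotients. However, the gap is closable without extending the duality, and the statement you need is precisely the proposition the paper quotes immediately afterwards (from \cite{ammu15}): the language of a finite pointed $\ol{T}$-coalgebra is the \emph{reversal} of the language of its dual $\D$-automaton. Concretely, for $V=(V,\gamma_a,f)$ with dual $F$-algebra $\widehat{V}=(\widehat{V},\widehat{\gamma_a},\widehat{f})$, contravariance of the finite duality gives, for a word $w=a_1\cdots a_n\in X_0^*$,
\[
q(w)\;=\;\widehat{\gamma_{a_n}}\circ\cdots\circ\widehat{\gamma_{a_1}}\circ\widehat{f}
\;=\;\widehat{f\circ\gamma_{a_1}\circ\cdots\circ\gamma_{a_n}},
\qquad\text{hence}\qquad
\widehat{v}\circ q(w)\;=\;\widehat{f\circ\gamma_{a_1}\circ\cdots\circ\gamma_{a_n}\circ v}\;=\;L_v(w^\rev)
\]
for every state $v\colon\ol{I}\to V$ with dual $\widehat{v}\colon\widehat{V}\to Y$, where $L_v$ is the language accepted by $v$. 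This is exactly your pairing. With it, your adjunction computations become identities between $\D$-morphisms $\xs\to Y$ of the form $\widehat{v}\circ q\circ\lambda_a$ and $\widehat{v'}\circ q$: it suffices to verify them on words, since $\xs$ is free on $X_0^*$, and the family $\{\widehat{v}\}_{v\in V}$ is jointly injective, being dual to the jointly epic family of elements of $V$. Then closure of $V$ under right derivatives supplies, for each $a$ and $v$, a state $v'$ accepting $L_v a^{-1}$, so that each $\widehat{v}\circ q\circ\lambda_a$ factors through $q$ and $\lambda_a$ descends; conversely a descended $\bar{\lambda}_a$ dualizes to an endomorphism $\varrho_a$ of $V$ with $L_{\varrho_a(v)}=L_v a^{-1}$. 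So your proof is complete modulo this lemma, which you correctly isolated but did not prove.
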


In other words, given a finite local variety $V\monoto \rho \ol T$ in $\Cat$,  there exists a unique $\D$-monoid structure on $\widehat V$ making the unique (surjective) $F$-algebra homomorphism $e_{\widehat{V}}: \xs\ra \widehat{V}$ is $\D$-monoid morphism. Hence the monoid multiplication on $\widehat V$ is (well-)defined by $e_{\widehat{V}}(x)\bullet e_{\widehat{V}}(y) := e_{\widehat{V}}(x\bullet y)$ for all $x,y\in\xs$, and the unit it the initial state of the $F$-algebra $\widehat V$.

\begin{rem}
A \emph{pointed} $\ol T$-coalgebra is a $\ol T$-coalgebra $(Q,\gamma_a,f)$ equipped with an initial state $i: \ol I\ra Q$. Observe that every finite pointed $\ol T$-coalgebra $(Q,\gamma_a,f,i)$ dualizes to a finite $\DCat$-automaton $(\widehat Q,\widehat \gamma_a, \widehat f, \widehat i)$. The \emph{language} of $(Q,\delta_a, f,i)$ is the function
\[L_Q: X_0^*\ra \under{\ol{Y}},\quad a_1\dots a_n \mapsto f\o\delta_{a_n}\o\dots\o \delta_{a_1}\o i.\]
Letting $m_Q: Q\ra \rho \ol T$ denote the unique coalgebra homomorphism, $L_Q$ is precisely the element of $\rho \ol T$ determined by $I\xra{i} Q \xra{m_Q} \rho \ol T$.
Since $\under{\ol{Y}}=\under{Y}$ and $\xs=\Psi X_0^*$,  the function $L_Q: X_0^*\ra \under{\ol Y}$ can be identified with its adjoint transpose $L_Q^@: \xs \ra Y$, i.e., with a language in $\DCat$. The \emph{reversal} of a language $L: \xs\ra Y$ in $\D$ is $L^\rev = L\o \rev: \xs\ra Y$, where $\rev: \xs \ra \xs$ denotes the unique morphism of $\D$ extending the function $X_0^*\ra X_0^*$ that reverses words.
\end{rem}

\begin{proposition}[see \cite{ammu15}]
The language accepted by a finite pointed $\ol{T}$-coalgebra is the reversal of the language accepted by its dual $\D$-automaton .
\end{proposition}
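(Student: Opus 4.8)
The plan is to reduce both accepted languages to explicit word-indexed formulas and observe that the contravariance of the duality functor $\widehat{(-)}$ is exactly what produces the word reversal. Fix a finite pointed $\ol{T}$-coalgebra $(Q,\gamma_a,f,i)$ with $i:\ol{I}\to Q$ and $f:Q\to\ol{Y}$, and recall that its dual $\D$-automaton is $\widehat{Q}=(\widehat{Q},\widehat{\gamma_a},\widehat{f},\widehat{i})$. First I would pin down which dual morphism plays which role. Since $\widehat{\ol{Y}}=I$ and $\widehat{\ol{I}}=Y$, the output $f:Q\to\ol{Y}$ dualizes to the \emph{initial state} $\widehat{f}:I\to\widehat{Q}$, whereas the initial state $i:\ol{I}\to Q$ dualizes to the \emph{output} $\widehat{i}:\widehat{Q}\to Y$. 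Thus the two pointings swap roles under duality, which is the conceptual source of the reversal.

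Next I would write down the $\D$-language $M:\xs\to Y$ accepted by the dual automaton pointwise. Using the description of the initial $F$-algebra homomorphism from the proof of Theorem~\ref{thm:tran}, namely $e_{\widehat{Q}}(w)=\delta_w\o(\text{initial state})$ with $\delta_{x\bullet y}=\delta_y\o\delta_x$, and the definition of accepted language $M=\widehat{i}\o e_{\widehat{Q}}$, I obtain for a word $w=a_1\cdots a_n$
\[
M(a_1\cdots a_n)=\widehat{i}\o\widehat{\gamma_{a_n}}\o\cdots\o\widehat{\gamma_{a_1}}\o\widehat{f},
\]
an element of $\D(I,Y)\cong\{0,1\}$. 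On the coalgebra side the preceding remark gives
\[
L_Q(a_1\cdots a_n)=f\o\gamma_{a_n}\o\cdots\o\gamma_{a_1}\o i,
\]
a morphism $\ol{I}\to\ol{Y}$, hence an element of $\C(\ol{I},\ol{Y})\cong\{0,1\}$.

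The key step is to apply $\widehat{(-)}$ to this last composite. Being contravariant, it reverses composition order, so
\[
\widehat{f\o\gamma_{a_n}\o\cdots\o\gamma_{a_1}\o i}=\widehat{i}\o\widehat{\gamma_{a_1}}\o\cdots\o\widehat{\gamma_{a_n}}\o\widehat{f}.
\]
Comparing with the formula for $M$, the transition indices now occur in the opposite order, so the right-hand side is exactly $M(a_n\cdots a_1)=M((a_1\cdots a_n)^{\rev})$. It then remains to check that this is an equality of the \emph{same} scalar: the functor $\widehat{(-)}$ restricts to the bijection $\C(\ol{I},\ol{Y})\xrightarrow{\cong}\D(I,Y)$, and this is precisely the isomorphism through which $\under{\ol{Y}}\cong\under{Y}$ was defined, so it carries the value $L_Q(w)$ to the same element of $\{0,1\}$. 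Hence $L_Q(w)=M(w^{\rev})$ for all $w\in X_0^*$; since $L_Q^@$ and $M^{\rev}=M\o\rev$ are $\D$-morphisms out of the free object $\xs=\Psi X_0^*$ agreeing on the generators, transposing along $\Psi\dashv\under{\mathord{-}}$ yields $L_Q^@=M^{\rev}$, as claimed.

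The main obstacle I anticipate is bookkeeping rather than depth: one must (i) correctly identify the swap of initial state and output under the contravariant equivalence, (ii) match the composition-order convention of the run map $e_{\widehat{Q}}$ with the convention used for $L_Q$, and (iii) verify that the output-value identification $\under{\ol{Y}}\cong\under{Y}$ is the one induced by $\widehat{(-)}$ on hom-sets, so that dualizing a $\{0,1\}$-valued scalar does not silently permute $0$ and $1$. Each of these is where an accidental non-reversal or a value swap could creep in; once they are aligned, the reversal falls out automatically from contravariance.
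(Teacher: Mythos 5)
Your proof is correct. Note that the paper gives no proof of this proposition at all --- it is imported from \cite{ammu15} (hence the ``see \cite{ammu15}'') --- so there is no internal argument to compare against; your direct computation is the natural one and is exactly what the paper's surrounding setup prepares for. All three of the ``bookkeeping'' points you flag are resolved by material already in the appendix: the remark preceding the proposition fixes the run convention $L_Q(a_1\dots a_n)=f\o\gamma_{a_n}\o\dots\o\gamma_{a_1}\o i$ and states that the pointing $i\colon\ol I\to Q$ and output $f\colon Q\to\ol Y$ dualize to the output $\widehat i\colon\widehat Q\to Y$ and pointing $\widehat f\colon I\to\widehat Q$ respectively; the proof of Theorem~\ref{thm:tran} gives $e_{\widehat Q}(a_1\cdots a_n)=\widehat{\gamma_{a_n}}\o\dots\o\widehat{\gamma_{a_1}}\o\widehat f$, so your formula for $M$ matches the paper's conventions; and the identification $\under{\ol Y}\cong\C(\ol I,\ol Y)\cong\D(I,Y)\cong\under{Y}$ is defined in the paper precisely via the hom-set bijection of the equivalence, so no silent permutation of values can occur. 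Contravariance then produces the index reversal, and your final step --- passing from the word-wise identity $L_Q(w)=M(w^{\rev})$ to the equality of $\D$-morphisms $L_Q^@=M\o\rev$ by freeness of $\xs=\Psi X_0^*$ on the set $X_0^*$ --- is exactly how the paper defines and manipulates reversal, so the argument is complete.
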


If a finite $X_0$-generated $\DCat$-monoid $e: \xs \ra M$ recognizes a language $L: \xs\ra Y$ via $f: M\ra Y$, i.e., $L=(\xymatrix{\xs \ar@{->>}[r]^e& M \ar[r]^f & Y})$, we dually get the morphism $\xymatrix{\ol I \ar[r]^i & V \ar@{>->}[r]^m & \rho \ol T}$ (where $V$ is the local variety dual to $M$, $i$ is the dual morphism of $f$ and $m$ is the unique coalgebra homomorphism) choosing the element $L^\rev$ of $\rho \ol T$. Now suppose that $L$ is a regular language, and let $V_L$ be the finite local variety of languages dual to the syntactic $\D$ monoid $\Syn{L}$, see Proposition \ref{prop:locvarmon}. The universal property of $\Syn{L}$ in Definition \ref{def:syn} then dualizes as follows: $V_L$ is (i) a local variety containing $L^\rev$, and (ii) for every local variety  $V\monoto\rho \ol{T}$ containing $L^\rev$, the local variety $V_L$ is contained in $V$. In other words, $V_L$ is the \emph{smallest} local variety containing $L^\rev$.
\[
\xymatrix{
  \rho \ol{T}
  &
 V \ar@{ >->}[l]
  &
  \ol{I} \ar[l] \ar[dl]
  \\
  &
  V_L \ar@{ >->}[ul] \ar@{>-->}[u]
}
\]
In summary, we have proved following dual characterization of syntactic $\D$-monoids:
\begin{theorem}\label{thm:syndualapp}
For every regular language $L$ the syntactic $\D$-monoid $\Syn L$ is dual to the smallest local variety of languages over $X_0$ in $\C$ containing $L^\rev$.
\end{theorem}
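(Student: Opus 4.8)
The plan is to transport the defining universal property of $\Syn L$ through the predual equivalence $\C_f^{op}\simeq\D_f$, turning ``smallest $X_0$-generated $\D$-monoid recognizing $L$'' into ``smallest local variety containing $L^\rev$''. First I would reduce to the finite setting: since $L$ is regular and $\D$ is locally finite, Theorem~\ref{thm:reg} gives that $\Syn L$ has finitely presentable, hence finite, carrier, so the duality applies. Regard $\Syn L$ as the $\D$-automaton $(\Syn L,\delta_a,i_{\Syn L},f_L)$, where $\delta_a$ and the unit $i_{\Syn L}$ come from the derived $F$-algebra (Definition~\ref{def:asso}) and $f_L$ is the recognizing output. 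By Proposition~\ref{prop:locvarmon} its dual $F$-algebra is derived from an $X_0$-generated $\D$-monoid, so the dual $\ol T$-coalgebra $V_L=\widehat{\Syn L}$ is a finite local variety, a subcoalgebra $V_L\monoto\rho\ol T$. It then remains to prove (i)~$L^\rev\in V_L$ and (ii)~$V_L\subseteq V$ for every local variety $V$ with $L^\rev\in V$.

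For (i) I would dualize the entire automaton. Under the object correspondence ($I$ dual to $\ol Y$, and $Y$ dual to $\ol I$) the initial state $i_{\Syn L}$ dualizes to the final-state predicate $\widehat{i_{\Syn L}}\colon V_L\to\ol Y$ of $V_L$, while the output $f_L$ dualizes to a pointing $\widehat{f_L}\colon\ol I\to V_L$. Thus $V_L$ becomes a finite pointed $\ol T$-coalgebra whose dual $\D$-automaton is, by double duality, exactly $(\Syn L,\delta_a,i_{\Syn L},f_L)$, which accepts $L$ (Lemma~\ref{lem:recog}). By the reversal proposition the language accepted by this pointed coalgebra is $L^\rev$; and that language is precisely the element of $\rho\ol T$ selected by $\ol I\xra{\widehat{f_L}}V_L\monoto\rho\ol T$, the last arrow being the inclusion. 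Hence $L^\rev$ lies in the subcoalgebra $V_L$.

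For (ii) let $V\monoto\rho\ol T$ be any finite local variety with $L^\rev\in V$. By Proposition~\ref{prop:locvarmon} its dual $F$-algebra $M_V=\widehat V$ is derived from an $X_0$-generated $\D$-monoid. The element $L^\rev\in V$ amounts to a pointing $\ol I\to V$, which dualizes to an output morphism $g\colon M_V\to Y$. Running the reversal proposition the other way — the pointed coalgebra $V$ accepts $L^\rev$, so its dual automaton accepts $(L^\rev)^\rev=L$ — together with Lemma~\ref{lem:recog} shows that $M_V$ recognizes $L$ via $g$. The universal property of $\Syn L$ (Definition~\ref{def:syn}) then yields a surjective $\D$-monoid morphism $h\colon M_V\epito\Syn L$, which is simultaneously an $F$-algebra homomorphism of the derived algebras and hence dualizes to an injective $\ol T$-coalgebra homomorphism $\widehat h\colon V_L\monoto V$. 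By uniqueness of coalgebra homomorphisms into $\rho\ol T$ this $\widehat h$ respects the inclusions, so $V_L\subseteq V$ as subcoalgebras. Combining (i) and (ii) shows that $V_L$ is the smallest local variety containing $L^\rev$.

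I expect the hard part to be bookkeeping rather than any deep difficulty: one must carefully track that the object swap $I\leftrightarrow\ol Y$, $Y\leftrightarrow\ol I$ sends initial states to final-state predicates and outputs to pointings, that the quotient/subobject orders are reversed (so ``$\Syn L$ is a quotient of every recognizer'' becomes ``$V_L$ embeds into every containing local variety''), and, above all, that ``recognizes $L$'' corresponds exactly to ``contains $L^\rev$''. The reversal, and its cancellation $(L^\rev)^\rev=L$ coming from $\rev$ being an involution, is the crux that makes both directions of the minimality argument match.
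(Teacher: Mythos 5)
Your proof is correct and follows essentially the same route as the paper: dualizing the universal property of $\Syn L$ through the finite-level equivalence, using Proposition~\ref{prop:locvarmon}, Lemma~\ref{lem:recog}, and the reversal proposition to match ``recognizes $L$'' with ``contains $L^\rev$''. The paper states this dualization more tersely; your steps (i) and (ii) just make explicit the details it leaves implicit.
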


\begin{example} For $\Cat=\BA$ and $\D=\Set$ the previous theorem gives the following construction of the syntactic monoid of a regular language $L\seq X^*$:
\begin{enumerate}
	\item Form the smallest local variety of languages $V_L\seq \rho\ol T$ containing $L^\rev$. Hence $V_L$ is the closure of the (finite) set of all both-sided derivatives $u^{-1}L^\rev v^{-1} = \{w\in X^*: uwv\in L^\rev\}$ ($u,v\in X^*$)  under union, intersection and complement.
	\item Compute the $F$-algebra $\widehat{V_L}$ dual to the coalgebra $V_L$. The states of $\widehat{V_L}$ are the atoms of $V_L$, and the initial state is the unique atom $i\in V_L$ containing the empty word. Given atoms $z,z'\in V_L$ and $a\in X$, there is a transition $z\xra{a}z'$ in $\widehat{V_L}$ iff $z'$ is the (unique) atom with $z\subseteq a^{-1}z'$.
	\item Define a monoid multiplication on $\widehat{V_L}$ as follows: given states $z,z'\in\widehat{V_L}$, choose words $w,w'\in X^*$ with $i\xra{w} z$ and $i\xra{w'} z'$ in $\widehat{V_L}$. Then $z\bullet z'$ is the state reached on input $ww'$, i.e., $i\xra{ww'} z\bullet z'$. The resulting monoid (with multiplication $\bullet$ and unit $i$) is $\Syn{L}$.
\end{enumerate}
\end{example}

By dropping right derivatives and using the correspondence between finite subcoalgebras of $\rho \ol T$ and finite quotient algebras of $\xs$, one also gets the following dual characterization of minimal $\DCat$-automata:

\begin{theorem}\label{thm:mindual}
For every regular language $L$ the minimal $\D$-automaton for $L$ is dual to the smallest subcoalgebra of $\rho \ol T$ containing $L^\rev$
\end{theorem}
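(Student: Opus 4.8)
The plan is to transport the universal property of the minimal automaton across the duality developed above, in close analogy with the proof of Theorem~\ref{thm:syndualapp} but now dropping the closure under right derivatives. First I would note that since $L$ is regular and $\DCat$ is locally finite, the minimal automaton $\Min{L}$ has a finite carrier by Theorem~\ref{thm:reg}, and recall from Theorem~\ref{thm:minaut} that $\Min{L}$ is the \emph{smallest reachable} $\DCat$-automaton accepting $L$: every reachable automaton accepting $L$ admits a unique surjective automata homomorphism onto $\Min{L}$, so $\Min{L}$ is the bottom element in the quotient order on reachable automata accepting $L$ (and it is itself reachable and accepts $L$).

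Next I would set up the correspondence underlying the theorem. A finite subcoalgebra $V\monoto\rho\ol T$ containing $L^\rev$ carries a canonical initial state $i\colon\ol I\to V$ with $m\o i = L^\rev$, uniquely determined since the inclusion $m$ is monic, making $V$ a finite pointed $\ol T$-coalgebra. By the duality between finite $\ol T$-coalgebras and finite $F$-algebras, $V$ dualizes to a finite $\DCat$-automaton $\widehat V$, and since $V$ is a subcoalgebra of $\rho\ol T$, its dual $F$-algebra is a quotient of $\xs$, i.e.~$\widehat V$ is reachable. By the proposition relating languages under duality, the language accepted by the pointed coalgebra $V$ is the reversal of the language accepted by $\widehat V$; as the former equals $m\o i = L^\rev$ and reversal is an involution, $\widehat V$ accepts $L$. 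Conversely, every finite reachable automaton accepting $L$ is a quotient of $\xs$, hence dualizes to a finite subcoalgebra of $\rho\ol T$, which contains $L^\rev$ by the same reversal argument. Thus $V\mapsto\widehat V$ is a bijection between finite subcoalgebras of $\rho\ol T$ containing $L^\rev$ and finite reachable $\DCat$-automata accepting $L$.

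Finally I would invoke the contravariance of the duality. An inclusion $V'\monoto V$ of subcoalgebras dualizes to a surjection $\widehat V\epito\widehat{V'}$, so the bijection above is order-reversing and therefore sends the \emph{smallest} subcoalgebra containing $L^\rev$ to the \emph{smallest} reachable automaton accepting $L$, namely $\Min{L}$. The smallest such subcoalgebra $S_L$ exists as the subcoalgebra generated by $L^\rev$: regularity of $L$ gives finitely many left derivatives of $L^\rev$, and closing under the finitely many $\Cat$-algebraic operations keeps it finite since $\Cat$ is locally finite. This yields $\Min{L}\cong\widehat{S_L}$, which is exactly the claim.

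The main obstacle will be bookkeeping the variance. One must check carefully that ``smallest reachable automaton'' (the bottom element in the quotient order supplied by Theorem~\ref{thm:minaut}) corresponds under the contravariant equivalence to the \emph{smallest} subcoalgebra rather than the largest, and that the pointing and output data together with the reversal of languages line up so that ``accepts $L$'' on the algebra side matches ``contains $L^\rev$'' on the coalgebra side. Once the correspondence and its variance are pinned down, no further computation is required.
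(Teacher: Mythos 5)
Your proof is correct and takes essentially the same route as the paper, which derives Theorem~\ref{thm:mindual} from the argument for Theorem~\ref{thm:syndualapp} exactly as you do: by dropping closure under right derivatives and combining the correspondence between finite subcoalgebras of $\rho \ol T$ and finite quotient $F$-algebras of $\xs$ with the language-reversal proposition and the universal property of $\Min{L}$. One terminological quibble: with the paper's quotient ordering (Remark~\ref{rem:Lalg}), the correspondence inclusion $\mapsto$ surjection is order-\emph{preserving} (smallest subcoalgebra $\leftrightarrow$ smallest quotient), so your conclusion ``smallest goes to smallest'' is right even though you label the bijection order-reversing.
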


\begin{rem}\label{rem:twoel}
Our above assumption that $Y$ and $\ol Y$ have two elements is inessential. Without this assumption, the rational coalgebra $\rho\ol T$ is not carried by regular languages, but more generally by \emph{regular behaviors}, i.e, functions $b:X_0^*\ra \under{Y}$ realized by finite Moore automata with output set $\under{Y}$. The coalgebra structure is given by the output map $b\mapsto b(\epsilon)$, and transitions $b\xra{a} a^{-1}b$ for $a\in X_0$, where $a^{-1}b$ is the (generalized) left derivative of $b$ defined by $a^{-1}b(w) = b(aw)$. (Right derivatives are defined analogously.) A \emph{local variety of behaviors over $X_0$ in $\Cat$} is a subcoalgebra of $\rho \ol T$ closed under right derivatives. All results of this section hold for this more general setting, see Section 5 of \cite{ammu15} for details. In particular, this allows us to cover the case $\C=\D=\Vect{\K}$ for an arbitrary finite field $\K$. In this case Theorem \ref{thm:syndual} states that the syntactic associative algebra of a rational weighted language $L: X_0^*\ra \K$ dualizes to the smallest set of rational weighted languages that contains $L^\rev$ and is closed under scalar multiplication, addition and left and right derivatives.
\end{rem}
\fi 

\end{document}